\newcommand{\titel}{Computing Tutte Paths}
\definecolor{hellblau}{rgb}{0.2,0.4,1} 
\definecolor{dunkelblau}{rgb}{0,0,0.8}
\definecolor{dunkelgruen}{rgb}{0,0.5,0}
\theoremstyle{plain} 
	\newtheorem{satz}{Satz}[] 
	\newtheorem{theorem}[satz]{Theorem}
	\newtheorem{lemma}[satz]{Lemma}
\theoremstyle{remark} 
\theoremstyle{definition} 
	\newtheorem{definition}[satz]{Definition}
	\newtheorem{corollary}[satz]{Corollary}
\newcommand{\jset}{{\mathcal{J}}}
\newcommand{\lset}{{\mathcal{L}}}
\begin{document}
	\title{\titel}
	\author{Andreas Schmid\\
	Max Planck Institute for Informatics\\
	aschmid@mpi-inf.mpg.de
	\and
	Jens M.\ Schmidt\\
	Technische Universit\"at Ilmenau\\
	jens.schmidt@tu-ilmenau.de}
	\date{}

\maketitle

\begin{abstract}
Tutte paths are one of the most successful tools for attacking Hamiltonicity problems in planar graphs. Unfortunately, results based on them are non-constructive, as their proofs inherently use an induction on overlapping subgraphs and these overlaps hinder to bound the running time to a polynomial.

For special cases however, computational results of Tutte paths are known: For 4-connected planar graphs, Tutte paths are in fact Hamiltonian paths and Chiba and Nishizeki~\cite{Chiba1989} showed how to compute such paths in linear time. For 3-connected planar graphs, Tutte paths have a more complicated structure, and it has only recently been shown that they can be computed in polynomial time~\cite{Schmid2015}.
However, Tutte paths are defined for general 2-connected planar graphs and this is what most applications need. Unfortunately, no computational results are known.

We give the first efficient algorithm that computes a Tutte path (for the general case of $2$-connected planar graphs).
One of the strongest existence results about such Tutte paths is due to Sanders~\cite{Sanders1997}, which allows to prescribe the end vertices and an intermediate edge of the desired path. Encompassing and strengthening all previous computational results on Tutte paths, we show how to compute this special Tutte path efficiently. Our method refines both, the results of Thomassen~\cite{Thomassen1983} and Sanders~\cite{Sanders1997}, and avoids overlapping subgraphs by using a novel iterative decomposition along 2-separators. Finally, we show that our algorithm runs in quadratic time.
\end{abstract}

\section{Introduction}
Among the most fundamental graph problems is the question whether a graph $G=(V,E)$ is \emph{Hamiltonian}, i.e.\ contains a cycle of length $n := |V|$. For planar and near-planar graphs, \emph{Tutte paths} have proven to be one of the most successful tools for attacking Hamiltonicity. For this reason, there is a wealth of existential results in which Tutte paths serve as main ingredient; in chronological order, these are~\cite{Tutte1977,Thomassen1983,Thomas1994,Chiba1986,Sanders1996,Sanders1997,
Thomas1997,Yu1997,Jackson2002,Thomas2005,Gao2006,Harant2009,Leighton2010,
Ozeki2014,Ozeki2014a,Kawarabayashi2015,Schmid2015,Harant2016,Brinkmann2016}.

As historical starting point to these results, Whitney~\cite{Whitney1931} proved that every $4$-connected maximal planar graph is Hamiltonian. Tutte extended this result to arbitrary $4$-connected planar graphs by showing that every 2-connected planar graph $G$ contains a Tutte path~\cite{Tutte1956,Tutte1977} (for a definition of Tutte paths, we refer to Section~\ref{sec:preliminaries}). Thomassen~\cite{Thomassen1983} proved the following generalization, which also implies that every $4$-connected planar graph is Hamiltonian-connected, i.e.\ contains a path of length $n-1$ between any two vertices. For a plane graph $G$, let $C_G$ be its outer face.

\begin{theorem}[Thomassen~\cite{Thomassen1983}] \label{thm:Thomassen}
Let $G$ be a 2-connected plane graph, $x \in V(C_G)$, $\alpha \in E(C_G)$ and $y \in V(G)-x$. Then $G$ contains a Tutte path from $x$ to $y$ through $\alpha$.
\end{theorem}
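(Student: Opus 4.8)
The plan is to follow Thomassen's original inductive argument, proving the statement by induction on $|V(G)|$. Graphs with $|V(G)| \le 3$ are checked directly, so I may assume $|V(G)| \ge 4$. For the base case in which $G$ is a cycle, necessarily $y \in V(C_G)$, and $x$ together with $y$ splits $C_G$ into two arcs; I take $P$ to be whichever arc contains $\alpha$ (if $\alpha = xy$, then $P = \alpha$). The complementary arc is then the unique $P$-bridge, it has exactly the two attachments $x$ and $y$, and since it contains edges of $C_G$ the required bound of $2$ is met; there are no chords, so $P$ is a Tutte path.

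For the inductive step I assume $G$ is not a cycle, so that $G$ either has a $2$-separator or is $3$-connected, and I first dispose of the former. Suppose $\{u,w\}$ separates $G$; this covers in particular the case that $C_G$ has a chord $uw$. Let $B_1,\dots,B_k$ ($k \ge 2$) be the $\{u,w\}$-bridges, each of which becomes a strictly smaller $2$-connected plane graph $B_i^{+}$ once the virtual edge $uw$ is added on a suitable face, so that $uw \in E(C_{B_i^{+}})$. The prescribed data $x$, $y$, $\alpha$ is spread over the $B_i$; I apply induction to the piece(s) actually met by the path, in the order the path visits them, routing through $u$ and/or $w$ and using $uw$ as the prescribed edge (or as the connector between the pieces) whenever a recursive call needs one, and then splice the resulting sub-Tutte-paths at $u$ and $w$. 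Any $B_i$ not met by the path contributes $P$-bridges whose attachments lie in $\{u,w\}$, hence at most $2$, which is consistent even though such a bridge may contain $C_G$-edges; and a bridge of a recursive path that contains a $C_{B_i^{+}}$-edge but no $C_G$-edge turns out to be (contained in) the edge $uw$ or to sit strictly inside, so the weaker bound $3$ suffices for it. The only real work here is the case distinction over where $x,y,\alpha$ fall and which of $u,w$ the path passes through.

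It remains to treat the case that $G$ is $3$-connected, so that $C_G$ is chordless and $G$ has no $2$-cut; this is the heart of the argument, especially when $y$ is interior and the path genuinely has to descend into $G$. Here I would work locally at $x$ or at $\alpha$: pick a vertex $z$ of $C_G$ adjacent to $x$ and branch on whether the sought path uses $z$. In the branch where $z$ is bypassed I delete $z$; since $G$ is $3$-connected, $G-z$ is $2$-connected, its outer cycle is obtained from $C_G$ by re-routing through the former interior neighbours of $z$, and I apply induction to $G-z$ (strictly smaller), reattaching $z$ together with the region cut off by the re-routing as a bounded number of $P$-bridges. In the complementary branch $z$ is forced onto $P$ and one recurses on a strictly smaller subgraph — for instance $G-x$ with a suitable new prescribed edge, after which $x$ is prepended via the edge $xz$ — so that $z$'s remaining neighbours end up as chord-bridges or inside $C_G$-bridges of the recursive path. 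Throughout, planarity combined with $3$-connectivity is exactly what prevents a newly created bridge from acquiring a fourth attachment, and the sub-case in which $\alpha$ is incident to $x$ needs extra bookkeeping so that a prescribed edge survives into the recursive instance. I expect this $3$-connected step — the correct choice of $z$ and of branch, tracking $\alpha$ through the re-routing, and verifying all attachment bounds simultaneously — to be the main obstacle, the $2$-separation step being routine by comparison. I also note that every recursion above is on \emph{overlapping} subgraphs (the pieces of a $2$-separation share $u$ and $w$; the two branches of the $3$-connected step share almost all of $G$); this is harmless for an existence proof since each recursive instance has strictly fewer vertices, but it is precisely the obstruction that keeps the naive reading of this proof from running in polynomial time, and removing it is the subject of this paper.
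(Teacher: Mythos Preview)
Your $3$-connected step is not Thomassen's argument and, as sketched, does not go through. Suppose $G$ is $3$-connected and you delete a $C_G$-neighbour $z$ of $x$. Every neighbour of $z$ now lies on $C_{G-z}$, so the recursively obtained Tutte path $P'$ of $G-z$ may pass through three or more of them; it certainly contains $x$, and nothing stops it from also containing the other $C_G$-neighbour $z'$ of $z$ together with one of $z$'s former interior neighbours. Back in $G$, the $P'$-bridge containing $z$ is then an \emph{outer} bridge (it contains $xz,zz' \in E(C_G)$) with at least three attachments, so $P'$ is not a Tutte path of $G$. The symmetric branch --- delete $x$, recurse on $G-x$, prepend $xz$ --- has the same defect at $x$. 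Your two branches are not accompanied by any argument that at least one of them must avoid this failure, so the case split is not a proof; the ``main obstacle'' you anticipate is in fact a dead end for this line of attack.

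What Thomassen (and this paper) actually do is structurally different and never isolates a $3$-connected case. After disposing of the specific $2$-separators of Lemmas~\ref{lemma:2separators} and~\ref{lemma:isolated bridge}, one removes not a single vertex but the whole arc $Q := xC_Gl_\alpha$ of the outer cycle, takes $H := G - V(Q)$, and extracts from $H$ the minimal chain of blocks $K = B_1 \cup \dots \cup B_l$ containing $r_\alpha$ and $y$. Induction on each $B_i$ yields a $v_{i-1}$-$\alpha_i$-$v_i$-path $P_i$, and $P := \bigcup_i P_i$ is a Tutte path of $K$. The remaining work is, for every $(P \cup \{\alpha\} \cup Q)$-bridge $J$ meeting both $P$ and $Q$, to replace the subpath $C(J) \subseteq Q$ by a detour $Q_J$ through $J$ (Lemma~\ref{lemma:PQ-bridges}); the block structure of $K$ together with planarity is exactly what caps the number of attachments that any bridge can acquire in this process. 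A single-vertex deletion offers no such control, which is why your scheme cannot be completed as stated.
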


Sanders~\cite{Sanders1997} then generalized Thomassen's result by allowing to choose also the start vertex $x$ of the Tutte path arbitrarily.

\begin{theorem}[Sanders~\cite{Sanders1997}] \label{thm:Sanders}
Let $G$ be a 2-connected plane graph, $x \in V(G)$, $\alpha \in E(C_G)$ and $y \in V(G)-x$. Then $G$ contains a Tutte path from $x$ to $y$ through $\alpha$.
\end{theorem}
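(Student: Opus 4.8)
The plan is to derive Theorem~\ref{thm:Sanders} from Theorem~\ref{thm:Thomassen} by induction on $|V(G)|$ (with $|E(G)|$ as a secondary parameter, since the reductions below add or delete edges). If $x \in V(C_G)$, then Theorem~\ref{thm:Thomassen} already gives the claim, so the entire difficulty is the case that $x$ is an \emph{interior} vertex of $G$; I would treat it by reducing to strictly smaller $2$-connected plane graphs and invoking either the induction hypothesis or, once the relevant start vertex has moved onto the outer cycle, Theorem~\ref{thm:Thomassen}.

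The backbone of the interior case is a decomposition along $2$-separators ``pointing towards'' $x$. Call a $2$-separator $\{p,q\}$ of $G$ \emph{$x$-sheltering} if $G - \{p,q\}$ has a component $H$ that contains $x$ (or is chosen arbitrarily, when $x \in \{p,q\}$) but no vertex of $C_G$. If such a separator exists, I would pick one with $V(H)$ inclusion-minimal and split $G$ into $H^+ := G[V(H)\cup\{p,q\}] + pq$, with $pq$ drawn inside the face of $G$ carrying the rest of $C_G$ so that $pq \in E(C_{H^+})$, and $G^+ := (G - V(H)) + pq$, with $pq$ drawn in the complementary region so that $\alpha \in E(C_{G^+})$; both are $2$-connected plane graphs on fewer vertices. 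Then apply the induction hypothesis to $H^+$ for a Tutte path starting at $x$, running through the edge $pq$, and ending at the appropriate one of $y,p,q$, and Theorem~\ref{thm:Thomassen} (or the induction hypothesis) to $G^+$ for a matching Tutte path through $\alpha$; gluing the two along $\{p,q\}$ yields a Tutte path of $G$ from $x$ to $y$ through $\alpha$, with the minimality of $V(H)$ preventing the $H$-side from contributing an oversized bridge.

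When no $x$-sheltering separator remains, $x$ is ``$3$-connected to the outer cycle'', and I would finish with a local argument: pick a face $f$ incident with $x$, let $xu$ be an edge on $\partial f$ with $u\neq y$, and contract $xu$ --- this is legitimate now, since $\{x,u\}$ is not a separator (otherwise it would itself be $x$-sheltering), so the contracted graph is again $2$-connected and plane and still carries $\alpha$ on its outer cycle. Applying the induction hypothesis to the contracted graph (with the merged vertex prescribed as the start) and then undoing the contraction, choosing which parallel edge to keep and, if necessary, prepending $xu$, restores $x$ to the path and gives a Tutte path of $G$ from $x$ to $y$ through $\alpha$.

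The step I expect to be the main obstacle --- and where essentially all the real work lies --- is verifying that these surgeries preserve the defining property of a Tutte path: no $P$-bridge may acquire a fourth attachment, and a $P$-bridge meeting the current outer cycle must keep at most two. The delicate points are $P$-bridges straddling the separator $\{p,q\}$ in the gluing step, and a bridge of the contracted graph incident with a neighbour of $x$, which gains an attachment at $x$ once $x$ rejoins the path, in the local step. To control both, I would --- following the pattern of Thomassen's and Sanders's proofs --- strengthen the statement carried through the induction so that it records which of the prescribed vertices and edges lie on $C_G$ and guarantees a spare attachment where needed; choosing the separator $\{p,q\}$, the face $f$, and the neighbour $u$ so as to use this slack is the crux. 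Termination and the base cases are routine, since each reduction strictly decreases $|V(G)|$ and the case $x \in V(C_G)$ is handled outright by Theorem~\ref{thm:Thomassen}.
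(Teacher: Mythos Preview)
Your approach diverges substantially from the paper's, and the contraction step contains a genuine gap that I do not see how to close along the lines you sketch.

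The paper does \emph{not} reduce the interior case to a single smaller instance. After handling the decomposable case (Lemma~\ref{lemma:2separators}, which is close in spirit to your $x$-sheltering reduction) and the isolated-bridge case (Lemma~\ref{lemma:isolated bridge}), it takes the minimal chain of blocks $K=B_1\cup\cdots\cup B_l$ in $G-C_G$ containing $x$ and $y$, builds an auxiliary plane chain $H$ by attaching two artificial vertices $a,b$ that encode the relevant $(K\cup C_G)$-bridges, applies the iterative machinery of Lemma~\ref{lemma:construct_P} to obtain an $x$-$\alpha_H$-$y$-path of $H$, and then routes through the encoded bridges (using the Three Edge Lemma for the bridge carrying $\alpha$) and along a modified arc of $C_G$ via Lemma~\ref{lemma:PQ-bridges}. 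The hard case is a structural construction, not a local reduction.

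The gap in your plan is the uncontraction. Suppose you contract $xu$ to $z$ and obtain a Tutte path $P'$ of $G/xu$ from $z$ to $y$ through $\alpha$. Even in the favourable case where you can arrange both $x$ and $u$ to lie on the final path $P$ of $G$ (by prepending $xu$), a single $P'$-bridge of $G/xu$ with attachments $\{z,a,b\}$ may, in $G$, attach at both $x$ and $u$ (because its edges to $z$ split between the two), yielding a $P$-bridge with four attachments $\{x,u,a,b\}$. Worse, an \emph{outer} $P'$-bridge with attachments $\{z,a\}$ can become an outer $P$-bridge with attachments $\{x,u,a\}$, already violating the Tutte condition. The hypothesis ``no $x$-sheltering separator'' says nothing about how many neighbours of $u$ lie off the path or how the $P'$-bridges at $z$ distribute between $x$ and $u$, and the strengthenings in Thomassen's and Sanders's proofs control bridges touching the outer face, not bridges at an interior contracted vertex; I do not see a natural inductive invariant that supplies the needed slack here.

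Your separator step also needs more than you indicate. When $y\notin H$, you must force \emph{both} $p$ and $q$ onto the $G^+$-side path. The paper (in Lemma~\ref{lemma:2separators}) does this by subdividing $pq$ with a new vertex and starting the $G^+$-side path there, and then uses minimality of the $\alpha$-side (your $G^+$), not of the $x$-side (your $H$), to argue that the other endpoint of $pq$ must lie on the path; your minimality is on the wrong side for that argument.
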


Apart from the above series of fundamental results, Tutte paths have been mainly used in two research branches: While the first deals with the existence of Tutte paths on graphs embeddable on higher surfaces~\cite{Thomas1994,Brunet1995,Thomas1997,Yu1997,Thomas2005,Kawarabayashi2015}, the second~\cite{Jackson1990,Gao1994,Brunet1995,Gao1995,Jackson2002,Gao2006,Nakamoto2009} investigates generalizations or specializations of Hamiltonicity such as $k$-\emph{walks}, \emph{long cycles} and \emph{Hamiltonian connectedness}.

Unfortunately, in all the results mentioned so far, very little is known about the complexity of finding a Tutte path. This is crucial, as the task of finding Tutte paths is almost always the only reason that hinders the computational tractability. The main obstruction is that Tutte paths are usually found by a decomposition of the input graph into overlapping subgraphs, on which induction is applied. Without any additional argument, these overlapping subgraphs prevent to bound the running time to be even polynomial~\cite{Gouyou-Beauchamps1982,Schmid2015}. The only known computational results on Tutte paths are~\cite{Gouyou-Beauchamps1982,Asano1984,Chiba1989,Ozeki2014,Schmid2015}. While it is known how to compute Tutte paths for planar 4-connected graphs~\cite{Chiba1989} efficiently (there Tutte paths are just Hamiltonian paths), for planar 3-connected graphs it was only recently shown that there is indeed a polynomial-time algorithm that finds Tutte paths as well as the more general 2-walks~\cite{Schmid2015}. However, for the most versatile and heavily used Tutte paths in 2-connected planar graphs, no computational result is known so far.

\paragraph{Our Results.}
Our motivation is two-fold. First, we want to make Tutte paths accessible to algorithms. We will show that Tutte paths can be computed in quadratic time.
This has impact on almost all the applications using Tutte paths listed above. 
For several of them, we immediately obtain efficient algorithms where no polynomial-time algorithms were known before.

For example, Tutte paths, as described in~\cite{Sanders1997}, were used in~\cite{Harant2016} to show that every \emph{essentially 4-connected} planar graph (i.e., a planar $3$-connected graph $G$ in which, for any $3$-separator $S$ of $G$, $G-S$ contains one component that is a single vertex) contains a cycle of length at least $\frac{n+4}{2}$ and one of length at least $\frac{3n}{5}$ if every vertex has degree at least four. 
As the existence proofs in this paper are constructive, our result directly implies a efficient (in fact, an $O(n^2)$-time) algorithm for the computation of these cycles.
In~\cite{Brinkmann2016}, it was shown that every 3-connected planar graph having at most three 3-separators is Hamiltonian. If a 3-connected planar graph contains exactly one 3-separator, one can use the algorithm given in this paper to compute a Hamiltonian cycle in $O(n^2)$ time.

The results in~\cite{Thomas1994,Sanders1996,Thomas1997,Kawarabayashi2015} use
Theorems~\ref{thm:Thomassen} and~\ref{thm:Sanders} and their authors conjecture the existence of polynomial-time algorithms for their problems, by either hinting to the constructive nature of their proofs or by highlighting similarities to~\cite{Chiba1989}.
Our algorithm provides the necessary details and proves these conjectures in the affirmative.

Second, we aim for computing the strongest possible known variant of Tutte paths, encompassing the many incremental improvements on Tutte paths made over the years. We will therefore develop an algorithm for Sander's existence result~\cite{Sanders1997}, which is in many aspects best possible. For example, Sanders~\cite{Sanders1997} showed that it is only possible to prescribe an edge if it is contained in $C_G$. His result is also known to be immediately extendable to connected planar graphs~\cite{Ozeki2014a} (the corresponding algorithmic extension can be done by simply using block-cut trees). 
Jackson et al.~\cite{Jackson2002} showed that every circuit graph contains even a Tutte \emph{cycle} through any two prescribed vertices and an edge on the outer face. However, Sander's result is still best-possible, as this cannot be expected from 2-connected graphs (as Figure~\ref{fig:Jackson_in_2-connected} shows).
For the special case of $4$-connected planar graphs, we additionally extend the description given in~\cite{Chiba1989} by removing the restriction that the endpoints of the Tutte path must lie on the outer face.

\begin{figure}[!htb]
\centering
\includegraphics[width=0.4\textwidth]{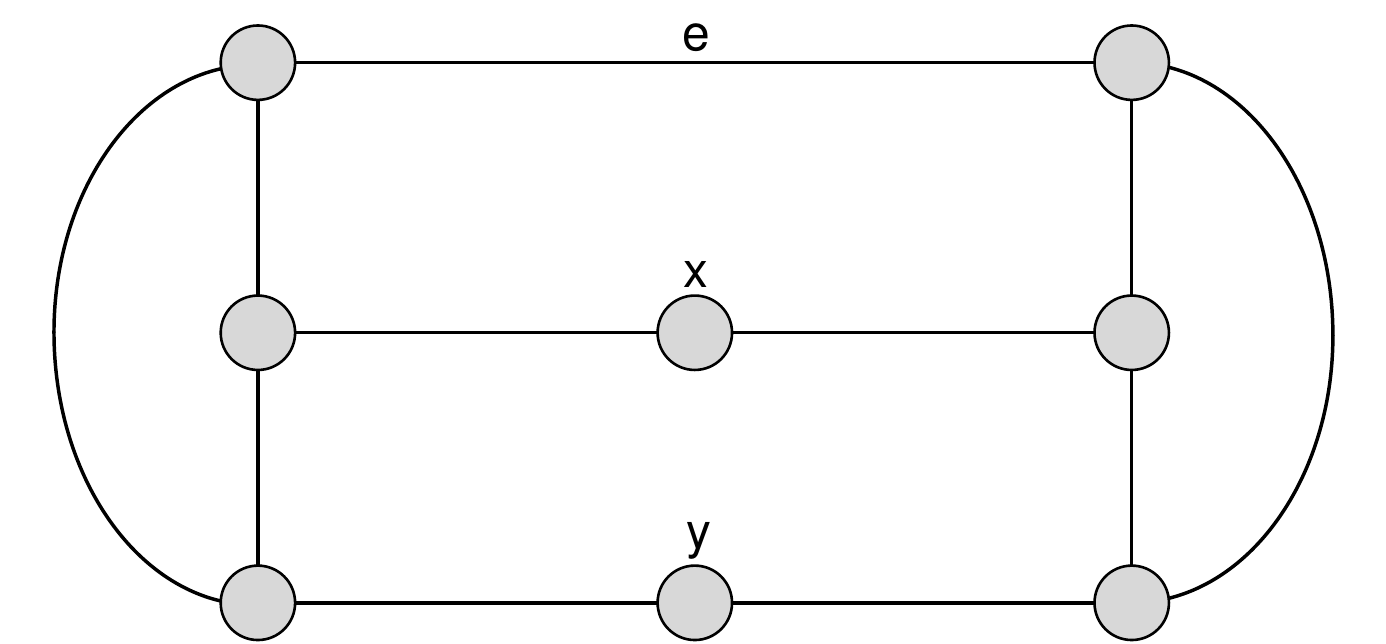}
\caption{A 2-connected planar graph that has no Tutte cycle through $x,y$ and $e$.}
\label{fig:Jackson_in_2-connected}
\end{figure}

All results in our paper will be self-contained. We will first give a decomposition that refines the ones used for Theorems~\ref{thm:Thomassen} and~\ref{thm:Sanders}, and allows to decompose $G$ into graphs that pairwise intersect in at most one edge. We then show that this small overlap does not prevent us from achieving a polynomial running time.
All graphs will be simple. We proceed by showing how this decomposition can be computed efficiently in order to find the Tutte paths of Theorem~\ref{thm:Sanders}. Our main result is hence the following:

\begin{theorem}\label{thm:main}
Let $G$ be a 2-connected plane graph, $x \in V(G)$, $\alpha \in E(C_G)$ and $y \in V(G)-x$. Then a Tutte path of $G$ from $x$ to $y$ through $\alpha$ can be computed in time $O(n^2)$.
\end{theorem}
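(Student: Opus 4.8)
The plan is to convert the existence proof of Theorem~\ref{thm:Sanders} into an algorithm by replacing its branching recursion over \emph{all} bridges of a partially constructed path --- the source of the super-polynomial blow-up --- with an \emph{iterative} decomposition that peels off one piece at a time along a single 2-separator, so that any two pieces ever created share at most one (virtual) edge. I would proceed in two reduction stages, mirroring how Sanders~\cite{Sanders1997} refines Thomassen~\cite{Thomassen1983}. Stage~1 reduces a general instance $(G,x,\alpha,y)$ to one with $x\in V(C_G)$: Sanders' argument, when $x$ is interior, routes a path from $x$ to the outer face and then appeals to Theorem~\ref{thm:Thomassen}, and I would make this a single explicit construction that outputs an auxiliary plane graph $G'$ with $x\in V(C_{G'})$ together with a translation rule sending a Tutte path of $G'$ back to one of $G$, rather than nesting it inside the recursion. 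Stage~2 then solves the case $x\in V(C_G)$ by the iterative decomposition.

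For Stage~2, given $(G,x,\alpha,y)$ with $x$ and $\alpha$ on the outer cycle $C_G$, I would run Thomassen's induction on $|V(G)|$ but realise it iteratively: either $G$ is a cycle or is $3$-connected (the base cases, finished directly --- the $3$-connected case by a self-contained subroutine, this being exactly the specialisation handled in~\cite{Schmid2015}), or there is a 2-separator $\{u,v\}$, which I select according to Thomassen's rule relative to $x$, $\alpha$, $y$ and the outer face. This separator splits $G$ into two pieces $H_0,H_1$; to each I add the virtual edge $uv$, drawn inside the face that separates them, so that both become $2$-connected plane graphs with $H_0\cap H_1=\{uv\}$, a single shared edge, and each piece has strictly fewer vertices than $G$. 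Depending on where $y$ and $\alpha$ lie, I recurse on one or both pieces, prescribing the two separator vertices as the endpoints and, for a piece not containing $\alpha$, taking its virtual edge $uv$ as the prescribed edge (legitimate by Theorem~\ref{thm:Sanders}, since $uv\in E(C_{H_i})$). Finally I glue the returned sub-paths at $u$ or $v$, delete any virtual edge the glued path does not use, and verify the Tutte condition piecewise: every bridge of the assembled path is a bridge of exactly one $H_i$, hence has at most three attachments, and at most two if it meets $C_G$.

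For the running time, let $T(n)$ bound Stage~2. Finding the 2-separator, building $H_0,H_1$, recursing, and gluing each cost $O(n)$ per node of the decomposition tree. The crucial point is that consecutive pieces overlap only in the edge $uv$, never in a vertex-heavy subgraph, so the pieces occurring at a fixed recursion depth $d$ are pairwise edge-disjoint up to the at most $O(n)$ virtual edges in play, whence $\sum_{\text{nodes at depth }d}|V(H)|=O(n)$; since the depth is $O(n)$, this gives $T(n)=\sum_d O(n)=O(n^2)$. Equivalently, a charging argument assigns to each vertex $O(n)$ units of work, as a vertex that leaves the active piece through a 2-separator is never revisited. Stage~1 contributes only $O(n)$ overhead, so the total is $O(n^2)$ as claimed.

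The main obstacle I expect is the correctness of Stage~2 rather than its timing: the 2-separator and the side(s) to recurse on must be chosen so that (i) the pieces genuinely pairwise intersect in at most one edge while (ii) all of Thomassen's inductive invariants survive --- in particular that the prescribed edge stays on the outer face of each piece, that the delicate $3$-bridges (bridges with three attachments, which the Tutte condition only barely permits) are never split across two pieces, and that $x$, $y$, and $\alpha$ are distributed consistently among the pieces in every case. A clean case analysis here, together with a rigorous proof of the per-level size bound (ruling out multiplicative compounding of virtual edges down the recursion), is the technical heart of the argument. The $4$-connected specialisation then falls out by noting that there every piece is $3$-connected, so the decomposition terminates after one level, recovering and extending the linear-time algorithm of Chiba and Nishizeki~\cite{Chiba1989}.
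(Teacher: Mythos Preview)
Your proposal misses the actual source of overlap in Thomassen's proof and therefore does not eliminate it. The overlap is \emph{not} between the two sides of a 2-separator of $G$; Lemmas~\ref{lemma:2separators} and~\ref{lemma:isolated bridge} already handle those cases cleanly and edge-disjointly, and the paper dispatches them exactly as you suggest. The hard case is when neither lemma applies: then one forms $Q=xC_Gl_\alpha$, the chain of blocks $K$ in $G-Q$, and must build a Tutte path $P$ of $K$ \emph{and} afterwards modify $Q$ by routing it through the $(P\cup Q)$-bridges. In Thomassen's original argument the non-trivial outer $P$-bridges of each $B_i$ are visited once while constructing $P$ (by induction on $B_i$) and a second time while modifying $Q$ (Lemma~\ref{lemma:PQ-bridges}); this double use, not the top-level 2-separator split, is what breaks polynomial time. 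Your Stage~2 never confronts this, because you declare the 3-connected case a base case to be handed to~\cite{Schmid2015}. That is exactly where the work lies: a 3-connected $G$ has no 2-separator for your outer recursion to split on, yet $G-Q$ and its blocks $B_i$ typically do have 2-separators, producing the problematic outer $P$-bridges. The paper's contribution is precisely a mechanism to build $P$ inside $K$ without touching those bridges: it defines boundary points of each $B_i$, shows (Lemmas~\ref{lemma:bridgeattachments}--\ref{lemma:2separators_vs_2bridges}) that every non-trivial outer $P_i$-bridge sits behind a \emph{maximal 2-separator on a boundary part}, splits all such pieces off to form $\eta(B_i)$, recurses only on $\eta(B_i)$, and then reattaches the split-off pieces one at a time (Lemma~\ref{lemma:construct_P}). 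None of this is captured by ``pick a 2-separator of $G$ and split''.

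Two further concrete gaps: your Stage~1 reduction of Sanders to Thomassen is not a single re-embedding step as you suggest; the paper needs the full chain-of-blocks construction in $G-C_G$, an auxiliary graph $H$ with artificial vertices $a,b$ encoding the bridges in $\mathcal L\cup\mathcal J$, and the Three Edge Lemma (itself proved via the $\eta$ machinery) to route through $J_i$. And your running-time argument, while plausible in shape, presumes a decomposition you have not actually specified; the paper instead bounds the total number of inductive calls by $2n-3$ via a global accounting of Tutte-path edge counts (Lemma~\ref{alg:analyze}), which is what makes the $O(n)$ cost per call sum to $O(n^2)$.
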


Section~\ref{sec:decomposition} presents the non-overlapping decomposition that proves the existence of Tutte paths. On the way to our main result, we give full algorithmic counterparts of the approaches of Thomassen and Sanders; for example, we describe non-overlapping variants of Theorem~\ref{thm:Thomassen} and of the \emph{Three Edge Lemma}~\cite{Thomas1994,Sanders1996}, which was used in the purely existential result of Sanders~\cite{Sanders1997} as a black box.

\paragraph{Our Techniques.}
We follow the idea of~\cite{Chiba1989} and construct a Tutte path that is based on certain 2-separators of the graphs constructed during our decomposition. This depends on many structural properties of the given graph. In~\cite{Chiba1989}, the necessary properties however follow from the restriction to the class of internally 4-connected planar graphs, the restriction on the endpoints of the desired Tutte path, and the fact that the Tutte paths computed recursively are actually Hamiltonian.

In contrast, here we give new insights into the much wilder structure of Tutte paths of 2-connected planar graphs, allow $x,y \notin C_G$, and hence extend this technique. We show that based on the prescribed vertices and edge, there is always a set of unique non-interlacing 2-separators that are contained in every possible Tutte path of the given graph. We then use this set of 2-separators to iteratively construct one Tutte path and use this iterative procedure to avoid overlappings in the decomposition of the input graph.

\section{Preliminaries}\label{sec:preliminaries}
We assume familiarity with standard graph theoretic notations as in~\cite{Diestel2010}. Let $deg(v)$ be the degree of a vertex $v$.
We denote the subtraction of a graph $H$ from a graph $G$ by $G-H$, and the subtraction of a vertex or edge $x$ from $G$ by $G-x$.

A \emph{$k$-separator} of a graph $G=(V,E)$ is a subset $S \subseteq V$ of size $k$ such that $G-S$ is disconnected. A graph $G$ is \emph{$k$-connected} if $n > k$ and $G$ contains no $(k-1)$-separator. For a path $P$ and two vertices $x,y \in P$, let $xPy$ be the smallest subpath of $P$ that contains $x$ and $y$. For a path $P$ from $x$ to $y$, let $inner(P) := V(P)-\{x,y\}$ be the set of its inner vertices. Paths that intersect pairwise at most at their endvertices are called \emph{independent}.

A connected graph without a 1-separator is called a \emph{block}. A \emph{block of a graph} $G$ is an inclusion-wise maximal subgraph that is a block. Every block of a graph is thus either $2$-connected or has at most two vertices. It is well-known that the blocks of a graph partition its edge-set. A graph $G$ is called a \emph{chain of blocks} if it consists of blocks $B_{1},B_{2},\dots,B_{k}$ such that $V(B_i) \cap V(B_{i+1})$, $1 \leq i < k$, are pairwise distinct $1$-separators of $G$ and $G$ contains no other $1$-separator. In other words, a chain of blocks is a graph, whose block-cut tree~\cite{Harary1966} is a path.

A \emph{plane} graph is a planar embedding of a graph. Let $C$ be a cycle of a plane graph $G$. For two vertices $x,y$ of $C$, let $xCy$ be the clockwise path from $x$ to $y$ in $C$. For a vertex $x$ and an edge $e$ of $C$, let $xCe$ be the clockwise path in $C$ from $x$ to the endvertex of $e$ such that $e \notin xCe$ (define $eCx$ analogously). Let the subgraph of $G$ \emph{inside} $C$ consist of $E(C)$ and all edges that intersect the open set inside $C$ into which $C$ divides the plane. For a plane graph $G$, let $C_G$ be its outer face.

A central concept for Tutte paths is the notion of $H$-bridges (see~\cite{Tutte1977} for some of their properties): For a subgraph $H$ of a graph $G$, an \emph{$H$-bridge} of $G$ is either an edge that has both endvertices in $H$ but is not itself in $H$ or a component $K$ of $G - H$ together with all edges (and the endvertices of these edges) that join vertices of $K$ with vertices of $H$. A $H$-bridge is called \emph{trivial} if it is just one edge.
A vertex of a $H$-bridge $L$ is an \emph{attachment} of $L$ if it is in $H$, and an \emph{internal} vertex of $L$ otherwise.
An \emph{outer} $H$-bridge of $G$ is a $H$-bridge that contains an edge of $C_G$.

A \emph{Tutte path} (\emph{Tutte cycle}) of a plane graph $G$ is a path (a \emph{cycle}) $P$ of $G$ such that every outer $P$-bridge of $G$ has at most two attachments and every $P$-bridge at most three attachments. In most of the cases we consider, $G$ will be 2-connected, so that every $P$-bridge has at least two attachments. For vertices $x,y$ and an edge $\alpha \in C_G$, let an \emph{$x$-$\alpha$-$y$-path} be a Tutte path from $x$ to $y$ that contains $\alpha$. An \emph{$x$-$y$-path} is an \emph{$x$-$\alpha$-$y$-path} for an arbitrarily chosen edge $\alpha \in C_G$.

\section{Non-overlapping Decomposition}\label{sec:decomposition}

After excluding several easy cases of the decomposition, we prove Thomassen's Theorem~\ref{thm:Thomassen} constructively and then show how to use this for a proof of the Three Edge Lemma. The Three Edge Lemma, in turn, will allow for a constructive proof of Sander's Theorem~\ref{thm:Sanders} without overlapping subgraphs.
We will use induction on the number of vertices. In all proofs about Tutte paths of this chapter, the induction base is a triangle, in which the desired Tutte path can be found trivially; thus, we will assume in these proofs by induction hypothesis that graphs with less vertices contain Tutte paths. All graphs in the induction will be simple.

The following sections cover different cases of the induction steps of the three statements to prove, starting with some easy cases for which a decomposition into edge disjoint subgraphs was already given~\cite{Thomassen1983}. From now on, let $G$ be a simple plane 2-connected graph with outer face $C_G$ and let $x \in V(G)$, $\alpha \in E(C_G)$ and $y \in V(G)-x$. If $\alpha=xy$, the desired path is simply $xy$; thus, assume $\alpha \neq xy$. Since $G$ is 2-connected, $C_G$ is a cycle.

\subsection{The Easy Cases}\label{sec:instances}

We say that $G$ is \emph{decomposable into $G_L$ and $G_R$} if it contains subgraphs $G_L$ and $G_R$ such that $G_L \cup G_R=G$, $V(G_L) \cap V(G_R)=\{c,d\}$, $x \in V(G_L)$, $\alpha \in E(G_R)$, $V(G_L) \neq \{x,c,d\}$ and $V(G_R) \neq \{c,d\}$ (or the analogous setting with $y$ taking the role of $x$). 
In particular, $G_L \neq \{c,d\}$, even if $x \in \{c,d\}$. Hence $\{c,d\}$ is a 2-separator of $G$. There might exist multiple pairs $(G_L,G_R)$ into which $G$ is decomposable; we will always choose a pair that minimizes $|V(G_R)|$. Note that $G_R$ intersects $C_G$ (for example, in $\alpha$), but $G_L$ does not have to intersect $C_G$. In~\cite{Thomassen1983}, it was shown that every decomposable graph $G$ contains a Tutte path, without using recursion on overlapping subgraphs.

\begin{lemma}[\cite{Thomassen1983}]\label{lemma:2separators}
If $G$ is decomposable into $G_L$ and $G_R$, then $G$ contains an $x$-$\alpha$-$y$-path.
\end{lemma}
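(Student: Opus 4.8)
The plan is to prove Lemma~\ref{lemma:2separators} by induction on $|V(G)|$, using the decomposition of $G$ into $G_L$ and $G_R$ along the $2$-separator $\{c,d\}$ chosen to minimize $|V(G_R)|$. The idea is to find a Tutte path in each piece — but each piece must first be made $2$-connected and given a valid embedding so the induction hypothesis applies — and then to glue the two partial paths together at a vertex of $\{c,d\}$, finally verifying the $P$-bridge conditions for the combined path.

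\textbf{Step 1: prepare $G_R$.} First I would add the edge $cd$ to $G_R$ if it is not already present, obtaining a graph $G_R'$; since $\{c,d\}$ is a $2$-separator of $G$ and $G$ is $2$-connected, both $c$ and $d$ have neighbours in $G_L - \{c,d\}$, so $G_R'$ is $2$-connected, and it inherits a plane embedding from $G$ in which $cd$ lies on the outer face $C_{G_R'}$ (it bounds the face that used to contain $G_L$). Because $V(G_R) \neq \{c,d\}$, we have $|V(G_R')| < |V(G)|$. Now apply Thomassen's Theorem~\ref{thm:Thomassen} (or the induction hypothesis): with $\alpha \in E(C_G) \cap E(G_R)$ still on the outer face of $G_R'$, and using $cd$ as a prescribed edge and, say, $c$ and $d$ as the prescribed endpoints — actually, the cleaner choice is to take $\alpha$ as the prescribed edge through the path and ask for a Tutte path of $G_R'$ between $c$ and $d$ — one gets a Tutte path $P_R$ of $G_R'$ from $c$ to $d$ through $\alpha$. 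One then checks that either $cd \in P_R$, in which case $P_R - cd$ is a path in $G_R$ with the same bridges except that the bridge containing $G_L$'s interior is not yet attached; or $cd \notin P_R$, which is the convenient case. (The precise bookkeeping about whether $cd$ is used is exactly the kind of case distinction this lemma needs.)

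\textbf{Step 2: prepare $G_L$ and glue.} Symmetrically, form $G_L'$ by adding $cd$ to $G_L$; it is $2$-connected and plane, with $cd$ on its outer face, and $|V(G_L')| < |V(G)|$ since $V(G_L) \neq \{x, c, d\}$ forbids the degenerate case. Apply the induction hypothesis to $G_L'$ with start vertex $x$, the prescribed edge being $cd$ (which lies on $C_{G_L'}$), and end vertex whichever of $c, d$ is appropriate — this yields a Tutte path $P_L$ of $G_L'$ from $x$ to (say) $d$ through $cd$, hence ending $c, d$. Now identify the endpoint $c$ (or $d$) of $P_L$ with the corresponding endpoint of $P_R$ and concatenate; if the edge $cd$ was used in one of the two pieces it gets deleted, so that the concatenation $P := P_L \cup P_R$ is a genuine path (or in some sub-case a path that already visits both $c$ and $d$, and one argues it remains a simple path). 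The role of $y$: if $y \in V(G_L)$ this is immediate; the statement also allows $y$ to play the role of $x$, i.e.\ $y \in V(G_R)$, which is the symmetric set-up and handled the same way with the roles of the two pieces swapped.

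\textbf{Step 3: verify the Tutte conditions, and identify the obstacle.} Every $P$-bridge of $G$ is either a $P_L$-bridge of $G_L'$, or a $P_R$-bridge of $G_R'$, or — the only genuinely new case — a bridge formed by merging at $c$ or $d$. Since $P$ meets $\{c,d\}$ in both $c$ and $d$, a $P$-bridge cannot straddle the separator: every $H$-bridge of $G$ lives entirely in one of $G_L$, $G_R$. So the bridge count is inherited directly, \emph{except} that the edge $cd$ (present in $G_L'$ and $G_R'$ but possibly not in $G$) may have been serving as, or as part of, a $P_L$- or $P_R$-bridge; one must check that deleting it only decreases attachment numbers. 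The attachment bound (at most two for outer bridges, at most three otherwise) then transfers. \textbf{The main obstacle} is the careful case analysis forced by the edge $cd$: whether it belongs to $G$, whether each recursive Tutte path uses it, and — most delicately — ensuring that the concatenation is still a \emph{path} (not merely a subgraph) and that a bridge which was "outer" in $G_R'$ because it touched the $cd$-side of the outer face does not become a non-outer bridge with too many attachments in $G$, or vice versa. Handling the outer-face bookkeeping at the seam $\{c,d\}$ — precisely because $G_L$ need not meet $C_G$ at all, while $G_R$ does — is where the real work lies; the rest is routine induction.
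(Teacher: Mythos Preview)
Your proposal correctly handles the case $y \in V(G_L)$ (which is essentially the paper's first case), but the appeal to ``symmetry'' for $y \in V(G_R)-\{c,d\}$ is where it breaks. You cannot simply swap the roles of $G_L$ and $G_R$: the edge $\alpha$ is fixed in $G_R$, so the final path must start at $x$ in $G_L$, cross the separator exactly once, pass through $\alpha$, and terminate at $y$ inside $G_R$. Your Step~1 produces a $c$--$\alpha$--$d$ path in $G_R'$, not one ending at $y$; your Step~2 produces a path in $G_L'$ ending in $\{c,d\}$, not at $y$. No concatenation of these two yields an $x$--$y$ path.

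The paper resolves this case with two ideas your outline is missing. First, rather than $G_R'$, it forms $G_R^*$ by \emph{subdividing} the edge $cd$ with a new vertex $z$ and inductively obtains a $z$--$\alpha$--$y$ path $P_R$ in $G_R^*$. Exactly one of $zc,zd$ lies on $P_R$ (say $zc$), so deleting $z$ leaves a path from $c$ to $y$ through $\alpha$ that is guaranteed not to use $cd$; this is then glued to an $x$--$cd$--$d$ path of $G_L'$. Second --- and this is where the hypothesis that $|V(G_R)|$ is \emph{minimal} finally earns its keep --- one must argue that the other separator vertex $d$ actually lies on $P_R$. If it did not, $d$ would be internal to an outer $P_R$-bridge with two attachments on $C_{G_R^*}$, and those two attachments would exhibit a decomposition of $G$ with a strictly smaller right side, contradicting minimality. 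Without $d \in P_R$ the gluing fails: the whole of $G_L$ would merge with that bridge into a $P$-bridge of $G$ with uncontrolled attachments. (There is also a separate sub-case $x \in \{c,d\}$, handled directly in $G_R'$ with the same minimality argument.) Your Step~3 worries about outer-versus-inner bookkeeping at the seam, but that is not the real difficulty; the real difficulty is getting both $c$ and $d$ onto the $G_R$-side path when $y$ lives there.
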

\begin{proof}
Let $G'_L$ and $G'_R$ be the plane graphs obtained from $G_L$ and $G_R$, respectively, by adding the edge $cd$ if this does not already exist (see Figure~\ref{fig:G_LG_R}). Let $G^*_R$ be the graph obtained from $G_R$ by subdividing $cd$ with a new vertex $z$. Clearly, each of the graphs $G_L'$, $G_R'$ and $G_R^*$ is 2-connected and contains less vertices than $G$.

\begin{figure}[!htb]
\centering
\includegraphics[width=1.0\textwidth]{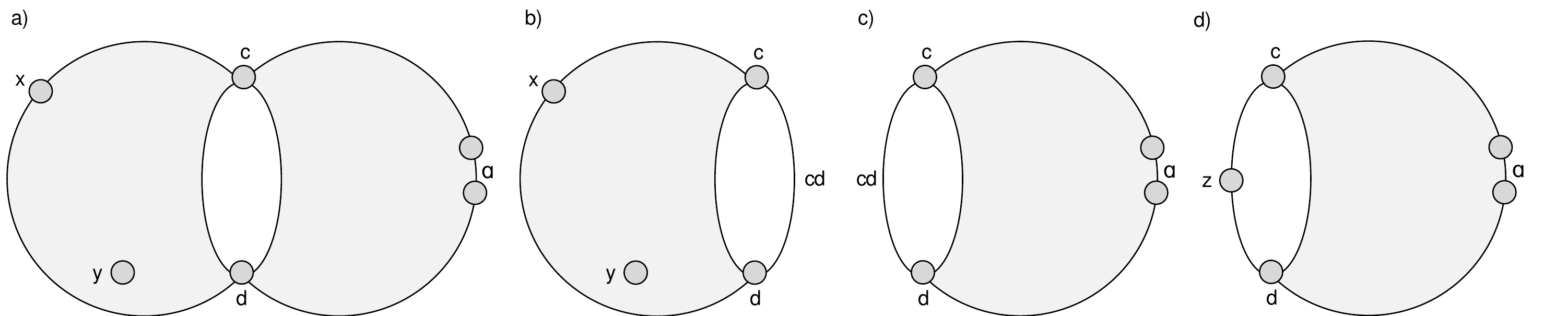}
\caption{a) shows a graph $G$ that is decomposable into $G_L$ and $G_R$. The figures b) to d) show the graphs $G_L',G_R'$ and $G_R^*$ (in this order).}
\label{fig:G_LG_R}
\end{figure}

Assume first that $y \in G_L$. By induction, $G_L'$ contains an $x$-$cd$-$y$-path $P_L$ and $G_R'$ contains a $c$-$\alpha$-$d$-path $P_R \not\ni cd$ (this requires to find a plane embedding of $G_{R'}$ whose outer face contains $\alpha$; here and later, such an embedding can always be found by stereographic projection). Then $P := (P_L - cd) \cup P_R$ is an $x$-$\alpha$-$y$-path of $G$, as $\{c,d\}$ is a 2-separator and thus every $P_L$-bridge of $G_L'$ and every $P_R$-bridge of $G_R'$ has the same attachments as its corresponding $P$-bridge of $G$.

Otherwise, $y \in G_R - \{c,d\}$. We split this case in two sub-cases. First, assume $x \in \{c,d\}$ and without loss of generalization $x=c$. By induction, $G_R'$ contains an $x$-$\alpha$-$y$-path $P_R$. Suppose $P_R$ does not contain $d$. Then $d$ is contained in a $P_R$-bridge $K$ of $G_R'$ as internal vertex and $cd \in K$. Since $cd \in C_{G_R'}$, $K$ has exactly two attachments (one of which is $x$), and these form a 2-separator implying that $G$ is decomposable into a smaller graph than $G_R$, which contradicts our choice of the decomposition. Hence, $d \in P_R$. If $cd \notin P_R$, $P_R$ is a Tutte path of $G$, as $d \in P_R$ implies that $G_L - cd$ is a $P_R$-bridge of $G$ having two attachments. If $cd \in P_R$, let $e$ be any edge in $G_L \cap C_G$; by induction, $G_L$ contains a $c$-$e$-$d$-path $P_L$. Then $P_L \cup (P_R - cd)$ is an $x$-$\alpha$-$y$-path of $G$. 

Now assume $x \notin \{c,d\}$. We will again merge two Tutte paths by induction, but have to ensure that $cd$ is not contained in any of them; to this end, we use $G_R^*$ instead of $G_R'$. By induction, there is a $z$-$\alpha$-$y$-path $P_R$ in $G_R^*$; $P_R$ contains either $zc$ or $zd$, say without loss of generalization $zc$. By the same argument as in the previous case, we have $d \in P_R$. By induction, $G_L'$ contains a $x$-$cd$-$d$-path $P_L$. Then $P := (P_L -d) \cup (P_R - z)$ is an $x$-$\alpha$-$y$-path of $G$, as $\{c,d\} = P_L \cap P_R$ and since every $P_L$- or $P_R$-bridge of $G_L$ or $G_R$, respectively, has the same attachments as its corresponding $P$-bridge of $G$.
\end{proof}

Even if $G$ is not decomposable into $G_L$ and $G_R$, $G$ may contain other $2$-separators $\{c,d\}$ that allow for a similar reduction as in Lemma~\ref{lemma:2separators} (for example, when modifying its prerequisites to satisfy $\{x,\alpha,y\} \subseteq G_R-\{c,d\}$). 

\begin{lemma}[\cite{Thomassen1983}]\label{lemma:isolated bridge}
Let $\{c,d\}$ be a 2-separator of $G$ and let $J$ be a $\{c,d\}$-bridge of $G$ having an internal vertex in $C_G$ such that $x$, $y$ and $\alpha$ are not in $J$. Then $G$ contains an $x$-$\alpha$-$y$-path.
\end{lemma}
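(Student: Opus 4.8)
The plan is to treat $J$ as a bridge that can be temporarily deleted and later reinserted, in the spirit of Lemma~\ref{lemma:2separators}. Write $K$ for the set of internal vertices of $J$ -- exactly the component of $G-\{c,d\}$ to which $J$ corresponds -- which is nonempty since $J$ has an internal vertex on $C_G$. First I would record the topology forced by the hypotheses: since $\alpha\in E(C_G)\setminus E(J)$ and some internal vertex of $J$ lies on $C_G$, both $c$ and $d$ must lie on $C_G$ (otherwise $C_G$ would stay connected in $G-\{c,d\}$, forcing $C_G\subseteq J$ and hence $\alpha\in E(J)$), so $C_G$ splits into an arc $A_1$ through $K$ (thus $A_1\subseteq J$) and a complementary arc $A_2$ with $\alpha\in E(A_2)$ and $V(A_2)\cap K=\emptyset$; moreover, deleting $K$ leaves $c$ and $d$ on the boundary of the face previously occupied by $J$. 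Let $H:=G-K$ (the union of $\{c,d\}$ with all $\{c,d\}$-bridges other than $J$), so that $x,y\in V(H)\setminus\{c,d\}$ and $\alpha\in E(H)$ because $x,y,\alpha$ are not in $J$. Set $H':=H$ together with the edge $cd$ drawn inside the vacated face if $cd\notin E(G)$, and $G_J:=J+cd$ (here $J$ is non-trivial, so $cd\notin E(J)$). Using the standard fact that adding $cd$ to any union of $\{c,d\}$-bridges of a $2$-connected graph keeps it $2$-connected, both $H'$ and $G_J$ are $2$-connected and have fewer vertices than $G$ (as $K\neq\emptyset$ and $x,y\notin V(J)$). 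By stereographic projection I would fix the embeddings so that $\alpha$, $cd$ and all of $E(A_2)$ lie on $C_{H'}$, and so that $cd$ and some fixed edge $e\in E(J)\cap E(C_G)$ lie on $C_{G_J}$; such an $e$ exists because every edge incident to an internal vertex of $J$ on $C_G$ belongs to $E(J)\cap E(C_G)$, and no such edge equals $cd$.

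By the induction hypothesis $H'$ contains an $x$-$\alpha$-$y$-path $P$. If $cd\notin P$ (which forces $\alpha\neq cd$), I claim $P$ is already an $x$-$\alpha$-$y$-path of $G$: the $P$-bridges of $G$ coincide with those of $H'$ except that $J$ reappears (possibly merged with one of them), and $J$ is an outer bridge (it has an internal vertex on $C_G$) with attachments in $\{c,d\}$; if $c,d\in V(P)$ this gives at most two attachments, while if, say, $c\notin V(P)$ then $cd\in E(H')$ lies in a $P$-bridge $B$ of $H'$ which is outer (since $cd\in E(C_{H'})$) and hence has at most two attachments, so absorbing $J$ into $B$ keeps it at two; every other $P$-bridge of $G$ keeps its attachments, and since every edge of $C_G$ lying in $H$ is on $A_2\subseteq C_{H'}$, "outer in $G$" implies "outer in $H'$", so the bounds are preserved. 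If $cd\in P$ and $\alpha=cd$, the same argument works, now with $c,d\in V(P)$ automatically. Finally, if $cd\in P$ and $\alpha\neq cd$, I would apply the induction hypothesis to $G_J$ to obtain a $c$-$e$-$d$-path $P_J$; since $e\neq cd$, a $c$-$d$-path through $e$ cannot be the single edge $cd$, so $cd\notin P_J$. Then $P':=(P-cd)\cup P_J$ is a path from $x$ to $y$ through $\alpha$, and, exactly as in the proof of Lemma~\ref{lemma:2separators}, every $P'$-bridge of $G$ inside $J$ is a $P_J$-bridge of $G_J$ and every one outside $J$ is a $P$-bridge of $H'$, each with unchanged attachments; hence $P'$ is an $x$-$\alpha$-$y$-path of $G$.

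The step I expect to be the main obstacle is exactly this bridge bookkeeping: when $J$ is reinserted into a $P$-bridge of $H'$, one must exclude creating an outer $P$-bridge of $G$ with three attachments. The remedy is built into the construction -- choosing the outer face of $H'$ to contain $cd$ makes any $P$-bridge of $H'$ that could absorb $J$ an outer bridge, hence of at most two attachments. The remaining points (that $H'$ and $G_J$ are $2$-connected, smaller, and admit the stated embeddings) are routine and use only the "$\{c,d\}$-bridge plus $cd$ is $2$-connected" fact together with stereographic projection, as in the proof of Lemma~\ref{lemma:2separators}.
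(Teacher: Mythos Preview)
Your proposal is correct and follows essentially the same approach as the paper: delete the internal vertices of $J$, add the edge $cd$ to obtain a smaller $2$-connected plane graph, find an $x$-$\alpha$-$y$-path there by induction, and if $cd$ is used in that path, replace it by a $c$-$e$-$d$-path of $J\cup\{cd\}$ obtained by a second inductive call. The only difference is organizational: the paper singles out the case $E(C_G)\setminus E(J)=\{\alpha\}$ (equivalently $\alpha=cd$) at the outset and does not add $cd$ there, whereas you always form $H'$ and absorb this situation into your subcase ``$cd\in P$ and $\alpha=cd$''; you also spell out the embedding choices and the bridge-merging argument (when $c$ or $d$ lies off $P$) in more detail than the paper does.
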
 

\begin{proof}
Let $G'$ be the plane graph obtained from $G$ by deleting all internal vertices of $J$. Since $x \notin J$, $G'$ contains at least three vertices. First, consider the case $E(C_G)-E(J) = \{\alpha\}$. Then $G'$ is 2-connected, as the 2-connectivity of $G$ and the deletion of the internal vertices of $J$ for $G'$ imply that any 1-separator $z$ of $G'$ must separate $c$ from $d$. By induction, $G'$ contains an $x$-$\alpha$-$y$-path $P$. Since $c,d \in P$ and $J$ has two attachments, $P$ is also a $x$-$\alpha$-$y$-path of $G$.

In the remaining case $E(C_G)-E(J) \neq \{\alpha\}$, we add the edge $cd$ to $G'$ where $C_G \cap J$ used to be embedded, unless $cd$ is already contained in $G'$. Clearly, $G'$ is 2-connected and $|V(G')| < n$, since $J$ contains an internal vertex. By induction, $G'$ contains an $x$-$\alpha$-$y$-path $P$. If $cd \notin P$, $cd$ is contained in a $P$-bridge of $G'$ that has two attachments and its corresponding $P$-bridge of $G$ has exactly the same attachments, so that $P$ is also a $x$-$\alpha$-$y$-path of $G$.

Now assume $cd \in P$ and let $J^* := J \cup \{cd\}$ such that $cd$ is embedded where $G-V(J)$ used to be embedded. Then $J^*$ is 2-connected and $|V(J^*)| < n$. Let $\alpha_{J^*}$ denote an arbitrary edge in $C_{J^*}-cd$. By induction, $J^*$ contains a $c$-$\alpha_{J^*}$-$d$-path $P_{J^*}$. Then the path obtained from $P$ by replacing $cd$ with $P_{J^*}$ is an $x$-$\alpha$-$y$-path of $G$, as $\{c,d\}$ separates the $P$- and $P_{J^*}$-bridges of $G$.
\end{proof}

\subsection{Proof of Theorem~\ref{thm:Thomassen}}
We now prove that $G$ contains a Tutte path from $x \in V(C_G)$ to $y \in V(G)-x$ through $\alpha \in E(C_G)$. If Lemma~\ref{lemma:2separators} or~\ref{lemma:isolated bridge} can be applied, we obtain such a Tutte path directly, so assume their prerequisites are not met. 
Let $l_\alpha$ be the endvertex of $\alpha$ that appears first when we traverse $C_G$ in clockwise order starting from $x$, and let $r_\alpha$ be the other endvertex of $\alpha$. If $y \in xC_Gl_\alpha$, we interchange $x$ and $y$ (this does not change $l_\alpha$); hence, we have $y \notin xC_Gl_\alpha$. If $y = r_\alpha$, we mirror the embedding such that $y$ becomes $l_\alpha$ and proceed as in the previous case; hence, $y \notin xC_Gr_\alpha$.

We define two paths $P$ and $Q$ in $G$, whose union will, step by step, be modified into a Tutte path of $G$. Let $Q := xC_Gl_\alpha$ and let $H := G - V(Q)$; in particular, $y \notin Q$ and, if $x$ is an endvertex of $\alpha$, $Q = \{x\}$. Since $G$ is not decomposable, we have $deg(r_\alpha) \geq 3$, as otherwise the neighborhood of $r_\alpha$ would be the 2-separator of such a decomposition. Since $deg(r_\alpha) \geq 3$, $r_\alpha$ is incident to an edge $e \notin C_G$ that shares a face with $\alpha$. Let $B_1$ be the block of $H$ that contains $e$. It is straight-forward to prove the following about $B_1$ (see Thomassen~\cite{Thomassen1983}), which shows that every vertex of $C_G$ is either in $Q$ or in $B_1$.

\begin{lemma}[\cite{Thomassen1983}] \label{lemma:there-is-no-L}
$B_1$ contains $C_G - V(Q)$ and is the only block of $H$ containing $r_\alpha$.
\end{lemma}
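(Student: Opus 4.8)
The plan is to argue that if $B_1$ failed to contain all of $C_G - V(Q)$, or if some other block of $H$ also contained $r_\alpha$, then one of the two reduction lemmas (Lemma~\ref{lemma:2separators} or Lemma~\ref{lemma:isolated bridge}) would have been applicable — contradicting the standing assumption in this subsection that their prerequisites are not met. First I would record the basic setup: $Q = xC_Gl_\alpha$ is a subpath of the cycle $C_G$, so $C_G - V(Q) = l_\alpha C_G x$ minus its endvertices is itself a path $R$ in $G$ whose first edge is $\alpha = l_\alpha r_\alpha$; in particular $r_\alpha \in V(R)$ and $R$ lives in $H = G - V(Q)$. Since $H$ is embedded in the plane with $R$ on (part of) its outer boundary, and $H$ is obtained from the $2$-connected graph $G$ by deleting the internal vertices of a subpath of $C_G$, the ``hanging'' structure of $H$ is controlled: any cut vertex of $H$ would have to separate pieces of $G$ that in $G$ are only joined through $V(Q)$.

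The core step is to show $r_\alpha$ lies in a unique block of $H$ and that this block contains the whole path $R$. I would traverse $R$ starting from $r_\alpha$ and suppose, for contradiction, that at some vertex $v$ the path $R$ leaves the block $B_1$; then $v$ is a cut vertex of $H$ separating the initial segment of $R$ (which is in $B_1$) from the rest. Translating back to $G$: the vertex $v$ together with one vertex of $Q$ — concretely the other ``side'' where $Q$ reattaches, which is $x$ or the appropriate endpoint — forms a $2$-separator $\{c,d\}$ of $G$. Now one checks which reduction applies. If $x,y,\alpha$ all end up on one side of this separator together with the bulk of the graph, while the other side is a $\{c,d\}$-bridge $J$ with an internal vertex on $C_G$ (such an internal vertex exists because $R \subseteq C_G$ and the separated segment of $R$ has interior vertices, using that $B_1 \ne \{r_\alpha\}$ since $deg(r_\alpha)\ge 3$ forced the edge $e \notin C_G$ into $B_1$), then Lemma~\ref{lemma:isolated bridge} applies. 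In the complementary configuration, where the separator genuinely splits off a part containing $x$ (or $y$) on one side and $\alpha$ on the other with both sides nontrivial, the graph $G$ is decomposable in the sense of Section~\ref{sec:instances} and Lemma~\ref{lemma:2separators} applies. Either way we contradict the assumption, so no such $v$ exists and $R \subseteq B_1$, giving $C_G - V(Q) \subseteq B_1$.

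Finally, uniqueness of the block containing $r_\alpha$ follows along the same lines: a second block $B'$ of $H$ through $r_\alpha$ would make $r_\alpha$ a cut vertex of $H$, and since $r_\alpha \in C_G$ and $B'$ would contain internal vertices of $H$ not on $C_G$ (or else $B'$ would be a single edge of $C_G$, impossible as $R$ is a path and $B_1$ already covers $C_G - V(Q)$), we again obtain a $2$-separator $\{r_\alpha, w\}$ of $G$ (with $w \in V(Q)$ by $2$-connectivity) that isolates $B'$ as a bridge not containing $x,y,\alpha$, so Lemma~\ref{lemma:isolated bridge} fires. I expect the main obstacle to be the bookkeeping of \emph{which} reduction lemma applies in each subcase — in particular, verifying in the ``decomposable'' case that the side containing $x$ is genuinely larger than $\{x,c,d\}$ and the side containing $\alpha$ is larger than $\{c,d\}$ (so that the definition of decomposability, including the $V(G_L)\neq\{x,c,d\}$ and $V(G_R)\neq\{c,d\}$ clauses, is met), and in the ``isolated bridge'' case that the split-off bridge really has an \emph{internal} vertex on $C_G$. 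These are exactly the places where the hypotheses $y \notin xC_Gl_\alpha$, $deg(r_\alpha)\ge 3$, and the non-applicability of the easy cases are used, so the proof is mostly a careful case analysis rather than a new idea; this is why the excerpt defers to Thomassen and calls it ``straight-forward.''
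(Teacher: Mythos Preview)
The paper does not prove this lemma itself; it calls the claim ``straight-forward'' and defers to Thomassen~\cite{Thomassen1983}. Your outline is exactly the standard argument: a cut vertex $v$ of $H$ lying on $C_G - V(Q)$ yields, together with a suitably chosen vertex of $Q$, a $2$-separator of $G$, which then triggers Lemma~\ref{lemma:2separators} or Lemma~\ref{lemma:isolated bridge} and contradicts the standing assumption that neither applies.

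Two small remarks. First, a slip: the path $R = C_G - V(Q)$ starts at $r_\alpha$ and does \emph{not} contain the edge $\alpha$, since its endpoint $l_\alpha$ lies in $V(Q)$ and has been deleted. Second, the only substantive step you leave implicit is why a \emph{single} vertex $q \in V(Q)$ suffices to form the separator $\{v,q\}$; the component $A'$ of $H-v$ on the ``$x$-side'' of $R$ could a priori attach to several vertices of $Q$. This is where planarity enters: take $q$ to be the attachment of $A'$ in $Q$ closest to $l_\alpha$ along $Q$; then the two arcs of $C_G$ between $v$ and $q$ carry $A'$ and the $r_\alpha$-side component respectively, and any path between them in $G-\{v,q\}$ would force a crossing. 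With that detail and the case bookkeeping you already anticipate (locating $y$ and verifying the size conditions $V(G_L)\neq\{x,c,d\}$, $V(G_R)\neq\{c,d\}$), the argument goes through.
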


\begin{figure}[!htb]
\centering
\includegraphics[width=0.7\textwidth]{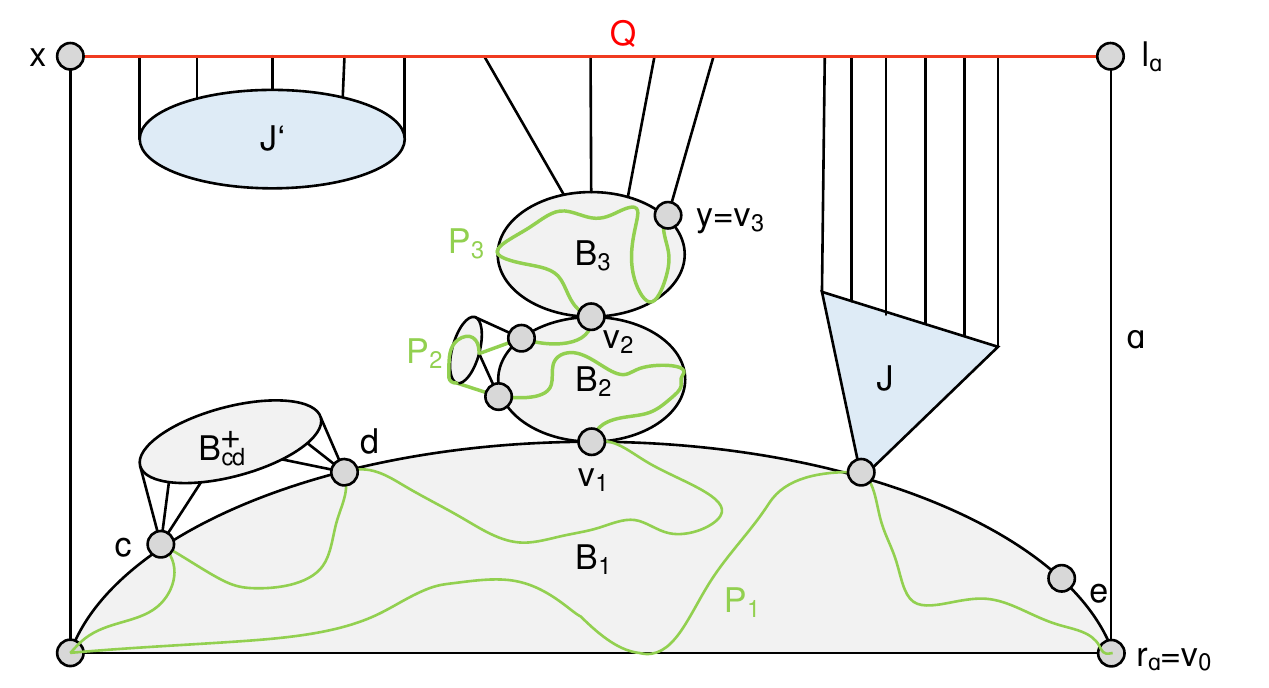}
\caption{The paths $Q$ and $P=P_1 \cup P_2 \cup P_3$, the subgraph $H$ of $G$ and its minimal chain of blocks $K = B_1,B_2,B_3$, and a $(K \cup C_G)$-bridge $J$. A $(K \cup C_G)$-bridge like $J'$ cannot exist due to Lemmas~\ref{lemma:2separators} and~\ref{lemma:isolated bridge}.}
\label{fig:HandB}
\end{figure}

Consider a component $A$ of $H$ that does not contain $B_1$. Then the neighborhood of $A$ in $G$ is in $Q$ and must contain a 2-separator of $G$ due to planarity. Hence, either $y \in A$ and we can apply Lemma~\ref{lemma:2separators} or $y \notin A$ and we can apply Lemma~\ref{lemma:isolated bridge}. Since both contradicts our assumptions, $H$ is connected and contains $B_1$ and $y$. Let $K$ be the minimal plane chain of blocks $B_1,\dots,B_l$ of $H$ that contains $B_1$ and $y$ (hence, $y \in B_l$). Let $v_i$ be the intersection of $B_i$ and $B_{i+1}$ for $1 \leq i \leq l-1$; in addition, we set $v_0:=r_\alpha$ and $v_l:=y$. 

Consider any $(K \cup C_G)$-bridge $J$. Since Lemma~\ref{lemma:isolated bridge} cannot be applied, $J$ has an attachment $v_J \in K$. Further, $J$ cannot have two attachments in $K$, as this would contradict the maximality of the blocks in $K$. Let $C(J)$ be the shortest path in $C_G$ that contains all vertices in $J \cap C_G$ and does not contain $r_\alpha$ as inner vertex (here, $r_\alpha$ serves as a \emph{reference vertex} of $C_G$ that ensures that the paths $C(J)$ are chosen consistently on $C_G$). Let $l_J$ be the endvertex of $C(J)$ whose counterclockwise incident edge in $C_G$ is not in $C(J)$ and let $r_J$ be the other endvertex of $C(J)$.

\subsubsection{Decomposing along Maximal 2-Separators}
At this point we will deviate from the original proof of Theorem~\ref{thm:Thomassen} in~\cite{Thomassen1983}, which continues with an induction on every block of $K$ that leads to overlapping subgraphs in a later step of the proof. Instead, we will show that a $v_0$-$v_l$-path $P$ of $K$ can be found iteratively without having overlapping subgraphs in the induction.

For every block $B_i \neq B_1$ of $K$, we choose an arbitrary edge $\alpha_i = l_{\alpha_i}r_{\alpha_i}$ in $C_{B_i}$. In $B_1$ we choose $\alpha_1$ such that $\alpha_1$ is incident to the endvertex of $C_{B_1} \cap C_G$ that is not $r_\alpha$.
As done for $G$, we may assume for every $B_i$ that $l_{\alpha_i}$ is the endvertex of $\alpha_i$ that is contained in $v_{i-1}C_{B_i}\alpha_i$ and that $v_i \notin v_{i-1}C_{B_i}r_{\alpha_i}$ and (by mirroring the planar embedding and interchanging $v_i$ and $v_{i-1}$ if necessary). However, unlike $G$, every $B_i$ may satisfy the prerequisites of Lemmas~\ref{lemma:2separators} and~\ref{lemma:isolated bridge}.
By induction hypothesis of Theorem~\ref{thm:Thomassen}, $B_i$ contains a $v_{i-1}$-$\alpha_i$-$v_i$-path $P_i$. In~\cite{Thomassen1983}, the outer $P_i$-bridges of $B_i$ are not only being processed during this induction step, but also in a later induction step when modifying $Q$. We avoid such overlapping subgraphs by using a new iterative structural decomposition of $B_i$ along certain 2-separators on $C_{B_i}$. This decomposition allows us to construct $P_i$ iteratively such that the outer $P_i$-bridges of $B_i$ are not part of the induction applied on $B_i$. Eventually, $P := \bigcup_{1 \leq i \leq l} P_i$ will be the desired $v_0$-$v_l$-path of $K$.

The outline is as follows. After explaining the basic split operation that is used by our decomposition, we give new insights into the structure of the Tutte paths $P_i$ of the blocks $B_i$. These are used in Section~\ref{sec:ConstructionOfP} to define the iterative decomposition of every block $B_i$ into a modified block $\eta(B_i)$, which will in turn allow to compute every $P_i$ step-by-step. This gives the first part $P$ of the desired Tutte path $x$-$\alpha$-$y$ of $G$. Subsequently, we will show how the remaining path $Q$ can be modified to obtain the second part.

For a 2-separator $\{c,d\} \subseteq C_B$ of a block $B$, let $B_{cd}^+$ be the $\{c,d\}$-bridge of $B$ that contains $cC_Bd$ and let $B_{cd}^-$ be the union of all other $\{c,d\}$-bridges of $B$ (note that $B_{cd}^+$ contains the edge $cd$ if and only if $B_{cd}^+$ is trivial); see Figure~\ref{fig:HandB}. For a 2-separator $\{c,d\} \subseteq C_B$, let \emph{splitting off} $B_{cd}^+$ (from $B$) be the operation that deletes all internal vertices of $B_{cd}^+$ from $B$ and adds the edge $cd$ if $cd$ does not already exist in $B$. Our decomposition proceeds by iteratively splitting off bridges $B_{cd}^+$ from the blocks $B_i$ of $K$ for suitable 2-separators $\{c,d\} \subseteq C_{B_i}$ (we omit the subscript $i$ in such bridges $B_{cd}^+$, as it is determined by $c$ and $d$). The following lemma restricts these 2-separators to be contained in specific parts of the outer face.

\begin{lemma}\label{lemma:bridgeattachments}
Let $P'$ be a Tutte path of a block $B$ such that $P'$ contains an edge $\alpha'$ and two vertices $a,b \in C_B$. Then every outer $P'$-bridge $J$ of $B$ has both attachments in $aC_Bb$ or both in $bC_Ba$. If additionally $J$ is non-trivial and $P' \neq \alpha'$, the attachments of $J$ form a 2-separator of $B$.
\end{lemma}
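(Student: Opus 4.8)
The plan is to prove Lemma~\ref{lemma:bridgeattachments} in two parts, exactly mirroring its two sentences. Throughout, $B$ is a block (so $2$-connected, or a single edge/vertex — but the statement is only interesting when $C_B$ is a cycle), and $P'$ is a Tutte path of $B$ that passes through a prescribed edge $\alpha'$ and through two distinguished vertices $a,b\in C_B$. Since $a,b\in P'\cap C_B$, they split $C_B$ into the two arcs $aC_Bb$ and $bC_Ba$.

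\medskip
\noindent\textbf{Step 1: the attachments of an outer $P'$-bridge lie in one arc.} First I would recall that, by definition of a Tutte path, an \emph{outer} $P'$-bridge $J$ of $B$ (i.e.\ one containing an edge of $C_B$) has at most two attachments. Fix such a $J$; I want to show its attachments cannot be separated by $\{a,b\}$ on the outer cycle, i.e.\ they both lie in $aC_Bb$ or both in $bC_Ba$. Suppose not: say $J$ has attachments $u\in aC_Bb-\{a,b\}$-type position on one side and $w$ strictly on the other arc. Since $J$ contains an edge of $C_B$ and is connected, $J$ contains a whole sub-arc of $C_B$ between consecutive attachments; but the vertices $a$ and $b$ lie on $P'$, hence are \emph{not} internal vertices of any $P'$-bridge. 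Therefore the arc of $C_B$ contained in $J$ avoids both $a$ and $b$ as internal vertices, which forces both attachments of $J$ into the same closed arc $aC_Bb$ or $bC_Ba$. The cleanest way to phrase this: the edge set $E(C_B)\cap E(J)$ is a union of sub-paths of $C_B$ whose internal vertices are internal to $J$ and hence disjoint from $\{a,b\}$; since $J$ has only its (at most two) attachments on $C_B$ and they are the endpoints of these sub-paths, the whole of $E(C_B)\cap E(J)$ lies within one of the two arcs, and in particular both attachments do.

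\medskip
\noindent\textbf{Step 2: if $J$ is non-trivial and $P'\neq\alpha'$, its attachments form a $2$-separator of $B$.} Here I would argue as follows. If $J$ had only one attachment $v$, then since $B$ is $2$-connected (it must be, because $P'\neq\alpha'$ means $B$ has at least three vertices and is a block, hence $2$-connected) removing $v$ would disconnect the internal vertices of $J$ from the rest — contradicting $2$-connectivity unless $J$ is trivial, which it is not by hypothesis. So $J$ has exactly two attachments, say $c$ and $d$, both on $C_B$ by Step~1. Since $J$ is non-trivial it has an internal vertex, and every path from that internal vertex to $V(B)-V(J)$ must pass through $c$ or $d$; hence $\{c,d\}$ separates the internal vertices of $J$ from the rest of $B$, and $B-\{c,d\}$ is disconnected provided $V(B)\neq V(J)\cup\{c,d\}$ contains something — which it does, because $\alpha'\subseteq P'$ and $P'$ is disjoint from $\mathrm{inner}(J)$, so $\alpha'$ (together with the rest of $P'$) lives in $B-\mathrm{inner}(J)$, giving at least one vertex outside $J$. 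Thus $\{c,d\}$ is genuinely a $2$-separator.

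\medskip
\noindent I do not expect a serious obstacle here; this is a structural warm-up lemma. The one place that needs a little care is making sure $B$ is actually $2$-connected when $P'\neq\alpha'$ (so that "one attachment" is impossible and so that the separation in Step~2 is non-degenerate): I would handle this by noting that $P'\neq\alpha'$ forces $|V(P')|\geq 3$, hence $|V(B)|\geq 3$, and a block on at least three vertices is $2$-connected by definition. The rest is just unwinding the definitions of Tutte path, $H$-bridge, and attachment.
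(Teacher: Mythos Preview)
Your proposal is correct and takes essentially the same approach as the paper. The paper phrases Step~1 slightly more directly---pick an edge $e\in J\cap C_B$, say $e\in aC_Bb$, and walk along $C_B$ from $e$ in both directions until hitting vertices of $P'$ (which exist since $a,b\in P'$); these are the at most two attachments, both in $aC_Bb$---but this is the same idea as your maximal sub-path argument, and Step~2 is identical in substance (using $P'\neq\alpha'$ to produce a third $P'$-vertex outside $\{c,d\}$, hence outside $J$).
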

\begin{proof}
Let $e$ be an edge in $J \cap C_B$ and assume without loss of generalization that $e \in aC_Bb$. Let $c$ and $d$ be the last and first vertices of the paths $aC_Be$ and $eC_Bb$, respectively, that are contained in $P'$ (these exist, as $a$ and $b$ are in $P'$). Then $J$ has attachments $c$ and $d$ and no further attachment, as $P'$ is a Tutte path. This gives the first claim. For the second claim, let $z$ be an internal vertex of $J$. Since $P' \neq \alpha'$, $P'$ contains a third vertex $c \notin \{a,b\}$. As $c$ is not contained in $J$, $\{c,d\}$ separates $z$ and $c$ and is thus a 2-separator of $B$.
\end{proof}

For every block $B_i \neq B_l$ of $K$, let the \emph{boundary points} of $B_i$ be the vertices $v_{i-1},l_{\alpha_i},r_{\alpha_i},v_i$ and let the \emph{boundary parts} of $B_i$ be the inclusion-wise maximal paths of $C_{B_i}$ that do not contain any boundary point as inner vertex (see Figure~\ref{fig:boundary_parts}a; note that boundary parts may be single vertices). Hence, every boundary point will be contained in any possible $v_{i-1}$-$\alpha_i$-$v_i$-path $P_i$, and there are exactly four boundary parts, one of which is $\alpha_i$. 
Now, if $P_i \neq \alpha_i$, applying Lemma~\ref{lemma:bridgeattachments} for all boundary points $a,b \in \{v_{i-1},l_{\alpha_i},r_{\alpha_i},v_i\}$ and $\alpha' := \alpha_i$ implies that the two attachments of every outer non-trivial $P_i$-bridge of $B_i$ form a 2-separator that is contained in one boundary part of $B_i$. For this reason, our decomposition will split off only 2-separators that are contained in boundary parts.

\begin{figure}[!htb]
\centering
\includegraphics[width=0.6\textwidth]{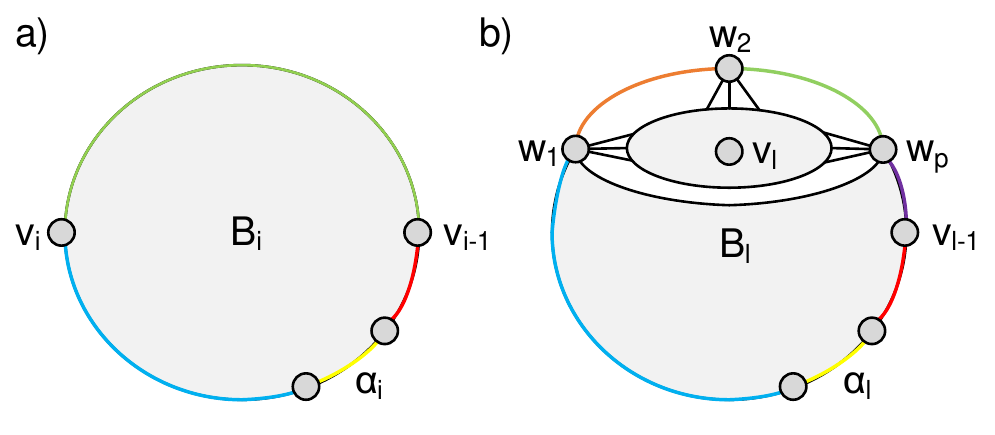}
\caption{a) The boundary points and -parts of a block $B_i \neq B_l$. b) An instance in which the block $B_l$ contains a 2-separator $\{w_1,w_p\}$ that splits off $v_l$.}
\label{fig:boundary_parts}
\end{figure}

In principle, we will do the same for the block $B_l$. If $v_l \in C_{B_l}$, we define the boundary points of $B_l$ just as before for $i < l$. However, $B_l$ is special in the sense that $v_l$ may not be in $C_{B_l}$. Then we have to ensure that we do not loose $v_l$ when splitting off a 2-separator, as $v_l$ is supposed to be contained in $P_l$ (see Figure~\ref{fig:boundary_parts}b).

To this end, consider for $v_l \notin C_{B_l}$ the 2-separator $\{w_1,w_p\} \subseteq C_{B_l}$ of $B_l$ such that $B^+_{w_1,w_p}$ contains $v_l$, the path $w_1C_{B_l}w_p$ is contained in one of the paths in $\{v_{l-1}C_{B_l}\alpha_l,\alpha_l,\alpha_lC_{B_l}v_{l-1}\}$ and $w_1C_{B_l}w_p$ is of minimal length if such a 2-separator exists.
The restriction to these three parts of the boundary is again motivated by Lemma~\ref{lemma:bridgeattachments}: If $P_l \neq \alpha_l$ and there is an outer non-trivial $P_l$-bridge of $B_l$, its two attachments are in $P_l$ and thus we only have to split off 2-separators that are in one of these three paths to avoid these $P_l$-bridges in the induction. If the 2-separator $\{w_1,w_p\}$ exists, let $w_1,\dots,w_p$ be the $p \geq 2$ attachments of the $w_1C_{B_l}w_p$-bridge of $B_l$ that contains $v_l$, in the order of appearance in $w_1C_{B_i}w_p$; otherwise, let for notational convenience $w_1 := \dots := w_p := l_{\alpha_i}$. In the case $v_l \notin C_{B_l}$, let the \emph{boundary points} of $B_l$ be $v_{l-1},l_{\alpha_l},r_{\alpha_l},w_1,\dots,w_p$ and let the \emph{boundary parts} of $B_l$ be the inclusion-wise maximal paths of $C_{B_l}$ that do not contain any boundary point as inner vertex.

\begin{lemma}\label{lemma:w1_wp}
If the $2$-separator $\{w_1,w_p\}$ exists, it is unique and every $v_{l-1}$-$\alpha_l$-$v_l$-path $P_l$ of $B_l$ contains the vertices $w_1,\dots,w_p$.
\end{lemma}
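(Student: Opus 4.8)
The plan is to prove the two claims of Lemma~\ref{lemma:w1_wp} separately, both by exploiting the minimality in the definition of $\{w_1,w_p\}$ together with Lemma~\ref{lemma:bridgeattachments}.

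First I would prove that every $v_{l-1}$-$\alpha_l$-$v_l$-path $P_l$ of $B_l$ contains all of $w_1,\dots,w_p$. Let $L$ denote the $\{w_1,w_p\}$-bridge of $B_l$ that contains $v_l$ (so $w_1,\dots,w_p$ are its attachments and $v_l$ is an internal vertex of $L$ since $v_l \notin C_{B_l}$). Suppose for contradiction that some $w_j$ is not on $P_l$; then $w_j$ lies in the interior of some $P_l$-bridge $N$ of $B_l$. I would first argue $P_l \neq \alpha_l$: since $v_l\notin C_{B_l}$ and $\alpha_l \subseteq C_{B_l}$, the path $\alpha_l$ cannot contain $v_l$, so $P_l$ must be a genuine path with at least three vertices. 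Now the bridge $N$ contains an interior vertex of $L$, hence $N \cap L$ is nonempty in the interior of $L$, so by planarity the attachments of $N$ must "sit around" part of $L$; in particular $N$ is an outer $P_l$-bridge (its attachments sandwich $w_j \in w_1C_{B_l}w_p$, which is a subpath of one of $v_{l-1}C_{B_l}\alpha_l, \alpha_l, \alpha_lC_{B_l}v_{l-1}$, hence $N\cap C_{B_l}\neq\emptyset$). Then $N$ is a non-trivial outer $P_l$-bridge with $P_l\neq\alpha_l$, so by Lemma~\ref{lemma:bridgeattachments} (applied with $a,b$ ranging over $v_{l-1},l_{\alpha_l},r_{\alpha_l}$ and the $w$'s, and $\alpha'=\alpha_l$) its two attachments $\{c,d\}$ form a $2$-separator of $B_l$ contained in a single boundary part, in particular inside one of $v_{l-1}C_{B_l}\alpha_l$, $\alpha_l$, $\alpha_lC_{B_l}v_{l-1}$, and with $cC_{B_l}d$ a proper subpath of $w_1C_{B_l}w_p$ (properness because $w_1,w_p\in P_l$ while $w_j\notin P_l$). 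But $B^+_{c,d}$ contains $w_j$, which is an attachment of $L$, and hence $B^+_{c,d}$ contains $v_l$ as well (the interior of $L$, and in particular $v_l$, is "behind" $\{c,d\}$ because $\{c,d\}$ separates $w_j$ from the third vertex $c'\in P_l$, and $v_l$ lies on the $w_j$-side). This contradicts the minimality of $w_1C_{B_l}w_p$. Hence every $w_j\in P_l$.

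Next I would prove uniqueness. Suppose $\{w_1,w_p\}$ and $\{w_1',w_{p'}'\}$ are two distinct $2$-separators satisfying the defining conditions, each with its bridge $L$, resp.\ $L'$, containing $v_l$ and each with $w_1C_{B_l}w_p$, resp.\ $w_1'C_{B_l}w_{p'}'$, of minimal length and contained in one of the three allowed parts of the boundary. By minimality, the two paths $w_1C_{B_l}w_p$ and $w_1'C_{B_l}w_{p'}'$ on $C_{B_l}$ cannot be nested; and I would rule out that they properly cross or are disjoint by a planarity argument: both bridges $L, L'$ contain $v_l$ in their interior, so $L\cup L'$ is connected and "lives" in the region of $B_l$ cut off on one side; if the two separating paths were disjoint on $C_{B_l}$, then the part of $C_{B_l}$ between them would be separated from $v_l$ by one of the two $2$-separators, forcing that $2$-separator's path to be the longer one, again contradicting minimality of the other. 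A symmetric contradiction handles the crossing case. Hence $w_1C_{B_l}w_p = w_1'C_{B_l}w_{p'}'$, and since the attachment sequence $w_1,\dots,w_p$ is then uniquely determined as the attachments of \emph{the} bridge containing $v_l$ at the endpoints of this path, the two separators coincide.

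The main obstacle I expect is the planarity bookkeeping in both parts: precisely formalizing that an interior vertex of the bridge $L$ being swallowed by a $P_l$-bridge $N$ forces $N$ to be outer and forces its $2$-separator to lie strictly inside $w_1C_{B_l}w_p$ while still cutting off $v_l$ — this "$v_l$ is on the far side" claim is where the embedding and the definition of $B^+_{cd}$ (the side containing $cC_{B_l}d$) have to be combined carefully. Everything else is a routine application of Lemma~\ref{lemma:bridgeattachments} and the minimality clause.
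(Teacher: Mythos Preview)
Your argument for the claim that every $w_j$ lies on $P_l$ contains a circularity. You invoke Lemma~\ref{lemma:bridgeattachments} ``with $a,b$ ranging over $v_{l-1},l_{\alpha_l},r_{\alpha_l}$ and the $w$'s'' and then assert that $cC_{B_l}d$ is a \emph{proper} subpath of $w_1C_{B_l}w_p$ ``because $w_1,w_p\in P_l$''. But membership of $w_1$ and $w_p$ in $P_l$ is precisely (a special case of) what you are trying to prove; nothing earlier establishes it, and it does not follow from $\{w_1,w_p\}$ being a $2$-separator alone, since $P_l$ could in principle enter $B^+_{w_1,w_p}$ through just one of the two vertices. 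Without $w_1,w_p\in P_l$ you cannot pin $\{c,d\}$ to a proper subarc of $w_1C_{B_l}w_p$, so your minimality contradiction does not fire. Even granting $v_l\in B^+_{c,d}$, minimality only yields $|w_1C_{B_l}w_p|\le |cC_{B_l}d|$, the wrong direction. (There is also a terminological slip: $L$ must be the $w_1C_{B_l}w_p$-bridge containing $v_l$, not the $\{w_1,w_p\}$-bridge; only the former has $w_1,\dots,w_p$ as its attachments.)

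The paper sidesteps this entirely. It takes the same outer $P_l$-bridge $J'$ (your $N$) and the same Lemma~\ref{lemma:bridgeattachments}, but instead of manufacturing a smaller $2$-separator it observes that the $w_1C_{B_l}w_p$-bridge $J$ containing $v_l$ supplies a path from $w_j$ to $v_l\in P_l$ whose only vertex on $C_{B_l}$ is $w_j$ itself. The first point where this path meets $P_l$ is therefore an attachment of $J'$ that is \emph{not} on $C_{B_l}$, directly contradicting Lemma~\ref{lemma:bridgeattachments}. This handles all $w_j$, including $w_1$ and $w_p$, uniformly. Your uniqueness argument via case analysis (nested/crossing/disjoint arcs) is in the right spirit but vaguer than necessary; the paper simply notes that the interior of $J$ is connected and avoids $C_{B_l}$, so any competing $2$-separator $\{w_1',w_{p'}'\}\subseteq C_{B_l}$ that cuts off $v_l$ must have $w_1C_{B_l}w_p\subset w_1'C_{B_l}w_{p'}'$, contradicting minimality of the latter.
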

\begin{proof}
Let $J \subset B^+_{w_1,w_p}$ be the $w_1C_{B_l}w_p$-bridge of $B_l$ that contains $v_l$ and has attachments $w_1,\dots,w_p$. For the first claim, assume to the contrary that there is a 2-separator $\{w_1',w_{p'}'\} \neq \{w_1,w_p\}$ of $B_l$ having the same properties as $\{w_1,w_p\}$. By the connectivity of $J$ and the property that restricts $\{w_1',w_{p'}'\}$ to the three parts of the boundary of $B_l$, $\{w_1',w_{p'}'\}$ may only split off a subgraph containing $v_l$ if $w_1C_{B_l}w_p \subset w_1'C_{B_l}w_{p'}'$. This however contradicts the minimality of the length of $w_1'C_{B_l}w_{p'}'$.

For the second claim, let $P_l$ be any $v_{l-1}$-$\alpha_l$-$v_l$-path of $B_l$. Assume to the contrary that $w_j \not \in P_l$ for some $j \in \{1,\dots,p\}$. Then $w_j$ is an internal vertex of an outer $P_l$-bridge $J'$ of $B_l$. By Lemma~\ref{lemma:bridgeattachments}, both attachments of $J'$ are in $C_{B_l}$. However, since $J$ contains a path from $w_j \notin P_l$ to $v_j \in P_l$ in which only $w_j$ is in $C_{B_l}$, at least one attachment of $J'$ is not in $C_{B_l}$, which gives a contradiction.
\end{proof}

Lemma~\ref{lemma:w1_wp} ensures that the boundary points of any $B_i$ are contained in every Tutte path $P_i$ of $B_i$. Every block $B_i \neq B_l$ has exactly four boundary parts and $B_l$ has at least three boundary parts (three if $v_l \notin C_{B_l}$ and $\{w_1,w_p\}$ does not exist), some of which may have length zero. For every $1 \leq i \leq l$, the boundary parts of $B_i$ partition $C_{B_i}$, and one of them consists of $\alpha_i$. This implies in particular that $B_i$ has at least two boundary parts of length at least one unless $B_i = \alpha_i$.
We need some notation to break symmetries on boundary parts. For a boundary part $Z$ of a block $B$, let $\{c,d\}^* \subseteq Z$ denote two elements $c$ and $d$ (vertices or edges) such that $cC_Bd$ is contained in $Z$ (this notation orders $c$ and $d$ consistently to the clockwise orientation of $C_B$); if $cC_Bd$ is contained in some boundary part of $B$ that is not specified, we just write $\{c,d\}^* \subseteq C_B$.

We now define which 2-separators are split off in our decomposition. Let a 2-separator $\{c,d\}^* \subseteq C_B$ of $B$ be \emph{maximal in a boundary part $Z$ of $B$} if $\{c,d\} \subseteq Z$ and $Z$ does not contain a 2-separator $\{c',d'\}$ of $B$ such that $cC_Bd \subset c'C_Bd'$. Let a 2-separator $\{c,d\}^* \subseteq C_B$ of $B$ be \emph{maximal} if $\{c,d\}^*$ is maximal with respect to at least one boundary part of $B$. Hence, every maximal 2-separator is contained in a boundary part, and 2-separators that are contained in a boundary part are maximal if they are not properly ``enclosed'' by other 2-separators on the same boundary part.

Let two maximal 2-separators $\{c,d\}^*$ and $\{c',d'\}^*$ of $B$ \emph{interlace} if $\{c,d\} \cap \{c',d'\} = \emptyset$ and their vertices appear in the order $c,c',d,d'$ or $c',c,d',d$ on $C_B$ (in particular, both 2-separators are contained in the same boundary part of $B$). In general, maximal 2-separators of a block $B_i$ of $K$ may interlace; for example, consider the two maximal 2-separators when $B_i$ is a cycle on four vertices in which $v_{i-1}$ and $v_i$ are adjacent. However, the following lemma shows that such interlacing is only possible for very specific configurations.

\begin{lemma}\label{lemma:nointerlacing}
Let $\{c,d\}^*$ and $\{c',d'\}^*$ be interlacing 2-separators of $B_i$ in a boundary part $Z$ such that $c' \in cC_{B_i}d$ and at least one of them is maximal. Then $d'C_{B_i}c = v_{i-1}v_i = \alpha_i$.
\end{lemma}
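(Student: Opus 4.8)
The plan is to argue by contradiction: suppose $\{c,d\}^*$ and $\{c',d'\}^*$ interlace in the boundary part $Z$ with $c' \in cC_{B_i}d$, so that the cyclic order along $C_{B_i}$ is $c, c', d, d'$, and suppose that $d'C_{B_i}c$ is not the single edge $\alpha_i = v_{i-1}v_i$. I would first extract from the interlacing hypothesis that the whole configuration lives inside one boundary part $Z$, hence $Z$ is one of the (at most four) boundary parts and its two endpoints are consecutive boundary points among $\{v_{i-1}, l_{\alpha_i}, r_{\alpha_i}, v_i\}$ (or the $w$-points if $i=l$). The key structural fact to exploit is that a boundary part contains no boundary point as an \emph{inner} vertex, so no $v_{i-1}, v_i$ or endvertex of $\alpha_i$ lies strictly between $c$ and $d'$ along the arc $cC_{B_i}d'$ that contains $c'$ and $d$.

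**The core contradiction via planarity of the bridges.** Since $\{c,d\}$ is a $2$-separator, there is a $\{c,d\}$-bridge $L$ with an internal vertex, all of whose internal vertices lie strictly inside $C_{B_i}$ (because $cC_{B_i}d \subseteq Z$ and $B^-_{cd}$, or the non-$C_{B_i}$ part, is what gets separated); similarly $\{c',d'\}$ has such a bridge $L'$. The arcs $cC_{B_i}d$ and $c'C_{B_i}d'$ cross in the cyclic order, so $L$ and $L'$ are forced by planarity into conflicting positions unless they can "escape" through the complementary arc $d'C_{B_i}c$. Concretely, I would show that the existence of interlacing separators whose internal parts are both drawn inside $C_{B_i}$ forces a path in $B_i$ connecting an internal vertex of $L$ to an internal vertex of $L'$ that must pass through $\{c,d\} \cup \{c',d'\}$, and combining with the maximality of one of the two separators (say $\{c,d\}^*$ is maximal in $Z$) one derives that everything outside the arc $d'C_{B_i}c$ collapses: the only vertices of $B_i$ off the arc $d'C_{B_i}c$ together with $\{c,d,c',d'\}$ is exactly this small configuration. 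Then maximality of $\{c,d\}$ in $Z$ rules out any separator $\{c',d'\}$ with $cC_{B_i}d \subset c'C_{B_i}d'$, so we must instead be in the symmetric case $c'C_{B_i}d' \subseteq$ the small residual piece, which pins $d'C_{B_i}c$ down to a single edge. Since $d'$ and $c$ are the two endpoints of $Z$ adjacent to boundary points on the other side, and no boundary point lies between them, that single edge must join two consecutive boundary points; checking the four possibilities for which boundary parts are involved leaves only $d'C_{B_i}c = v_{i-1}v_i$, and this being an edge means $v_{i-1}v_i = \alpha_i$ (it is the boundary part of length one carrying $\alpha_i$, or equals $\alpha_i$ by the labeling conventions fixed earlier).

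**Where the work is.** The main obstacle I anticipate is making the planarity argument airtight: interlacing $2$-separators inside a plane block is exactly the situation that "normally" cannot happen, so the proof must carefully identify the unique escape route and show that it forces the residual graph between $d'$ and $c$ to be a single edge rather than merely "small." I would handle this by fixing the plane embedding, noting that $cC_{B_i}d$ and $c'C_{B_i}d'$ are arcs of the outer cycle $C_{B_i}$ that interleave, and arguing that the union of the $\{c,d\}$-bridge $L$ and the $\{c',d'\}$-bridge $L'$ with internal vertices would (if both non-degenerate on the $Z$-side) create a $K_4$-like crossing pattern in the disk bounded by $C_{B_i}$; the only way out is for one of these bridges to actually be the rest of $B_i$ reaching around through $d'C_{B_i}c$, and then maximality forces that reach-around part to be trivial, i.e., a single edge $d'c$. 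The remaining case analysis over which boundary part $Z$ is, and matching $d'C_{B_i}c$ to $v_{i-1}v_i$, is routine once the embedding picture is nailed down; I would also double-check the degenerate subcase where $c=d'$ is impossible here because $\{c,d\}\cap\{c',d'\}=\emptyset$ was assumed in the definition of interlacing.
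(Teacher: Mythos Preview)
Your proposal gestures at the right ingredients (planarity, maximality) but never isolates the concrete step that actually closes the argument, and the ``$K_4$-like crossing pattern'' discussion is too vague to carry the weight you put on it. The paper's proof is much more direct: it shows that if $inner(d'C_{B_i}c) \neq \emptyset$, then $\{c,d'\}$ is itself a $2$-separator of $B_i$ contained in $Z$, and since $cC_{B_i}d' \supsetneq cC_{B_i}d$ and $cC_{B_i}d' \supsetneq c'C_{B_i}d'$, this immediately contradicts the maximality of whichever of $\{c,d\}$, $\{c',d'\}$ was assumed maximal. To make $\{c,d'\}$ a separator one first establishes that $B_i - \{c,d\}$ (and symmetrically $B_i - \{c',d'\}$) has \emph{exactly two} components, namely those containing $inner(cC_{B_i}d)$ and $inner(dC_{B_i}c)$; this is where planarity is really used, and your sketch skips it entirely. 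Your talk of bridges $L$, $L'$ ``with internal vertices strictly inside $C_{B_i}$'' and paths between them never gets to this component count, and the phrase ``maximality of $\{c,d\}$ in $Z$ rules out any separator $\{c',d'\}$ with $cC_{B_i}d \subset c'C_{B_i}d'$'' is simply not the configuration at hand (neither arc contains the other).

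Once $d'C_{B_i}c$ is forced to be a single edge, the identification $d'C_{B_i}c = \alpha_i = v_{i-1}v_i$ is not quite the routine four-case check you describe. Since $Z$ already contains four distinct vertices it is not the boundary part $\alpha_i$, so the single remaining edge must be $\alpha_i$, and then $c$ and $d'$ are the only boundary points of $B_i$. For $i \neq l$ this forces $\{c,d'\} = \{v_{i-1},v_i\}$ directly, but for $i = l$ with $v_l \notin C_{B_l}$ one must separately rule out the possibility $\{c,d'\} = \{v_{l-1}, w_2\}$ with $w_1 = v_{l-1}$ and $w_p = w_2$; the paper does this by observing that $\{c,d'\}$ would then separate $v_l$ from $inner(cC_{B_l}d')$, again contradicting maximality. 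You should not dismiss this endgame as automatic.
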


\begin{proof}
Since $\{c,d\}$ is a 2-separator, $B_i-\{c,d\}$ has at least two components. We argue that there are exactly two. Otherwise, $B_i-\{c,d\}$ has a component that contains the inner vertices of a path $P'$ from $c$ to $d$ in $B_i - (C_{B_i} - \{c,d\})$. Then $B_i-\{c',d'\}$ has a component containing $(P' \cup C_{B_i})-\{c',d'\}$ and no second component, as this would contain the inner vertices of a path from $c'$ to $d'$ in $B_i - ((P' \cup C_{B_i})-\{c',d'\})$, which does not exist due to planarity. Since this contradicts that $\{c',d'\}$ is a 2-separator, we conclude that $B_i-\{c,d\}$, and by symmetry $B_i-\{c',d'\}$, have exactly two components.

By the same argument, $inner(cC_{B_i}d)$ and $inner(dC_{B_i}c)$ are contained in different components of $B_i-\{c,d\}$ and the same holds for $inner(c'C_{B_i}d')$ and $inner(d'C_{B_i}c')$ in $B_i-\{c',d'\}$. Hence, the component of $B_i-\{c,d'\}$ that contains $inner(cC_{B_i}d') \neq \emptyset$ does not intersect $inner(d'C_{B_i}c)$. If $inner(d'C_{B_i}c) \neq \emptyset$, this implies that $\{c,d'\} \subseteq Z$ is a 2-separator of $B_i$, which contradicts the maximality of $\{c,d\}$ or of $\{c',d'\}$. Hence, $inner(d'C_{B_i}c) = \emptyset$, which implies that $d'C_{B_i}c$ is an edge. As $Z$ is not an edge, $d'C_{B_i}c = \alpha_i$. Since $c$ and $d'$ are the only boundary points of $B_i$, either $\{c,d'\} = \{v_{i-1},v_i\}$ or $B_i = B_l$, $v_l \notin C_{B_l}$, $\{c,d'\} = \{v_{i-1},w_2\}$, $v_{i-1} = w_1$ and $w_2=w_p$. However, the latter case is impossible, as then $\{c,d'\}$ would be a 2-separator that separates $inner(cC_{B_i}d') \neq \emptyset$ and $v_l$, which contradicts the maximality of $\{c,d\}$ or of $\{c',d'\}$. This gives the claim.
\end{proof}

If two maximal 2-separators interlace, Lemma~\ref{lemma:nointerlacing} thus ensures that these two are the only maximal 2-separators that may contain $v_{i-1}$ and $v_i$, respectively. This gives the following direct corollary.

\begin{corollary}\label{cor:nointerlacing}
Every block of $K$ has at most two maximal 2-separators that interlace.
\end{corollary}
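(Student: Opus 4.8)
The plan is to derive this directly from Lemma~\ref{lemma:nointerlacing} by arguing that no two distinct pairs of maximal 2-separators can simultaneously satisfy the rigid conclusion that the lemma forces. Suppose, for contradiction, that some block $B_i$ of $K$ has three (or more) maximal 2-separators that pairwise interlace, or — since the statement as phrased is about the \emph{number} of interlacing separators — that there exist three maximal 2-separators among which two distinct interlacing pairs occur. Pick two such interlacing pairs; each lies in a single boundary part of $B_i$ by the definition of interlacing, and Lemma~\ref{lemma:nointerlacing} applies to each.

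First I would invoke Lemma~\ref{lemma:nointerlacing} on the first interlacing pair $\{c,d\}^*$, $\{c',d'\}^*$ (labeled so that $c' \in cC_{B_i}d$): it yields $d'C_{B_i}c = v_{i-1}v_i = \alpha_i$. In particular $\{c,d'\}=\{v_{i-1},v_i\}$, so one of the two separators contains $v_{i-1}$ and the other contains $v_i$, and the short arc between them is exactly the boundary part $\alpha_i$. Applying the same lemma to the second interlacing pair forces \emph{those} two separators also to ``straddle'' $\alpha_i$ in precisely this way: one contains $v_{i-1}$, the other contains $v_i$, and the arc between their near-endpoints is $\alpha_i$. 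Since a maximal 2-separator containing $v_{i-1}$ (respectively $v_i$) and lying in the boundary part adjacent to $\alpha_i$ on that side is uniquely determined — it is the maximal 2-separator of that boundary part whose near-endpoint is $v_{i-1}$ (resp.\ $v_i$), and maximality pins it down by the ``not properly enclosed'' condition — the second pair must coincide with the first. Hence there is at most one interlacing pair, so at most two maximal 2-separators of $B_i$ interlace.

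The step I expect to be the main obstacle is the uniqueness claim in the middle: showing that the maximal 2-separator through $v_{i-1}$ (and the one through $v_i$) in the relevant boundary part is unique given the constraint from Lemma~\ref{lemma:nointerlacing}. One has to be careful that ``maximal in a boundary part'' together with ``contains the boundary point $v_{i-1}$ as its endpoint adjacent to $\alpha_i$'' really does single out one separator; a priori there could be several 2-separators sharing the endpoint $v_{i-1}$, and I would argue that among those the maximal one (largest enclosed arc $cC_{B_i}d$) is unique by a nesting/planarity argument analogous to the one in Lemma~\ref{lemma:nointerlacing} — two distinct 2-separators sharing an endpoint and both lying in one boundary part are nested rather than crossing, so ``maximal'' picks out the outermost, which is unique. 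Once that is in hand, the corollary follows immediately; the remainder is bookkeeping about which of $v_{i-1},v_i$ sits on which side of $\alpha_i$, and the degenerate $B_l$ case with $v_l\notin C_{B_l}$, which Lemma~\ref{lemma:nointerlacing} has already ruled out inside its own proof.
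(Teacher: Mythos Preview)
Your approach is correct and matches the paper's own one-sentence justification preceding the corollary: once Lemma~\ref{lemma:nointerlacing} forces any interlacing pair to have one separator through $v_{i-1}$ and the other through $v_i$ (with $\alpha_i=v_{i-1}v_i$ the only other boundary part), uniqueness of the maximal 2-separator in $Z$ having a fixed endpoint follows immediately from the definition of maximality (two 2-separators in $Z$ sharing an endpoint are nested, so only the outer one is maximal). The ``main obstacle'' you flag is therefore not really an obstacle; the paper treats it as immediate, and you can too.
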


Note that any boundary part may nevertheless contain arbitrarily many (pairwise non-interlacing) maximal 2-separators. The next lemma strengthens Lemma~\ref{lemma:bridgeattachments}.

\begin{lemma}\label{lemma:strongerbridgeattachments}
Let $P_i$ be a $v_{i-1}$-$\alpha_i$-$v_i$-path of $B_i$. Let $J$ be a non-trivial outer $P_i$-bridge of $B_i$ and let $e$ be an edge in $J \cap C_{B_i}$. Then the attachments of $J$ are contained in the boundary part of $B_i$ that contains $e$.
\end{lemma}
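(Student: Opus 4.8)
The plan is to obtain this as a refinement of Lemma~\ref{lemma:bridgeattachments}, obtained by applying that lemma with a specific, well-chosen pair of reference vertices. Note that Lemma~\ref{lemma:bridgeattachments}, applied to all pairs of consecutive boundary points of $B_i$, already tells us that the two attachments of a non-trivial outer $P_i$-bridge form a $2$-separator lying inside a \emph{single} boundary part; what is new in Lemma~\ref{lemma:strongerbridgeattachments} is to identify that boundary part as the one containing the prescribed edge $e$. So the whole content of the proof is to pin down the correct side.

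Concretely, I would let $Z$ be the boundary part of $B_i$ that contains $e$ (this is well defined, since the boundary parts partition $C_{B_i}$ and $e$ is an edge of $C_{B_i}$), and let $a$ and $b$ be the two endvertices of $Z$, labelled so that $Z = aC_{B_i}b$ in the clockwise orientation of $C_{B_i}$. The key preliminary observation is that $a,b \in P_i$: both are boundary points of $B_i$, and every boundary point lies on $P_i$ -- this is immediate for $v_{i-1}, l_{\alpha_i}, r_{\alpha_i}$ and for $v_i$ (when $v_i \in C_{B_i}$), and for the possible additional boundary points $w_1,\dots,w_p$ of $B_l$ it is precisely Lemma~\ref{lemma:w1_wp}. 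Now I would run the argument from the proof of Lemma~\ref{lemma:bridgeattachments} with this choice of $a,b$ and with $\alpha':=\alpha_i$: since $e$ already lies on the clockwise arc $aC_{B_i}b$, no ``without loss of generality'' choice of orientation is needed, and the argument yields directly that the two attachments of $J$ are the last vertex $c$ of $aC_{B_i}e$ on $P_i$ and the first vertex $d$ of $eC_{B_i}b$ on $P_i$ (both exist because $a,b \in P_i$), with no further attachment because $P_i$ is a Tutte path and $J$ is an outer bridge. As $c \in aC_{B_i}e \subseteq Z$ and $d \in eC_{B_i}b \subseteq Z$, both attachments of $J$ lie in $Z$, which is the assertion.

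The main obstacle -- in fact the only subtlety -- is that Lemma~\ref{lemma:bridgeattachments} is phrased symmetrically in the two arcs determined by $a$ and $b$, so a priori the attachments could land in the complement of $Z$ rather than in $Z$ itself. Choosing $a,b$ to be exactly the endvertices of the boundary part containing $e$ is what removes this ambiguity: it forces $e \in aC_{B_i}b$, and then the localisation of the attachments onto the sub-arcs $aC_{B_i}e$ and $eC_{B_i}b$, both contained in $Z$, is automatic. Two small points remain to be checked: that each $B_i$ is $2$-connected, so that $J$ indeed has exactly two attachments (making ``both attachments lie in $Z$'' the complete conclusion); and the degenerate case $P_i = \alpha_i$, where the four boundary points collapse onto the two endvertices of $\alpha_i$ -- but there $B_i - \{l_{\alpha_i},r_{\alpha_i}\}$ is connected, $J$ is the unique non-trivial outer $P_i$-bridge and its attachments are exactly $l_{\alpha_i}$ and $r_{\alpha_i}$, so the claim holds (and the argument above still goes through verbatim).
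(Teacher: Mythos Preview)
Your proof is correct and in fact takes a cleaner route than the paper's own argument. The paper does \emph{not} invoke Lemma~\ref{lemma:w1_wp} here; instead it first applies Lemma~\ref{lemma:bridgeattachments} with $a=l_{\alpha_i}$, $b=r_{\alpha_i}$ to get that $\{c,d\}$ is a $2$-separator on $C_{B_i}$, then argues separately (via the three-independent-paths property defining $w_1,\dots,w_p$) that $\{c,d\}$ cannot separate two of the $w_j$'s, and via further applications of Lemma~\ref{lemma:bridgeattachments} that it cannot separate two of the remaining boundary points. This leaves only the possibility that $\{c,d\}$ separates $\{w_1,\dots,w_p\}$ from the other boundary points, which is then excluded by a short case analysis using that $v_l\in P_l$ forces one of $w_1,w_p$ into $P_l$.

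Your approach short-circuits all of this: once Lemma~\ref{lemma:w1_wp} puts every boundary point on $P_i$, you may take $a,b$ to be the very endpoints of the boundary part $Z$ containing $e$, and then the constructive part of the proof of Lemma~\ref{lemma:bridgeattachments} (not merely its statement) pins the attachments to $aC_{B_i}e\cup eC_{B_i}b\subseteq Z$ in one stroke. What your approach buys is brevity and the elimination of the case analysis; what the paper's approach buys is independence from Lemma~\ref{lemma:w1_wp} at this point (it only needs the weaker fact that at least one of $w_1,w_p$ lies on $P_l$).

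One small inaccuracy that does not affect the conclusion: in the degenerate case $P_i=\alpha_i$ you assert that $B_i-\{l_{\alpha_i},r_{\alpha_i}\}$ is connected and hence $J$ is the \emph{unique} non-trivial outer $P_i$-bridge. This need not hold (think of two internally disjoint paths from $l_{\alpha_i}$ to $r_{\alpha_i}$ besides $\alpha_i$). But uniqueness is irrelevant: any $P_i$-bridge has its attachments inside $V(P_i)=\{l_{\alpha_i},r_{\alpha_i}\}$, and by $2$-connectivity of $B_i$ there are exactly two, namely the endpoints of $Z$---so the claim follows immediately, exactly as in the paper's treatment of this case.
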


\begin{proof}
Let $c$ and $d$ be the attachments of $J$ such that $e \in cC_{B_i}d$ and let $Z$ be the boundary part of $B_i$ that contains $e$. If $P_i = \alpha_i$, $v_{i-1} = l_{\alpha_i}$ and $v_i = r_{\alpha_i}$ are the only boundary points of $B_i$. Then $c$ and $d$ are the endvertices of $Z = v_iC_{B_i}v_{i-1} \ni e$, which gives the claim.

Otherwise, let $P_i \neq \alpha_i$. By applying Lemma~\ref{lemma:bridgeattachments} with $a=l_{\alpha_i}$ and $b=r_{\alpha_i}$, $\{c,d\}$ is a 2-separator of $B_i$ that is contained in $C_{B_i}$. By definition of $w_1,\dots,w_p$, there are at least three independent paths between every two of these vertices in $B_i$; thus, $\{c,d\}$ does not separate two vertices of $\{w_1,\dots,w_p\}$. Since all other possible boundary points ($v_{i-1},l_{\alpha_i},r_{\alpha_i},v_i$) are contained in $P_i$, applying Lemma~\ref{lemma:bridgeattachments} on these implies that $\{c,d\}$ does not separate two vertices of these remaining boundary points. Hence, if $\{c,d\} \not\subseteq Z$, we have $B_i = B_l$ and $v_l \notin C_{B_l}$ such that $\{c,d\}$ separates $\{w_1,\dots,w_p\}$ from the remaining boundary points. Since the $P_i$-bridge $J$ does not contain $\alpha_l \in P_i$, $cC_{B_l}d \subseteq J$ contains $\{w_1,\dots,w_p\}$, but $inner(cC_{B_l}d)$ does not contain any other boundary point. As $v_l \in P_i$, at least one of $\{w_1,w_p\}$ must be in $P_i$, say $w_p$ by symmetry. Then $d = w_p$, as $w_p \in P_i$ cannot be an internal vertex of $J$. Now, in both cases $p=2$ (which implies $c \neq w_1$, as $\{c,d\} \not\subseteq Z = w_1C_{B_l}w_2$) and $p \geq 3$, $J$ contains the edge of $P_i$ that is incident to $v_l$. As this contradicts that $J$ is a $P_i$-bridge, we conclude $\{c,d\} \subseteq Z$.
\end{proof}

Now we relate non-trivial outer $P_i$-bridges of $B_i$ to maximal 2-separators of $B_i$. In the next section, we will use this lemma as a fundamental tool for a decomposition into non-overlapping subgraphs that constructs $P$.

\begin{lemma}\label{lemma:2separators_vs_2bridges}
Let $P_i$ be a $v_{i-1}$-$\alpha_i$-$v_i$-path of $B_i$ such that $P_i \neq \alpha_i$. Then the maximal 2-separators of $B_i$ are contained in $P_i$ and do not interlace pairwise. If $J$ is a non-trivial outer $P_i$-bridge of $B_i$, there is a maximal 2-separator $\{c,d\}^*$ of $B_i$ such that $J \subseteq B_{cd}^+$.
\end{lemma}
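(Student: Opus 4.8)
The plan is to prove the three assertions in order, each time exploiting the earlier lemmas to reduce to structural facts about $C_{B_i}$. First I would establish that \emph{every} maximal $2$-separator $\{c,d\}^*$ of $B_i$ lies on $P_i$. By definition a maximal $2$-separator is contained in some boundary part $Z$ of $B_i$, and by Lemma~\ref{lemma:w1_wp} (together with the definition of boundary points) all boundary points of $B_i$ lie on $P_i$; hence both endvertices of $Z$ are on $P_i$. Now suppose some vertex of $\{c,d\}$, say $c$, were not on $P_i$. Then $c$ is an internal vertex of some outer $P_i$-bridge $J$ of $B_i$. If $J$ is trivial this is impossible since a trivial bridge is a single edge with both endvertices on $P_i$; so $J$ is non-trivial, and by Lemma~\ref{lemma:bridgeattachments} (applied with $a=l_{\alpha_i}$, $b=r_{\alpha_i}$, $\alpha'=\alpha_i$, using $P_i\neq\alpha_i$) the two attachments of $J$ form a $2$-separator $\{c^*,d^*\}$ of $B_i$ contained in a single boundary part, and by Lemma~\ref{lemma:strongerbridgeattachments} that boundary part is the one containing the edge of $J$ on $C_{B_i}$. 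Since $c$ is strictly between $c^*$ and $d^*$ on $C_{B_i}$, the $2$-separator $\{c,d\}$ would be strictly enclosed by $\{c^*,d^*\}$ on the same boundary part — contradicting maximality of $\{c,d\}^*$. (One has to check $\{c,d\}$ and $\{c^*,d^*\}$ really sit on the \emph{same} boundary part: both are contained in one boundary part, and they share the internal vertex $c$, so they do.) Hence $\{c,d\}\subseteq V(P_i)$.

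Next I would prove the pairwise non-interlacing. Suppose two maximal $2$-separators $\{c,d\}^*$ and $\{c',d'\}^*$ of $B_i$ interlace, with $c'\in cC_{B_i}d$. By Lemma~\ref{lemma:nointerlacing} this forces $d'C_{B_i}c = v_{i-1}v_i = \alpha_i$. In particular the edge $\alpha_i = d'c$ is an edge of $C_{B_i}$ whose endvertices are $v_{i-1}$ and $v_i$; but then by symmetry (interlacing also gives $d\in c'C_{B_i}d'$, so applying Lemma~\ref{lemma:nointerlacing} with the roles of the two separators swapped) we would get $dC_{B_i}c' = v_{i-1}v_i = \alpha_i$ as well — forcing $\{d,c'\} = \{v_{i-1},v_i\} = \{d',c\}$, which contradicts $\{c,d\}\cap\{c',d'\}=\emptyset$. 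Hence no two maximal $2$-separators of $B_i$ interlace.

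For the third assertion, let $J$ be a non-trivial outer $P_i$-bridge of $B_i$ and fix an edge $e\in J\cap C_{B_i}$. By Lemma~\ref{lemma:bridgeattachments} (with $a=l_{\alpha_i}$, $b=r_{\alpha_i}$, using $P_i\neq\alpha_i$) the attachments $c,d$ of $J$ form a $2$-separator of $B_i$ with $e\in cC_{B_i}d$, and by Lemma~\ref{lemma:strongerbridgeattachments} $\{c,d\}$ lies in the boundary part $Z$ containing $e$; so $\{c,d\}^*\subseteq C_{B_i}$. If $\{c,d\}^*$ is already maximal in $Z$, then the non-trivial bridge $J$ lies in $B_{cd}^+$ (since $J$ is connected, contains $e\in cC_{B_i}d$ but no other boundary point-free crossing, and $B_{cd}^+$ is precisely the $\{c,d\}$-bridge containing $cC_{B_i}d$); done. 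Otherwise $\{c,d\}$ is strictly enclosed by some $2$-separator $\{c',d'\}\subseteq Z$ with $cC_{B_i}d\subset c'C_{B_i}d'$, and iterating (the enclosing relation on $2$-separators within the fixed boundary part $Z$ is a strict partial order, so it has maximal elements) we reach a maximal $2$-separator $\{c',d'\}^*\subseteq Z$ with $cC_{B_i}d\subseteq c'C_{B_i}d'$; then $J\subseteq B_{cd}^+\subseteq B^+_{c'd'}$ gives the claim.

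The main obstacle I expect is the bookkeeping in the first part: carefully arguing that the bridge-induced $2$-separator $\{c^*,d^*\}$ coming from Lemma~\ref{lemma:bridgeattachments}/\ref{lemma:strongerbridgeattachments} genuinely lies on the \emph{same} boundary part as $\{c,d\}$ and \emph{strictly} encloses it, so that maximality is violated — in particular handling the degenerate situations where $B_i=B_l$, $v_l\notin C_{B_l}$, and some $w_j$'s coincide, where "boundary part'' and "enclosure'' have to be read with the extra boundary points $w_1,\dots,w_p$. The interlacing step is short given Lemma~\ref{lemma:nointerlacing}, and the bridge-containment step is essentially a definition chase once the $2$-separator has been located on a boundary part.
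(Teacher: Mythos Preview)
Your overall structure matches the paper's, but there is a genuine gap in the non-interlacing step, and the same missing ingredient leaves a hole in your first step.

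For non-interlacing, your ``second application'' of Lemma~\ref{lemma:nointerlacing} with the roles swapped does not go through: with clockwise order $c,c',d,d'$, swapping roles would require $c\in c'C_{B_i}d'$ (the hypothesis of the lemma is that the \emph{first} vertex, in the $*$-ordering, of the second separator lies in the interval of the first), and this fails. There is no symmetric conclusion $dC_{B_i}c'=\alpha_i$ available. The correct argument is much shorter and is what the paper does: a single application of Lemma~\ref{lemma:nointerlacing} already gives $\alpha_i=v_{i-1}v_i$, but if $\alpha_i=v_{i-1}v_i$ then any path from $v_{i-1}$ to $v_i$ containing the edge $v_{i-1}v_i$ is that edge itself, so $P_i=\alpha_i$, contradicting the hypothesis. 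In short, $P_i\neq\alpha_i$ forces $\alpha_i\neq v_{i-1}v_i$, and then Lemma~\ref{lemma:nointerlacing} rules out interlacing outright.

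This same observation is what your first part lacks. From ``$c$ is strictly between $c^*$ and $d^*$'' you jump to ``$\{c,d\}$ is strictly enclosed by $\{c^*,d^*\}$'', but this ignores the possibility that $\{c,d\}$ and $\{c^*,d^*\}$ interlace (e.g.\ order $c^*,c,d^*,d$ on $Z$). The paper closes this by first recording $\alpha_i\neq v_{i-1}v_i$ and then invoking Lemma~\ref{lemma:nointerlacing}---which only needs \emph{one} of the two $2$-separators to be maximal, and $\{c,d\}$ is---to exclude interlacing between $\{c,d\}$ and the bridge-attachment separator $\{c^*,d^*\}$. Once interlacing is excluded, strict enclosure (hence the contradiction to maximality) follows as you indicate. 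So the right order is: derive $\alpha_i\neq v_{i-1}v_i$ from $P_i\neq\alpha_i$, use this via Lemma~\ref{lemma:nointerlacing} both for the global non-interlacing claim and inside the containment argument, then finish as you do. Your third part is correct and matches the paper.
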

\begin{proof}
Consider the first claim. Since $P_i \neq \alpha_i$ implies $\alpha_i \neq v_{i-1}v_i$ by contraposition, no two maximal 2-separators interlace due to Lemma~\ref{lemma:nointerlacing}. Assume to the contrary that there is a maximal 2-separator $\{c,d\}^*$ of $B_i$ such that $c$ or $d$ is not in $P_i$, say $c \notin P_i$ by symmetry (otherwise, we may flip $B_i$). Let $Z$ be the boundary part of $B_i$ that contains $\{c,d\}$. Now consider the non-trivial $P_i$-bridge $J$ of $B_i$ that contains $c$ as internal vertex. Since $c \in Z$, $J$ contains an edge of $Z$ and is thus a non-trivial outer $P_i$-bridge. Let $c'$ and $d'$ be the attachments of $J$ such that $c'C_{B_i}d' \subseteq J$. By Lemma~\ref{lemma:bridgeattachments}, $\{c',d'\}$ is a 2-separator of $B_i$. By Lemma~\ref{lemma:strongerbridgeattachments}, $\{c',d'\} \subseteq Z$. Then Lemma~\ref{lemma:nointerlacing} implies that $\{c',d'\}$ and the maximal 2-separator $\{c,d\}$ do not interlace. Since $J$ contains the incident edge of $c$ in $dC_{B_i}c$, we conclude $cC_{B_i}d \subset c'C_{B_i}d'$, which contradicts the maximality of $\{c,d\}$. This shows the first claim holds.

For the second claim, let $c'$ and $d'$ be the attachments of the given $P_i$-bridge $J$ and let $Z$ be the boundary part of $B_i$ that contains some edge $e \in J \cap C_{B_i}$. By Lemma~\ref{lemma:bridgeattachments}, $\{c',d'\}$ is a 2-separator of $B_i$. By Lemma~\ref{lemma:strongerbridgeattachments}, $\{c',d'\} \subseteq Z$. Hence, there is a maximal 2-separator $\{c,d\}^*$ of $B_i$ in $Z$ such that $\{c',d'\} \subseteq cC_{B_i}d$ and we conclude $J \subseteq B_{cd}^+$.
\end{proof}

\subsubsection{Construction of P}\label{sec:ConstructionOfP}
We do not know $P_i$ in advance. However, Lemma~\ref{lemma:2separators_vs_2bridges} ensures under the condition $P_i \neq \alpha_i$ that we can split off every non-trivial outer bridge $J$ of $P_i$ by a maximal 2-separator, no matter how $P_i$ looks like. This allows us to construct $P_i$ iteratively by decomposing $B_i$ along its maximal 2-separators. Since maximal 2-separators only depend on the graph $B_i$ (in contrast to the paths $P_i$, which depend for example on the $K \cup C_G$-bridges), we can access them without knowing $P_i$ itself. We now give the details of such a decomposition.

\begin{definition}\label{def:etaK}
For every $1 \leq i \leq l$, let $\eta(B_i)$ be $\alpha_i$ if $\alpha_i = v_{i-1}v_i$ and otherwise the graph obtained from $B_i$ as follows: For every maximal 2-separator $\{c,d\}^*$ of $B_i$, split off $B_{cd}^+$. Moreover, let $\eta(K) := \eta(B_1) \cup \dots \cup \eta(B_l)$.
\end{definition}

If $\alpha_i \neq v_{i-1}v_i$, $\alpha_i$ cannot be a $v_{i-1}$-$\alpha_i$-$v_i$-path of $B_i$; hence, the maximal 2-separators of $K$ that were split in this definition do not interlace due to Lemma~\ref{lemma:2separators_vs_2bridges}. This implies that the order of the performed splits is irrelevant. In any case, we have $V(C_{\eta(B_i)}) \subseteq V(C_{B_i})$ and the only 2-separators of $\eta(B_i)$ must be contained in some boundary part of $B_i$, as there would have been another split otherwise. See Figure~\ref{fig:B_1} for an illustration of $\eta(B_l)$.

\begin{figure}[!htb]
\centering
\includegraphics[width=0.6\textwidth]{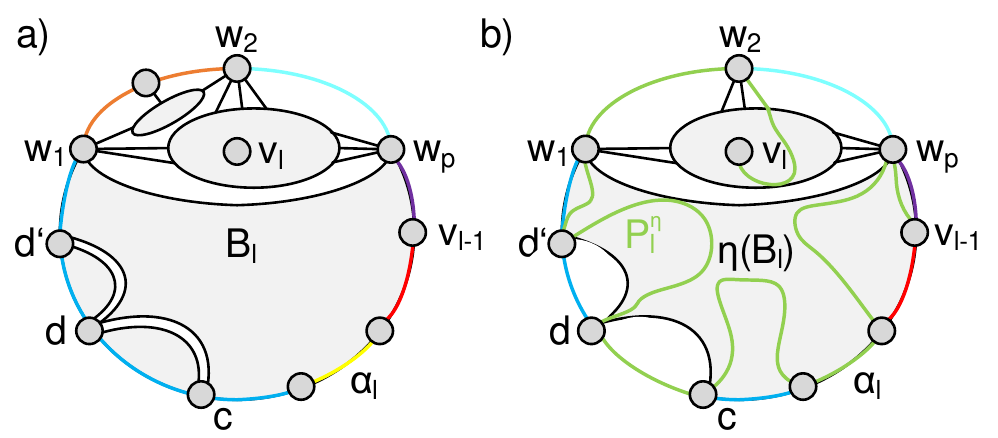}
\caption{a) A block $B_l$ with boundary points $v_{l-1},l_{\alpha_l},r_{\alpha_l},w_1,\dots,w_3$ that has two maximal 2-separators on the same boundary part. b) The graph $\eta(B_l)$.}
\label{fig:B_1}
\end{figure}

\begin{lemma}\label{lemma:eta_blocks}
Every $\eta(B_i)$ is a block. Let $P^\eta_i$ be a $v_{i-1}$-$\alpha_i$-$v_i$-path of some $\eta(B_i)$ such that $P^\eta_i \neq \alpha_i$. Then every outer $P^\eta_i$-bridge of $\eta(B_i)$ is trivial.
\end{lemma}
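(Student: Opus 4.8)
The plan is to prove the two assertions of Lemma~\ref{lemma:eta_blocks} separately, first that $\eta(B_i)$ is a block and then that outer $P^\eta_i$-bridges are trivial.

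\textbf{$\eta(B_i)$ is a block.} If $\alpha_i = v_{i-1}v_i$ then $\eta(B_i) = \alpha_i$ is a single edge, which is a block by the convention that blocks with at most two vertices count as blocks. Otherwise, $\eta(B_i)$ is obtained from the $2$-connected block $B_i$ by a sequence of split-off operations. I would argue that splitting off $B_{cd}^+$ for a $2$-separator $\{c,d\} \subseteq C_{B_i}$ preserves $2$-connectivity: deleting the internal vertices of $B_{cd}^+$ leaves the union of all other $\{c,d\}$-bridges, which remains connected through $c$ and $d$; after adding the edge $cd$ (if absent) no cut vertex can arise, since any cut vertex would have to separate $c$ from $d$, but $cd$ is now an edge. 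Since by Lemma~\ref{lemma:2separators_vs_2bridges} the maximal $2$-separators split in Definition~\ref{def:etaK} do not interlace, the bridges $B_{cd}^+$ are ``nested or disjoint'' along $C_{B_i}$, so the operations are independent and can be performed in any order, each preserving $2$-connectivity; hence $\eta(B_i)$ is $2$-connected, i.e.\ a block.

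\textbf{Outer $P^\eta_i$-bridges are trivial.} Let $P^\eta_i \neq \alpha_i$ be a $v_{i-1}$-$\alpha_i$-$v_i$-path of $\eta(B_i)$, and suppose for contradiction that $\eta(B_i)$ has a non-trivial outer $P^\eta_i$-bridge $J$, with $e \in J \cap C_{\eta(B_i)}$. The key point is that $\eta(B_i)$ still has the structure needed to run the earlier lemmas: its outer face $C_{\eta(B_i)}$ is a cycle on a subset of $V(C_{B_i})$, it contains all the boundary points (each boundary point of $B_i$ survives every split, since splits only remove internal vertices of bridges $B_{cd}^+$ whose attachments lie strictly inside a single boundary part), and $\alpha_i$ together with the boundary points partitions $C_{\eta(B_i)}$ into boundary parts inherited from $B_i$. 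So I would apply Lemma~\ref{lemma:bridgeattachments} (with $a,b$ the boundary points and $\alpha' := \alpha_i$) to conclude that the two attachments $\{c',d'\}$ of $J$ form a $2$-separator of $\eta(B_i)$ contained in a single boundary part $Z$ of $\eta(B_i)$ (the analogue of Lemma~\ref{lemma:strongerbridgeattachments} carries over verbatim, using that there are still at least three independent paths among the $w_j$'s and that the remaining boundary points all lie on $P^\eta_i$). But then $\{c',d'\}$ would be a $2$-separator of $\eta(B_i)$ lying in a boundary part of $B_i$, contradicting the earlier observation that after performing all maximal-$2$-separator splits, $\eta(B_i)$ has no $2$-separator contained in a boundary part of $B_i$ (otherwise the split-off construction would not have been exhaustive). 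Hence no non-trivial outer $P^\eta_i$-bridge exists.

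\textbf{Main obstacle.} The delicate part is the last step: I must be careful that a $2$-separator $\{c',d'\}$ of $\eta(B_i)$ really does correspond to a maximal $2$-separator of $B_i$ that was supposed to be split off. The subtlety is that the split-off operation adds the edge $cd$, which could in principle create new $2$-separators in $\eta(B_i)$ that did not exist in $B_i$, or destroy the ``maximality'' bookkeeping. I would handle this by showing that every $2$-separator of $\eta(B_i)$ contained in a boundary part of $B_i$ lifts to a $2$-separator of $B_i$ in the same boundary part (a separator of the ``reduced'' graph is a separator of the original, since the split-off bridges reattach only at $c,d$), so maximality in $B_i$ forces it to have been enclosed by some already-split maximal $2$-separator, contradicting that $J$ is a bridge of $\eta(B_i)$ rather than of $B_i$. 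Making this lifting argument precise — tracking how boundary parts and their maximal $2$-separators behave under the simultaneous non-interlacing splits — is where the real work lies; the rest is routine $2$-connectivity bookkeeping.
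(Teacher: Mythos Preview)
Your approach is essentially the paper's, but there is one concrete over-claim you should fix. You assert that ``after performing all maximal-$2$-separator splits, $\eta(B_i)$ has no $2$-separator contained in a boundary part of $B_i$.'' This is false in general: if a maximal $2$-separator $\{c,d\}$ of $B_i$ has more than two $\{c,d\}$-bridges, then after splitting off $B_{cd}^+$ the pair $\{c,d\}$ remains a $2$-separator of $\eta(B_i)$. You evidently sense this, since your ``Main obstacle'' paragraph backs off to the lifting argument, but the conclusion you draw there (``contradicting that $J$ is a bridge of $\eta(B_i)$ rather than of $B_i$'') is not the right endpoint. The correct finish is: once you lift $\{c',d'\}$ to a $2$-separator of $B_i$ in $Z$ and enclose it by a maximal $\{c^*,d^*\}$, the only vertices of $c^*C_{B_i}d^*$ surviving in $\eta(B_i)$ are $c^*$ and $d^*$, so $\{c',d'\}=\{c^*,d^*\}$ and the outer side of this pair in $\eta(B_i)$ is the single edge $c^*d^*$; hence any outer $P^\eta_i$-bridge with these attachments is trivial.

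A second point of comparison: rather than porting Lemma~\ref{lemma:strongerbridgeattachments} to $\eta(B_i)$ (which would require you to re-verify the three-independent-paths property among the $w_j$ inside $\eta(B_l)$), the paper first proves directly that every boundary point, including each $w_j$, lies on $P^\eta_i$. The argument is short: the bridge $J$ of $B_l$ that contains $v_l$ survives in $\eta(B_l)$, so if some $w_j\notin P^\eta_l$ then the outer $P^\eta_l$-bridge containing $w_j$ would, via $J$, have an attachment off $C_{\eta(B_l)}$, contradicting Lemma~\ref{lemma:bridgeattachments}. With all boundary points on $P^\eta_i$, the case ``$c,d$ in different boundary parts'' is disposed of in one line (an inner boundary point of $cC_{B_i}d$ would lie on $P^\eta_i$), and only the same-boundary-part case needs the lifting argument above. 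This route is cleaner than carrying the full proof of Lemma~\ref{lemma:strongerbridgeattachments} over to $\eta(B_i)$.
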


\begin{proof}
If $\alpha_i = v_{i-1}v_i$, $\eta(B_i) = \alpha_i$ is clearly a block. Otherwise, $B_i$ has at least three vertices and is thus 2-connected; consider two independent paths in $B_i$ between any two vertices in $\eta(B_i)$. Splitting off $B_{cd}^+$ for any maximal 2-separator $\{c,d\}^*$ (we may assume that not both independent paths are contained in $B_{cd}^+$) preserves the existence of such paths by replacing any subpath through $B_{cd}^+$ with the edge $cd$. Hence, $\eta(B_i)$ is a block.

For the second claim, we first prove that $P^\eta_i$ contains all boundary points of $B_i$. By definition, $P^\eta_i$ contains $l_{\alpha_i},r_{\alpha_i},v_{i-1}$ and $v_i$. The only possible remaining boundary points $w_1,\dots,w_p$ may occur only if $i=l$, $v_l \notin C_{B_l}$ and the 2-separator $\{w_1,w_p\}$ exists. In that case, we argue similarly as for Lemma~\ref{lemma:w1_wp}: Let $J$ be the $w_1C_{B_l}w_p$-bridge of $B_l$ that contains $v_l$; clearly, $J$ exists also in $\eta(B_l)$. Now assume to the contrary that $w_j \not \in \eta(P_l)$ for some $j \in \{1,\dots,p\}$. Then $w_j$ is an internal vertex of an outer $\eta(P_l)$-bridge $J'$ of $\eta(B_l)$. As $\eta(B_l)$ is a block, we can apply Lemma~\ref{lemma:bridgeattachments}, which implies that both attachments of $J'$ are in $C_{\eta(B_l)}$. However, since $J$ contains a path from $w_j \notin \eta(P_l)$ to $v_j \in \eta(P_l)$ in which only $w_j$ is in $C_{\eta(B_l)}$, at least one attachment of $J'$ is not in $C_{\eta(B_l)}$, which gives a contradiction.

Assume to the contrary that there is a non-trivial outer $P_i^{\eta}$-bridge $J''$ of $\eta(B_i)$ and let $c,d$ be its two attachments. Lemma~\ref{lemma:bridgeattachments} implies that $\{c,d\}$ is a 2-separator of $\eta(B_i)$ that is contained in $C_{B_i}$. If $c$ and $d$ are contained in the same boundary part of $B_i$, a supergraph of $B_{cd}^+$ would therefore have been split off for $\eta(B_i)$, which contradicts that $J''$ is non-trivial. Hence, $c$ and $d$ are contained in different boundary parts of $B_i$. Then $inner(cC_{B_i}d)$ contains a boundary point of $B_i$ and, as this boundary point is also in $P_i^{\eta}$, this contradicts that $J''$ is an outer $P_i^{\eta}$-bridge.
\end{proof}

The next lemma shows how we can construct a Tutte path $P$ of $K$ iteratively using maximal 2-separators. We will provide the details of an efficient implementation in Section~\ref{sec:computing}.

\begin{lemma}[Construction of $P$]\label{lemma:construct_P}
For every $1 \leq i \leq l$, a $v_{i-1}$-$\alpha_i$-$v_i$-path $P_i$ of $B_i$ can be constructed such that no non-trivial outer $P_i$-bridge of $B_i$ is part of an inductive call of Theorem~\ref{thm:Thomassen}.
\end{lemma}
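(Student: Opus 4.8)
The plan is to use the structural machinery already developed, especially Lemma~\ref{lemma:eta_blocks} and Lemma~\ref{lemma:2separators_vs_2bridges}, to build each $P_i$ in two stages: first compute a Tutte path $P_i^\eta$ inside the reduced block $\eta(B_i)$, then lift it back to $B_i$ by re-expanding the split-off bridges $B_{cd}^+$. The key point is that the inductive call of Theorem~\ref{thm:Thomassen} is applied only to $\eta(B_i)$ (which has strictly fewer vertices than $G$, since it is a subgraph of the block $B_i$ of $H = G - V(Q)$), and by Lemma~\ref{lemma:eta_blocks} every outer $P_i^\eta$-bridge of $\eta(B_i)$ is already trivial. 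So the outer bridges we actually worry about in $B_i$ never enter any recursive call.

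First I would dispose of the degenerate case $\alpha_i = v_{i-1}v_i$: then $\eta(B_i) = \alpha_i$, and we set $P_i := \alpha_i$, which has no non-trivial outer bridge at all. Otherwise $B_i$ is $2$-connected, $\eta(B_i)$ is a block by Lemma~\ref{lemma:eta_blocks}, and it has strictly fewer vertices than $G$; applying Theorem~\ref{thm:Thomassen} inductively to $\eta(B_i)$ with endpoints $v_{i-1}, v_i$ (which lie on $C_{\eta(B_i)}$ — here I need to check that the boundary points survive the splits and lie on the new outer face, which follows since splitting off $B_{cd}^+$ only deletes internal vertices of bridges contained in boundary parts and adds the chord $cd$ on the outer face) and prescribed edge $\alpha_i \in E(C_{\eta(B_i)})$ gives a $v_{i-1}$-$\alpha_i$-$v_i$-path $P_i^\eta$ of $\eta(B_i)$. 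If $P_i^\eta = \alpha_i$ we again take $P_i := \alpha_i$ in $B_i$; its outer $P_i$-bridges are then handled directly, since $B_i$ may still have non-trivial outer bridges, but each such bridge, by Lemma~\ref{lemma:2separators_vs_2bridges} applied with the path $\alpha_i$... — wait, that lemma needs $P_i \neq \alpha_i$; so in this subcase I would instead argue directly via Lemma~\ref{lemma:bridgeattachments} and the definition of $\eta$ that any non-trivial outer $P_i$-bridge lies inside some $B_{cd}^+$, and handle it in the lifting step below rather than in a recursive call.

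The lifting step is the heart of the construction. For each maximal $2$-separator $\{c,d\}^*$ of $B_i$ that was split off, the edge $cd$ either lies on $P_i^\eta$ or it does not. If $cd \notin P_i^\eta$, then in $B_i$ the whole bridge $B_{cd}^+$ together with anything attached becomes a single $P_i$-bridge with attachments $\{c,d\}$ — but wait, $B_{cd}^+$ need not itself be a Tutte substructure, and it has an edge on $C_{B_i}$, so it is an outer bridge and we need it to have $\le 2$ attachments; this is fine since its attachments are exactly $c,d$. If $cd \in P_i^\eta$, we must replace the edge $cd$ by an honest path through $B_{cd}^+$: form $B_{cd}^+ \cup \{cd\}$ (a block with fewer vertices than $G$), pick an edge $\alpha_{cd}$ on its outer face other than $cd$, recursively invoke Theorem~\ref{thm:Thomassen} to get a $c$-$\alpha_{cd}$-$d$-path, and splice it in after deleting $cd$ — exactly the pattern used in Lemmas~\ref{lemma:2separators} and~\ref{lemma:isolated bridge}. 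Crucially, $B_{cd}^+ \cup \{cd\}$ is processed, but it is \emph{not} an outer $P_i$-bridge of $B_i$ once we route $P_i$ through it; so the statement "no non-trivial outer $P_i$-bridge of $B_i$ is part of an inductive call" is preserved. Since maximal $2$-separators do not interlace (Lemma~\ref{lemma:2separators_vs_2bridges}), the bridges $B_{cd}^+$ are edge-disjoint and can be processed independently, and I would note the resulting $P_i$ is a genuine Tutte path of $B_i$ because $\{c,d\}$ separates the bridges inside $B_{cd}^+$ from the rest of $B_i$, so attachment counts are preserved exactly as in the earlier lemmas.

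The main obstacle I expect is bookkeeping around the special block $B_l$ when $v_l \notin C_{B_l}$: here I must use Lemma~\ref{lemma:w1_wp} to guarantee $w_1,\dots,w_p \in P_l$, argue that the $\eta$-reduction does not accidentally split off the bridge $J$ containing $v_l$ (this is why $\{w_1,w_p\}$ was chosen of minimal length and the boundary points include all $w_j$), and confirm that the recursive call to $\eta(B_l)$ still has $v_l$ inside it as a legal target vertex — all of which is set up by the definitions but needs to be spelled out carefully. The second, more subtle point is making sure that when $cd \in P_i^\eta$ and we recurse on $B_{cd}^+ \cup \{cd\}$, the edge $cd$ is genuinely on the outer face of that graph (so that a valid embedding with $\alpha_{cd}$ on the outer face exists) — this holds because $\{c,d\} \subseteq C_{B_i}$ and $B_{cd}^+$ is by definition the bridge containing $cC_{B_i}d$, so $cd$ bounds its outer face, and stereographic projection handles the rest. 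Everything else is a routine reprise of the splice-and-verify-attachments argument from Section~\ref{sec:instances}.
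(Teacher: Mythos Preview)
Your overall two-stage plan (compute $P_i^\eta$ in $\eta(B_i)$, then lift through the split-off bridges) matches the paper, but the lifting step has a real gap. When $cd \in P_i^\eta$ you propose to ``recursively invoke Theorem~\ref{thm:Thomassen}'' on $B := B_{cd}^+ \cup \{cd\}$ to obtain a $c$-$\alpha_{cd}$-$d$-path $P_B$. That call to Theorem~\ref{thm:Thomassen} processes \emph{all} of $B$, and the resulting $P_B$ may itself have non-trivial outer $P_B$-bridges in $B$. Each such bridge sits on $cC_{B_i}d \subseteq C_{B_i}$ and hence becomes a non-trivial outer $P_i$-bridge of $B_i$; but it was just part of the inductive call on $B$, contradicting exactly the conclusion you are trying to establish. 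Your justification (``$B_{cd}^+$ is not an outer $P_i$-bridge once we route $P_i$ through it'') only rules out $B_{cd}^+$ as a whole; it says nothing about the outer bridges that appear \emph{inside} $B_{cd}^+$ after routing.

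The paper closes this gap by making the lemma itself an induction on $|V(B_i)|$: when $cd \in P_i^\eta$, it does not call Theorem~\ref{thm:Thomassen} on $B$ directly. Instead it declares new boundary points $c,d,l_{\alpha_B},r_{\alpha_B}$ for $B$ (with $\alpha_B \neq cd$), forms $\eta(B)$, and invokes the induction hypothesis of \emph{this} lemma on $B$ to get a $c$-$\alpha_B$-$d$-path $P_B$ with no non-trivial outer $P_B$-bridge of $B$ touched by any call to Theorem~\ref{thm:Thomassen}. Theorem~\ref{thm:Thomassen} is thus only ever applied to graphs of the form $\eta(\cdot)$, where Lemma~\ref{lemma:eta_blocks} guarantees all outer bridges are trivial. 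This nested $\eta$-decomposition is the missing idea in your proposal. (A minor aside: your digression on the subcase $P_i^\eta = \alpha_i$ is unnecessary, since $\alpha_i \neq v_{i-1}v_i$ forces $P_i^\eta \neq \alpha_i$.)
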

\begin{proof}
The proof proceeds by induction on the number of vertices in $B_i$. If $B_i$ is just an edge or a triangle, the claim follows directly. For the induction step, we therefore assume that $B_i$ contains at least four vertices. If $\alpha_i = v_{i-1}v_i$, we set $P_i := \alpha_i$, so assume $\alpha_i \neq v_{i-1}v_i$. In particular, $\eta(B_i) \neq \alpha_i$ and $\alpha_i$ is no $v_{i-1}$-$\alpha_i$-$v_i$-path of $\eta(B_i)$.

As $|V(\eta(B_i))| < n$, we may apply an inductive call of Theorem~\ref{thm:Thomassen} to $\eta(B_i)$, which returns a $v_{i-1}$-$\alpha_i$-$v_i$-path $P_i^\eta \neq \alpha_i$ of $\eta(B_i)$. This does not violate the claim, since $\eta(B_i)$ does not contain any non-trivial outer $P_i^\eta$-bridge by Lemma~\ref{lemma:eta_blocks}.

Now we extend $P_i^\eta$ iteratively to the desired $v_{i-1}$-$\alpha_i$-$v_i$-path $P_i$ of $B_i$ by restoring the subgraphs that were split off along maximal 2-separators one by one. For every edge $cd \in C_{\eta(B_i)}$ such that $\{c,d\}^*$ is a maximal 2-separator of $B_i$ (in arbitrary order), we distinguish the following two cases: If $cd \notin P_i^\eta$, we do not modify $P_i^\eta$, as in $B_i$ the subgraph $B_{cd}^+$ will be a valid outer bridge.

If otherwise $cd \in P_i^\eta$, we consider the subgraph $B_{cd}^+$ of $B_i$. Clearly, $B := B_{cd}^+ \cup \{cd\}$ is a block.
Define that the \emph{boundary points} of $B$ are $c$, $d$ and the two endpoints of some arbitrary edge $\alpha_B \neq cd$ in $C_B$. This introduces the boundary parts of $B$ in the standard way, and hence defines $\eta(B)$. Note that $B$ may contain several maximal 2-separators in $cC_Bd$ that in $B_i$ were suppressed by $\{c,d\}^*$, as $\{c,d\}^*$ is not a 2-separator of $B$. In consistency with Lemma~\ref{lemma:2separators_vs_2bridges}, which ensures that no two maximal 2-separators of $B_i$ interlace, we have to ensure that no two maximal 2-separators of $B$ interlace in our case $\alpha_i \neq v_{i-1}v_i$, as otherwise $\eta(B)$ would be ill-defined. This is however implied by Lemma~\ref{lemma:nointerlacing}, as $\alpha_B \neq cd$. Since $|V(\eta(B))| < |V(B_i)|$, a $c$-$\alpha_B$-$d$-path $P_B$ of $B$ can be constructed such that no non-trivial outer $P_B$-bridge of $B$ is part of an inductive call of Theorem~\ref{thm:Thomassen}. Since $\alpha_B \neq cd$, $P_B$ does not contain $cd$. We now replace the edge $cd$ in $P_i^{\eta}$ by $P_B$. This gives the desired path $P_i$ after having restored all subgraphs $B_{cd}^+$.
\end{proof}

Applying Lemma~\ref{lemma:construct_P} on all blocks of $K$ and taking the union of the resulting paths gives $P$. In the next step, we will modify $Q$ such that $P \cup \{\alpha\} \cup Q$ becomes the desired Tutte path of $G$. By Lemma~\ref{lemma:construct_P}, no non-trivial outer $P$-bridge of $K$ was part of any inductive call of Theorem~\ref{thm:Thomassen} so far, which allows us to use these bridges inductively for the following modification of $Q$ (the existence proof in~\cite{Thomassen1983} used these arbitrarily large bridges in inductive calls for both constructing $P$ and modifying $Q$).

\subsubsection{Modification of Q}\label{sec:ModificationOfQ}
We show how to modify $Q$ such that $P \cup \{\alpha\} \cup Q$ is an $x$-$\alpha$-$y$-path of $G$. To this end, consider a $(P \cup \{\alpha\} \cup Q)$-bridge $J$ of $G$. Since Lemma~\ref{lemma:isolated bridge} cannot be applied, $J$ does not have all of its attachments in $Q$. On the other hand, if $J$ has all of its attachments in $P \subseteq K$, $J \subseteq K$ follows from the maximality of blocks and therefore $J$ satisfies all conditions for a Tutte path of $G$. Hence, it suffices to consider $(P \cup \{\alpha\} \cup Q)$-bridges that have attachments in both $P$ and $Q$. The following lemma showcases some of their properties (see Figure~\ref{fig:PQBridges}).

\begin{lemma}\label{lemma:number of attachments on P}
Let $J$ be a $(P \cup \{\alpha\} \cup Q)$-bridge of $G$ that has an attachment in $P$. Then $J \cap K$ is either exactly one vertex in $P$ or exactly one non-trivial outer $P$-bridge of $K$. In particular, $J$ has at most two attachments in $P$.
\end{lemma}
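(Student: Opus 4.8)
The plan is to analyze the structure of the $(P \cup \{\alpha\} \cup Q)$-bridge $J$ by tracking how its internal vertices interact with the chain of blocks $K$. First I would observe that, since $K = H = G - V(Q)$ and $H$ is connected with $P \subseteq K$, every vertex of $G$ not in $Q$ lies in $K$; in particular, any internal vertex of $J$ that is not in $Q$ must be an internal vertex of $K$ with respect to $P$, i.e.\ it lies in some outer or inner $P$-bridge of $K$. The key point is that $J$ is assumed to have an attachment in $P$, so $J$ genuinely reaches into $K$, and I want to show it does so through exactly one $P$-bridge of $K$ (or just touches $P$ in a single vertex).

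The main step is a connectivity argument. Suppose $J \cap K$ contains internal vertices of two distinct $P$-bridges $L_1 \neq L_2$ of $K$. Since $J$ is connected (it is a bridge, hence its internal part together with its attachments is connected), there is a path inside $J$ between an internal vertex of $L_1$ and an internal vertex of $L_2$ using only internal vertices of $J$. Such a path must leave $L_1$ and enter $L_2$, so it passes through vertices of $P$ or of $Q$ along the way. It cannot pass through a vertex of $P$, because vertices of $P$ are not internal to $J$. So it must pass through $Q$ — but then $J$ contains an attachment in $Q$ that separates $L_1$ from $L_2$, and combined with the fact that $\{c,d\}$-type separators of $K$ live on $C_{B_i}$ (via Lemmas~\ref{lemma:bridgeattachments}, \ref{lemma:strongerbridgeattachments}, and the structure of $K$ from Lemma~\ref{lemma:there-is-no-L}), I would derive that $G$ is decomposable or that Lemma~\ref{lemma:isolated bridge} applies — contradicting the standing assumption that neither Lemma~\ref{lemma:2separators} nor Lemma~\ref{lemma:isolated bridge} is applicable. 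Hence $J \cap K$ meets the internal vertices of at most one $P$-bridge of $K$. If it meets none, then $J \cap K$ is contained in $P$, and since $J$ is connected and $P$ is a path whose vertices are all non-internal to $J$, $J \cap K$ must be a single vertex of $P$. If it meets exactly one $P$-bridge $L$ of $K$, then $J \cap K = L$ (again by connectivity and the fact that $P$-vertices are not internal to $J$, so $J$ cannot ``pass through'' $P$ to pick up more of $K$), and $L$ must be an outer $P$-bridge of $K$: if $L$ were an inner $P$-bridge, all of $L$'s attachments would lie on $P$ and hence on $K$, so $L$ would be a $(P \cup \{\alpha\} \cup Q)$-bridge on its own, forcing $J = L \subseteq K$ and contradicting that $J$ has an attachment in $Q$ (which it does, since we already reduced to bridges with attachments in both $P$ and $Q$, or one checks this case is vacuous). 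Moreover $L$ is non-trivial: a trivial outer $P$-bridge is a single edge with both ends on $P$, which again would make it a $(P\cup\{\alpha\}\cup Q)$-bridge distinct from $J$.

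Finally, the ``in particular'' clause follows immediately: if $J \cap K$ is a single vertex $v \in P$, then $J$ has exactly one attachment in $P$; if $J \cap K$ is a non-trivial outer $P$-bridge $L$ of $K$, then by Lemma~\ref{lemma:bridgeattachments} (or Lemma~\ref{lemma:strongerbridgeattachments}) $L$ has exactly two attachments, both on $C_{B_i} \subseteq P$, and these are the only vertices of $J$ on $P$, so $J$ has at most two attachments in $P$. The step I expect to be the main obstacle is the connectivity argument ruling out $J$ meeting two distinct $P$-bridges of $K$: one has to be careful that the ``detour through $Q$'' really produces a $2$-separator of the right type (one that makes $G$ decomposable or triggers Lemma~\ref{lemma:isolated bridge}), which requires invoking the planarity of $G$ and the precise placement of $Q = xC_Gl_\alpha$ on the outer face, together with Lemma~\ref{lemma:there-is-no-L} to know that all of $C_G - V(Q)$ sits inside $B_1 \subseteq K$.
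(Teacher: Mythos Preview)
Your overall case split (does $J$ contain an internal vertex of some $P$-bridge of $K$ or not?) matches the paper's, and the ``in particular'' clause is handled correctly. But the core connectivity argument has a genuine gap, stemming from a wrong premise.

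First, $K \neq H$ in general. $H = G - V(Q)$, while $K$ is only the \emph{minimal} chain of blocks $B_1,\dots,B_l$ of $H$ containing $B_1$ and $y$; there can be further blocks of $H$ hanging off this chain. So your opening claim ``$K = H = G - V(Q)$'' is false, and with it the assertion that any internal vertex of $J$ not in $Q$ must lie in a $P$-bridge of $K$.

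Second, the contradiction step is tangled. You take a path in $J$ between internal vertices of two distinct $P$-bridges $L_1,L_2$ of $K$ \emph{using only internal vertices of $J$}; such a path lies in $G-(P\cup Q)$ by construction and therefore passes through neither $P$ nor $Q$. Your next sentence (``so it passes through vertices of $P$ or of $Q$'') contradicts this, and the subsequent attempt to push the path through $Q$ and manufacture a decomposable situation is a dead end. The paper's argument is much more direct: the path lies in $H-P$; because $K$ is a chain of blocks of $H$, any simple path in $H$ between two vertices of $K$ must stay in $K$ (leaving $K$ would require entering and exiting another block through the same cut vertex); hence the path lies in $K-P$. Since $P$ contains every cut vertex $v_i$, the path is confined to a single $B_i - P$, and therefore to a single $P$-bridge --- contradiction. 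No appeal to Lemmas~\ref{lemma:2separators} or~\ref{lemma:isolated bridge} is needed here.

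Finally, in the ``meets none'' case, ``$J$ is connected and $P$ is a path'' does not by itself force $J\cap K$ to be a single vertex of $P$. The right reason is the one the paper uses: if $J$ has no internal vertex in any $P$-bridge of $K$, then $J$ is in fact a $(K\cup C_G)$-bridge, and it was already established (just before this lemma) that every such bridge has exactly one attachment in $K$.
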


\begin{proof}
If $J$ does not contain an internal vertex of any $P$-bridge of $K$, $J$ is a $K \cup Q$-bridge of $G$. Since every such bridge has at most one attachment in $K$, $J \cap K$ is exactly this attachment, which is contained in $P$. Otherwise, let $J$ contain an internal vertex $v$ of a $P$-bridge $L$ of $K$. Then $J$ is clearly a non-trivial outer $P$-bridge and must contain $L$.

To prove the claim, we first assume to the contrary that $J$ contains an internal vertex $v'$ of a $P$-bridge $L' \neq L$ of $K$. Since the internal vertices of $J$ induce a connected graph in $G$ by definition, $J - (P \cup Q)$ contains a path from $v$ to $v'$. By the maximality of every block in $K$, this path is contained in $K-P$, which contradicts that $L$ and $L'$ are distinct. Now it remains to show that $J$ does not contain any vertex in $P-L$. Assume to the contrary that $w$ is such a vertex. Then $J - (P \cup Q)$ contains a path from $v$ to $w$ and, by the maximality of every block in $K$, this path is in $K$ and its only vertex in $P$ is $w$. This shows that $L$ has the attachment $w$, which contradicts $w \notin L$.
\end{proof}

\begin{figure}[!htb]
\centering
\includegraphics[width=0.75\textwidth]{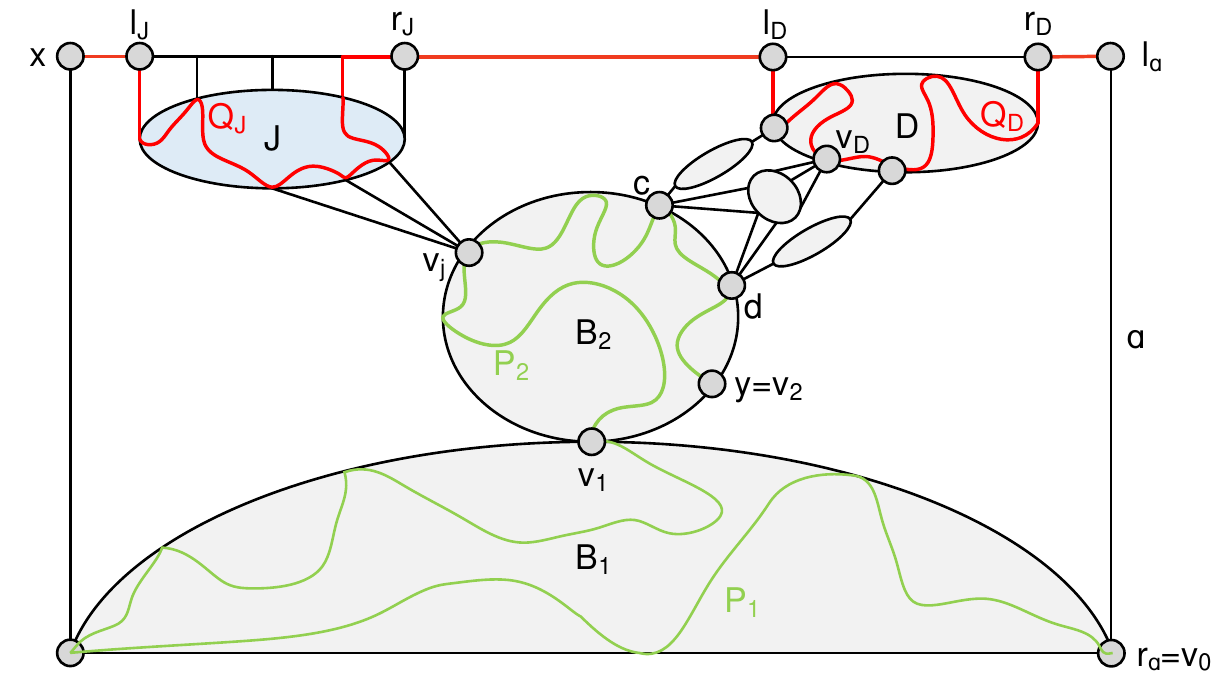}
\caption{$K=B_1 \cup B_2$ (colored gray) and two $(P \cup \{\alpha\} \cup Q)$-bridges $J$ and $D$ of $G$ such that $J$ has exactly one attachment $v_J$ in $P$ and $D$ has exactly two attachments $\{c,d\}$ in $P$ that are the attachments of a non-trivial outer $P$-bridge of $K$.}
\label{fig:PQBridges}
\end{figure}

Let $J$ be a $(P \cup \{\alpha\} \cup Q)$-bridge of $G$ that has attachments in both $P$ and $Q$ and recall that $C(J) = l_JC_Gr_J$. Because Lemma~\ref{lemma:isolated bridge} is not applicable to $G$, there is no other $(P \cup \{\alpha\} \cup Q)$-bridge than $J$ that intersects $(J \cup C(J)) - P - \{l_J,r_J\}$; in other words, $J \cup C(J)$ is everything that is enclosed by the attachments of $J$ in $G$. 
In order to obtain the Tutte path of Theorem~\ref{thm:Thomassen}, we will thus replace the subpath $C(J)$ with a path $Q_J \subseteq (J \cup C(J)) - P$ from $l_J$ to $r_J$ such that any $(Q_J \cup P)$-bridge of $G$ that intersects $(J \cup C(J)) - P - \{l_J,r_J\}$ has at most three attachments and at most two if it contains an edge of $C_G$.
Since $l_J$ and $r_J$ are contained in $Q$, no other $(P \cup \{\alpha\} \cup Q)$-bridge of $G$ than $J$ is affected by this ``local'' replacement, which proves its sufficiency for obtaining the desired Tutte path.

We next show how to obtain $Q_J$. If $C(J)$ is a single vertex, we do not need to modify $Q$ at all (hence, $Q_J := C(J)$), as then $J \cup C(J)$ does not contain an edge of $C_G$ and has at most three attachments in total (one in $Q$ and at most two in $P$ by Lemma~\ref{lemma:number of attachments on P}). If $C(J)$ is not a single vertex, we have the following lemma.

\begin{lemma}[\cite{Thomassen1983,Chiba1986}]\label{lemma:PQ-bridges}
Let $J$ be a $(P \cup \{\alpha\} \cup Q)$-bridge of $G$ that has an attachment in $P$ and at least two in $Q$. Then $(J \cup C(J)) - P$ contains a path $Q_J$ from $l_J$ to $r_J$ such that any $(Q_J \cup P)$-bridge of $G$ that intersects $(J \cup C(J)) - P - \{l_J,r_J\}$ has at most three attachments and at most two if it contains an edge of $C_G$.
\end{lemma}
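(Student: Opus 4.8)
The plan is to recurse on the structure of $J$, exactly in the spirit of the case analysis already used for Lemmas~\ref{lemma:2separators} and~\ref{lemma:isolated bridge}, but now relative to the path $P$ rather than to $G$ itself. We know from Lemma~\ref{lemma:number of attachments on P} that $J \cap K$ is either a single vertex $v_J \in P$ or a single non-trivial outer $P$-bridge $L$ of $K$ with two attachments $c,d$. Moreover, since Lemma~\ref{lemma:isolated bridge} is inapplicable to $G$, the graph $J \cup C(J)$ is precisely what is enclosed by the attachments of $J$, so everything we do stays local: it suffices to find $Q_J \subseteq (J \cup C(J)) - P$ from $l_J$ to $r_J$ and check the bridge condition only for $(Q_J \cup P)$-bridges that live inside $(J \cup C(J)) - P - \{l_J,r_J\}$. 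I would form an auxiliary plane graph $G_J$ from $J \cup C(J)$ by contracting the part of $P$ inside $J$ to a single vertex (when $J \cap K$ is a vertex $v_J$) or by replacing the bridge $L$ with the single edge $cd$ (when $J \cap K = L$), so that the enclosed region together with $l_JC_Gr_J$ and this one vertex or edge becomes a smaller $2$-connected plane graph. The desired $Q_J$ will then be obtained by applying Thomassen's Theorem~\ref{thm:Thomassen} (inductively, since $|V(G_J)| < n$) to $G_J$ with suitably chosen endpoints and prescribed edge on the outer face, and translating the resulting Tutte path back.

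**The two main cases.**
If $J$ has exactly one attachment $v_J$ in $P$, then in $G_J$ we want a Tutte path from $l_J$ to $r_J$; the prescribed edge $\alpha_J \in E(C_{G_J})$ should be chosen so that $v_J$ is forced onto the path in a way compatible with the $C_G$-edges of $J$, and then every $(Q_J \cup P)$-bridge inside the region corresponds to a $Q_J$-bridge of $G_J$ with the same attachments, plus possibly the vertex $v_J$; since $v_J \in P$, a bridge attaching at $v_J$ picks up at most one extra attachment, giving at most three in total and at most two if it meets $C_G$ (one needs to check that an outer such bridge does not also attach at $v_J$, which follows because $v_J \notin C_G$ — it is an internal vertex of a $P$-bridge of $K$ — or, if $v_J$ happens to lie on $C_{G_J}$, from the placement of $\alpha_J$). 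If instead $J \cap K = L$ with attachments $c,d$, I would take $G_J$ with the edge $cd$ inserted, apply Theorem~\ref{thm:Thomassen} to get a $c$-$\alpha_J$-$d$ or $l_J$-$\alpha_J$-$r_J$ Tutte path avoiding $cd$ (using the subdivision trick of Lemma~\ref{lemma:2separators} if necessary to keep $cd$ off the path), and set $Q_J$ accordingly; a bridge inside the region attaching to both $c$ and $d$ then has at most two attachments in $P$ and one in $Q_J$, again totalling three, and the outer ones avoid $c,d$ simultaneously by the same argument.

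**Where the difficulty lies.**
The routine part is the bridge bookkeeping; the delicate part is guaranteeing that $Q_J$ can be chosen to \emph{avoid} $P$ entirely (i.e.\ $Q_J \subseteq (J \cup C(J)) - P$) while still being anchored at $l_J$ and $r_J$ and still controlling the bridges that touch $P$ at one or two interior points. This is exactly the reason the statement is attributed jointly to Thomassen and Chiba–Lee~\cite{Chiba1986}: one must route around the contracted vertex/edge and simultaneously forbid the path from re-entering $P$, which is why the prescribed outer edge $\alpha_J$ and the choice of whether to subdivide $cd$ matter. I expect the main obstacle to be the sub-case where $J$ has $\geq 3$ attachments in $Q$ and two in $P$ at once: here the enclosed region can itself have non-trivial structure, and one has to invoke Theorem~\ref{thm:Thomassen} on $G_J$ with endpoints chosen among $\{l_J, r_J, c, d\}$ and argue that the returned Tutte path, after deleting the contracted vertex or the edge $cd$, still connects $l_J$ to $r_J$ inside $J \cup C(J)$ — which uses that $\{c,d\}$ (or $v_J$) separates the region from the rest of $G$ and that $G_J$ is $2$-connected by construction. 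Once that routing is secured, translating the Tutte-path bridge bounds of $G_J$ back to $(Q_J \cup P)$-bridge bounds of $G$ is immediate from the locality observation above.
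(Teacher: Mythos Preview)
Your overall case split (one vs.\ two attachments in $P$) matches the paper, but in both cases the construction you sketch does not produce a $Q_J$ that avoids $P$, and the paper's actual arguments differ in precisely the places you flag as ``delicate''.

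\textbf{One attachment $v_J$.} You propose a Tutte path from $l_J$ to $r_J$ in $G_J = J \cup C(J)$ with $\alpha_J$ chosen to ``force $v_J$ onto the path''. But if $v_J$ lies on this path then $Q_J$ meets $P$, violating $Q_J \subseteq (J \cup C(J)) - P$; and if $v_J$ is off the path you have no control over the bridge containing it. The paper instead adds the single edge $r_Jv_J$, sets $J' := J \cup C(J) \cup \{r_Jv_J\}$, shows $J'$ is $2$-connected, and takes an $l_J$--$r_Jv_J$--$v_J$ Tutte path $Q_{J'}$. Because $v_J$ is now an \emph{endpoint} reached via the artificial edge, $Q_J := Q_{J'} - v_J$ is an $l_J$--$r_J$ path disjoint from $P$, and every $Q_{J'}$-bridge of $J'$ has unchanged attachments in $G$.

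\textbf{Two attachments $c,d$.} Adding $cd$ and recursing on $J \cup C(J) \cup \{cd\}$ leaves $c,d$ in the graph; the subdivision trick keeps only the \emph{edge} $cd$ off the returned path, not the vertices $c,d \in P$ themselves. The paper's construction is genuinely different and is where the content of the lemma lies: one passes to the block $D$ of $J \cup C(J)$ that contains $C(J)$, observes that $c,d \notin D$, and finds the Tutte path $P_D$ from $l_J$ to $r_J$ entirely inside $D$ --- so disjointness from $P$ is automatic. The remaining danger is that an outer $P_D$-bridge of $D$ may acquire \emph{both} $c$ and $d$ as extra attachments in $G$ via the $(D \cup \{c,d\})$-bridges hanging off $C_D$. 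To prevent this, the paper singles out a vertex $v_D \in C_D$: either the $D$-attachment of the unique $(D \cup \{c,d\})$-bridge having all three attachments $c,d,v_L$ (if it exists), or a point on $C_D - \mathrm{inner}(C(J))$ separating the bridges touching $c$ from those touching $d$. Prescribing an edge $\beta \in C_D$ incident to $v_D$ forces $v_D \in P_D$, so each outer $P_D$-bridge gains at most one of $\{c,d\}$ in $G$, and the three-attachment bound holds. This choice of $v_D$ is the missing idea in your sketch.
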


\begin{proof}
By Lemma~\ref{lemma:number of attachments on P}, it suffices to distinguish two cases, namely whether $J$ has one or two attachments in $P$. Assume first that $J$ has only one attachment $v$ in $P$ (see Figure~\ref{fig:PQBridges}). Let $J' := J \cup C(J) \cup \{r_Jv\}$ (without introducing multiedges). Since we want to use induction on $J'$, we will first prove that $|V(J')| < n$ and that $J'$ is 2-connected. The first claim simply follows from $|V(K)| \geq 2$, which holds, as $r_\alpha$ and $y$ are different vertices in $K$ due to $y \notin xC_Gr_\alpha$.

For proving that $J'$ is 2-connected, consider the outer face $C_{J'}$ of $J'$ (which is not necessarily a cycle) and let $F$ be the unique inner face of $G$ that contains $v$ and $l_J$. Since $G$ is 2-connected, $F$ is a cycle and hence $vC_{J'}l_J$ is a path. Then $C_{J'} = vC_{J'}l_J \cup C(J) \cup \{r_Jv\}$, which implies that $C_{J'}$ is a cycle. Hence, for any 1-separator $w$ of $J'$, $J' - w$ has a component $S$ that does not intersect the cycle $C_{J'}$. Then the neighborhood of $S$ in $G$ is just $w$, since $J'$ and $J$ differ at most by the edge $r_Jv$. As this contradicts that $G$ is 2-connected, $J'$ is 2-connected.

By induction, $J'$ contains an $l_J$-$r_Jv$-$v$-path $Q_{J'}$. We set $Q_J := Q_{J'} - v$; then $Q_J \cap P = \emptyset$ and the neighborhood of every internal vertex of every $Q_{J'}$-bridge of $J'$ is the same in $J'$ as in $G$. Thus, every $Q_J$-bridge of $G$ corresponds to a $Q_{J'}$-bridge of $J'$, which ensures that the number of attachments of every $Q_J$-bridge of $G$ intersecting $(J \cup C(J)) - P - \{l_J,r_J\}$ is as claimed.

Assume now that $J$ has exactly two attachments $c$ and $d$ in $P$. Since $J$ is connected and contains no edge of $C(J)$, there is some cycle in $J \cup C(J)$ that contains $C(J)$. Since this cycle is also contained in $G$ and the subgraph of $G$ inside this cycle is 2-connected, $C(J)$ is contained in a block $D$ of $J \cup C(J)$ (see Figure~\ref{fig:PQBridges}). Consider a $(D \cup \{c,d\})$-bridge $L'$ of $J \cup C(J)$. Then $L'$ has at least one attachment in $D$, as otherwise $L'$ itself would be a $\{c,d\}$-bridge of $G$, which contradicts that $L'$ is contained in $J \cup C(J)$. Moreover, $L'$ has exactly one attachment in $D$, as a second attachment would contradict the maximality of $D$. By planarity, there is at most one $(D \cup \{c,d\})$-bridge $L$ that has three attachments $c$, $d$ and, say, $v_L \in D$. 

We distinguish two cases. If $L$ exists, set $v_D:=v_L$. If $L$ does not exist, let $R$ be the minimal path in $C_D - inner(C(J))$ that contains the attachments of all $(D \cup \{c,d\})$-bridges of $J$ that are in $D$. Then $R$ contains a vertex $v_D$ that splits $R$ into two paths $R_c$ and $R_d$ such that $R_c \cap R_d = \{v_D\}$. Moreover, any $(D \cup \{c,d\})$-bridge of $J$ having $c$ as one of its two attachments has its other attachment in $R_c$, and any $(D \cup \{c,d\})$-bridge of $J$ having $d$ as one of its two attachments has its other attachment in $R_d$. In either case for the vertex $v_D$, we define $\beta$ as an edge of $C_D$ that is incident to $v_D$.

As $D$ is 2-connected and contains less vertices than $G$, there is an $l_J$-$\beta$-$r_J$-path $P_D$ of $D$ by induction. Any outer $P_D$-bridge of $D$ may therefore gain either $c$ or $d$ as third attachment when considering this bridge in $G$, but not both; if $L$ exists, $L$ has still only the three attachments $\{c,d,v_L\}$ in $G$. Thus, $P_D$ is the desired path $Q_J$.
\end{proof}

By Lemma~\ref{lemma:number of attachments on P}, any $(P \cup \{\alpha\} \cup Q)$-bridge $J$ of $G$ intersects $K$ in at most one non-trivial $P$-bridge of $K$ having attachments $c$ and $d$. By Lemma~\ref{lemma:construct_P}, this non-trivial $P$-bridge was never part of an inductive call of Theorem~\ref{thm:Thomassen} before (in fact, at most its edge $cd$ was). Replacing $C(J)$ with $Q_J$ for every such $J$, as described in Lemma~\ref{lemma:PQ-bridges} and before, therefore concludes the constructive proof of Theorem~\ref{thm:Thomassen}.

\subsection{The Three Edge Lemma}
We use the constructive version of Theorem~\ref{thm:Thomassen} of the last section to deduce a construction for the following Three Edge Lemma (see~\cite{Thomas1994} and~\cite{Sanders1996} for existence proofs).

\begin{lemma}[Three Edge Lemma]\label{lemma:3-edge-lemma}
Let $G$ be a 2-connected plane graph and let $\alpha,\beta,\gamma$ be edges of $C_G$. Then $G$ has a Tutte cycle that contains $\alpha$, $\beta$ and $\gamma$.
\end{lemma}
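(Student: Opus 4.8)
The plan is to reduce the Three Edge Lemma to the constructive version of Theorem~\ref{thm:Thomassen} by a careful subdivision trick, so that a Tutte \emph{path} through two prescribed endvertices and a prescribed edge translates into a Tutte \emph{cycle} through three prescribed edges. First I would handle the trivial and degenerate cases: if two of $\alpha,\beta,\gamma$ coincide, or if $C_G$ is a triangle (so the three edges are all of $C_G$ and $C_G$ itself is the desired Tutte cycle), the claim is immediate. Also if two of the edges, say $\beta$ and $\gamma$, share a vertex $v$, then $\beta\cup\gamma$ is a path of length two on $C_G$; I would contract this path to a single edge, or rather treat $v$ as the target vertex below. So assume the three edges are pairwise disjoint and $|V(C_G)|\ge 4$.

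The main step is the following construction. Write $\gamma = v_1 v_2$. Subdivide $\gamma$ by a new vertex $s$, obtaining a graph $G'$ with the two new edges $sv_1$ and $sv_2$ on its outer face; $G'$ is still $2$-connected and plane, and $|V(G')| = n+1$. Now apply the constructive Theorem~\ref{thm:Thomassen} to $G'$ with start vertex $x := s$, prescribed outer edge $\alpha' := \alpha \in E(C_{G'})$, and target vertex $y := s$ — but that is not allowed since we need $y \ne x$. Instead, the correct move is to take $x := v_1$, $y := v_2$, and $\alpha' := \alpha$: Theorem~\ref{thm:Thomassen} then gives a Tutte path $P$ of $G'$ from $v_1$ to $v_2$ through $\alpha$. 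If $P$ uses the subdivided path $v_1 s v_2$, we would like it \emph{not} to, so that contracting $s$ back turns $P$ into a cycle of $G$ through $\gamma$. To force this, instead apply the theorem in $G'$ with the prescribed edge being $\alpha$ and endpoints $v_1,v_2$ but first delete the vertex... Actually the cleaner route: let $G''$ be $G$ with $\gamma$ subdivided by $s$, and apply Theorem~\ref{thm:Thomassen} to $G''$ with $x := s$, $\alpha' := \alpha$, and $y$ chosen to be an endvertex of $\beta$. Since $\deg_{G''}(s) = 2$, any Tutte path from $s$ must start $s v_1 \dots$ or $s v_2 \dots$ and, being a path, cannot return to $s$; so $P - s$ plus the edge $\gamma$ is a cycle $C$ of $G$ containing $\gamma$ and $\alpha$, and containing the chosen endvertex of $\beta$. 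The $P$-bridges of $G''$ correspond exactly to the $C$-bridges of $G$ (the bridge structure is unaffected by contracting the degree-two vertex $s$), and outer bridges stay outer, so $C$ is a Tutte cycle of $G$ through $\alpha$ and $\gamma$.

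It remains to also capture $\beta$. The issue is that Theorem~\ref{thm:Thomassen} only lets us prescribe \emph{one} edge and one endvertex, so I would instead prescribe $\beta$ as the edge and arrange $\gamma$ via the subdivision, i.e. apply the theorem to $G''$ with $x := s$, $\alpha' := \beta$, and $y := l_\alpha$ an endvertex of $\alpha$; this yields a Tutte cycle $C$ of $G$ through $\beta$ and $\gamma$ that passes through an endvertex of $\alpha$ but perhaps not through the edge $\alpha$ itself. To fix this, note that $\alpha = l_\alpha r_\alpha \in E(C_G)$ and $C$ is a cycle containing $l_\alpha$; if $\alpha \notin C$, then $r_\alpha$ lies in some outer $C$-bridge, which has at most two attachments, one of them $l_\alpha$ — and a short local rerouting of $C$ along $\alpha$ (replacing the subpath of $C$ between the two attachments by a path through $r_\alpha$ that uses $\alpha$, obtained by one more inductive application of Theorem~\ref{thm:Thomassen} inside that outer bridge augmented by $\alpha$) produces a Tutte cycle through all three edges. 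The hard part, and where I expect to spend the real effort, is exactly this last rerouting: verifying that the local modification does not create a bridge with too many attachments, and that it meshes with the other two prescribed edges $\beta$ and $\gamma$ — i.e. that the subpath of $C$ we excise does not already contain $\beta$ or $\gamma$. Planarity of $C_G$ and the fact that $\alpha,\beta,\gamma$ are disjoint edges of the outer cycle should guarantee that the bridge containing $r_\alpha$ meets $C_G$ only in the arc $l_\alpha C_G r_\alpha$, which avoids $\beta$ and $\gamma$; making this precise, together with a bridge-count check identical in spirit to the arguments in Lemmas~\ref{lemma:2separators} and~\ref{lemma:PQ-bridges}, completes the proof.
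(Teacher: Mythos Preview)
Your central construction does not produce a cycle. When you apply Theorem~\ref{thm:Thomassen} to $G''$ with $x:=s$ and $y$ an endvertex of $\beta$ (or of $\alpha$), the resulting Tutte path $P$ starts at $s$, uses exactly one of $sv_1,sv_2$ (say $sv_1$), and ends at $y$. Then $P-s$ is a path from $v_1$ to $y$, and since you assumed $\beta,\gamma$ disjoint, $y\notin\{v_1,v_2\}$; adjoining $\gamma=v_1v_2$ therefore gives a path (or a path with a pendant edge if $v_2$ happens to lie on $P$), never a cycle. The variant you discarded---taking $x:=v_1$, $y:=v_2$, prescribed edge $\alpha$, in $G$ itself---does work and needs no subdivision: a $v_1$-$\alpha$-$v_2$-path cannot equal $\gamma$ (it must contain $\alpha\neq\gamma$), so it avoids $\gamma$ and closes to a Tutte cycle through $\alpha$ and $\gamma$. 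But this still leaves $\beta$ unprescribed, and your rerouting plan for the third edge has a genuine obstruction: if $\alpha$ lies in an outer $C$-bridge $J$ with attachments $c,d$, replacing one $c$--$d$ arc of $C$ by a $c$-$\alpha$-$d$-path through $J$ turns the \emph{other} arc of $C$---together with every $C$-bridge that was attached to it---into a single new bridge of the modified cycle, and that bridge can have arbitrarily many attachments. So the Tutte property is destroyed, not just in corner cases but generically; there is no ``short local rerouting'' that fixes this.

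The paper's proof sidesteps both problems by not reducing to a single call of Theorem~\ref{thm:Thomassen} at all. Instead it reuses the internal machinery of that theorem's proof: with $\alpha=a'C_Ga$ and $\beta=uC_Gu'$, set $Q:=aC_Gu$, take the minimal chain of blocks $K$ of $G-Q$ containing $u'$ and $a'$, and apply Lemma~\ref{lemma:construct_P} to build a $u'$-$\gamma$-$a'$-path $P$ through $K$; then modify $Q$ via Lemma~\ref{lemma:PQ-bridges}. The desired Tutte cycle is $P\cup\{\alpha,\beta\}\cup Q$, with $\alpha$ and $\beta$ captured automatically as the two edges linking $P$ to $Q$, and $\gamma$ prescribed inside $K$. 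No after-the-fact rerouting is needed.
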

\begin{proof}
Without loss of generality, let $\alpha$, $\beta$ and $\gamma$ appear in clockwise order on $C_G$. Let $\alpha=a'Ca$, $\beta=uCu'$ and $Q := aCu$ (see Figure~\ref{fig:three-edge-lemma}). The proof proceeds by induction on the number of vertices of $G$. In the base-case that $G$ is a triangle, the claim is satisfied, so assume $n \geq 4$.

\begin{figure}[!htb]
\centering
\includegraphics[width=0.65\textwidth]{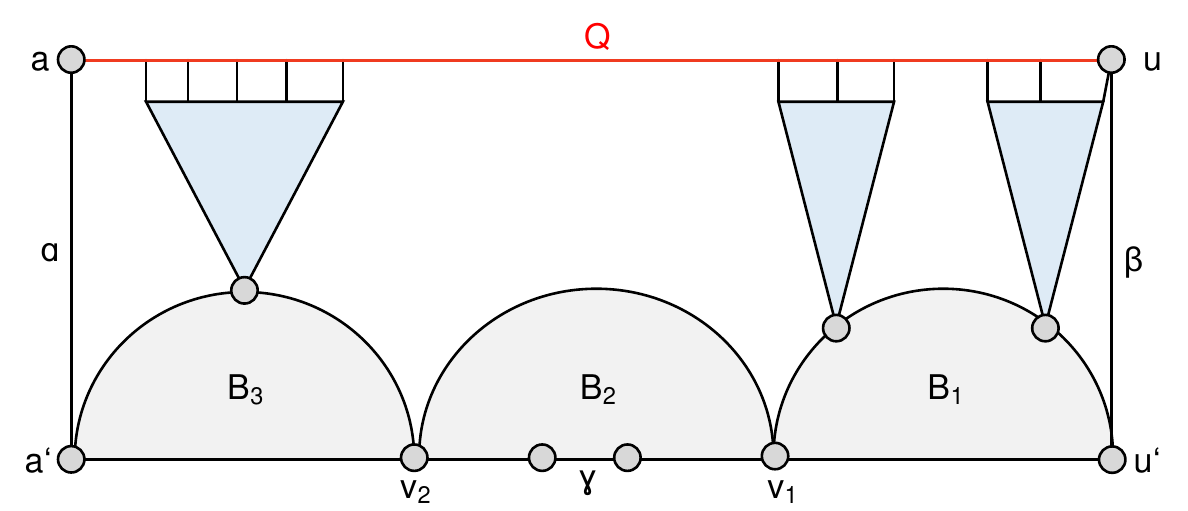}
\caption{A graph $G$ with edges $\alpha$, $\beta$, $\gamma$ that contains a plane chain of blocks $K$, as used in the Three Edge Lemma.}
\label{fig:three-edge-lemma}
\end{figure}

Let $K$ be a minimal plane chain of blocks $B_1,\ldots,B_l$ of $G - Q$ that contains $u'$ and $a'$, and let $k$ be such that $\gamma \in B_k$. Let $v_i := B_{i-1} \cap B_l$ for every $1 \leq i \leq l-1$, $v_0 := u'$ and $v_l := a'$. We define the boundary points and -parts of every $B_i$ exactly as for the blocks $B_i \neq B_l$ in the proof of Theorem~\ref{thm:Thomassen} (we set $\alpha_k = \gamma$ and, for every $i \neq k$, $\alpha_i$ to an arbitrary edge of $C_{B_i}$); note that this defines $\eta(B_i)$ for every $i$.

Now we apply Lemma~\ref{lemma:construct_P} on $G$, which constructs iteratively an $u'$-$\gamma$-$a'$-path $P$ of $K$ such that no non-trivial outer $P$-bridge of $K$ is part of an inductive call of Theorem~\ref{thm:Thomassen}. Then modifying $Q$ as described in Lemma~\ref{lemma:PQ-bridges} constructs the desired Tutte cycle $P \cup \{\alpha,\beta\} \cup Q$ of $G$.
\end{proof}

\subsection{Proof of Theorem~\ref{thm:Sanders}}

Using the Three Edge Lemma~\ref{lemma:3-edge-lemma}, we will prove Theorem~\ref{thm:Sanders} constructively by induction on the number of vertices in $G$ (again, the base case is the triangle-graph, for which the claim is easily seen to be true). For the induction step, if $x$ or $y$ is in $C_G$, the claim follows directly from Theorem~\ref{thm:Thomassen}, so we assume $x,y \notin C_G$. If there is an edge $e \in E(G)$ such that $x$ or $y$ is contained in $C_{G - e}$, we can construct an $x$-$\alpha$-$y$-path of $G-e$ (and thus of $G$) by applying Theorem~\ref{thm:Thomassen}. Thus, assume no such edge $f$ exists.

If $G$ is decomposable into $G_L$ and $G_R$, Theorem~\ref{thm:Sanders} holds by Lemma~\ref{lemma:2separators}; therefore, assume that this is not the case. In particular, there is no 2-separator in $G$ that has both vertices in $C_G$ and separates $x$ and $y$. Hence, $x$ and $y$ are in the same component of $G - C_G$. Let $K$ be the minimal plane chain of blocks $B_1,B_2,\dots,B_l$ in $G - C_G$ such that $x \in B_1$ and $y \in B_l$. Let $v_i := B_i \cap B_{i+1}$ for every $1 \leq i \leq l-1$, $v_0 := x$ and $v_l := y$.

Let $J$ be any $(K \cup C_G)$-bridge. In Theorem~\ref{thm:Thomassen}, we could choose the vertex $r_\alpha \in K \cap C_G$ as \emph{reference vertex} in order to define $C(J)$ in a consistent way. Here, the situation is more complicated, as $K$ and $C_G$ are vertex-disjoint and thus no vertex in $K \cap C_G$ exists. Instead, we take any vertex $s \in C_G$ that is contained in the same face as some vertex of $K$ as \emph{reference vertex} (not every vertex of $C_G$ may thus be $s$). Now let $C(J)$ be the shortest path in $C_G$ that contains all vertices in $J \cap C_G$ and does not contain $s$ as inner vertex. For $i \notin \{1,l\}$, we define the boundary points and -parts of $B_i$ exactly as for the blocks $B_i \neq B_l$ in the proof of Theorem~\ref{thm:Thomassen};
the boundary points $v_{l-1},l_{\alpha_l},r_{\alpha_l},v_l,w_1,\dots,w_p$ of $B_l$ and their boundary parts are defined as for $B_l$ in the proof of Theorem~\ref{thm:Thomassen} and the ones of $B_1$ (namely $v_0,l_{\alpha_1},r_{\alpha_1},v_1,z_1,\dots,z_q$) symmetric to that. 
Once we choose an edge $\alpha_i$ for every $i$, this defines $\eta(B_i)$.

\begin{figure}[!htb]
\centering
\includegraphics[width=0.9\textwidth]{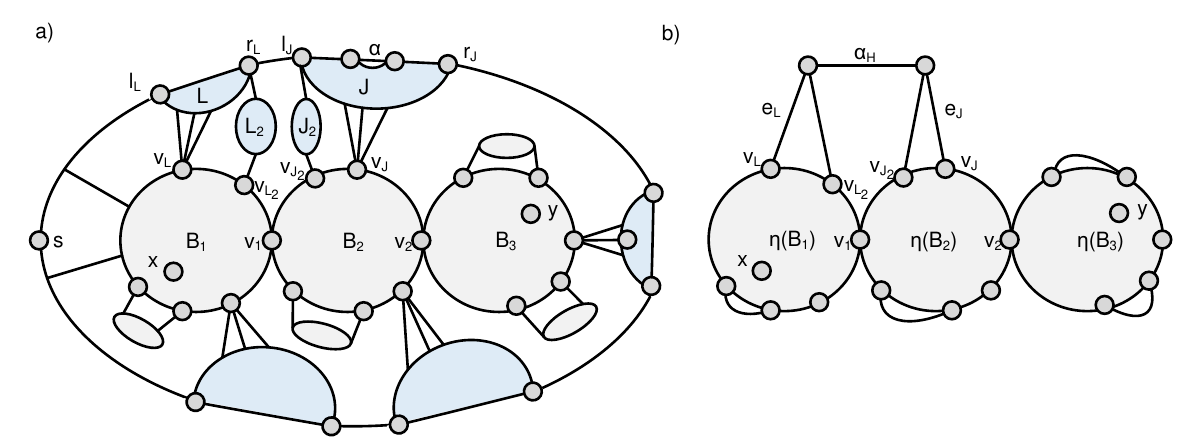}
\caption{a) Decomposing $G$ when both $x$ and $y$ are not in $C_G$. Here $K$ consists of $3$ blocks, $K_I=B_1 \cup B_2$ and $\lset,\jset$ are both of cardinality two. b) Shows the resulting $\eta(H)$ for the example in a).}
\label{fig:theorem2}
\end{figure}

By Lemma~\ref{lemma:isolated bridge}, Theorem~\ref{thm:Sanders} holds if there is a $(K \cup C_G)$-bridge of $G$ all of whose attachments are in $C_G$. Therefore, we assume further that any $(K \cup C_G)$-bridge $J$ of $G$ has exactly one attachment in $K$ and at least one attachment in $C_G$. 
Further, there are at least two $(K \cup C_G)$-bridges of $G$, as $K \cap C_G = \emptyset$ and $G$ is $2$-connected.

\begin{lemma}
No two $(K \cup C_G)$-bridges $J$ and $L$ satisfy $C_J \subseteq C_L$.  
\end{lemma}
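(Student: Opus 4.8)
The statement is that among the $(K\cup C_G)$-bridges of $G$, no one bridge has its ``boundary arc'' $C_J$ strictly contained in another's. The plan is to argue by contradiction: suppose $C_J\subseteq C_L$ for two distinct $(K\cup C_G)$-bridges $J$ and $L$. Recall the standing assumptions at this point of the proof of Theorem~\ref{thm:Sanders}: every $(K\cup C_G)$-bridge has exactly one attachment in $K$ and at least one in $C_G$; Lemma~\ref{lemma:isolated bridge} is not applicable; and $G$ is not decomposable into $G_L$ and $G_R$. I would first pin down what $C_J\subseteq C_L$ means in terms of attachments on $C_G$: writing $C_J=l_JC_Gr_J$ and $C_L=l_LC_Gr_L$, containment forces the vertices of $J\cap C_G$ to lie inside the path $l_LC_Gr_L$ — in particular all attachments of $J$ on $C_G$ are ``enclosed'' by those of $L$.

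The key step is to use Lemma~\ref{lemma:isolated bridge}'s non-applicability together with planarity. Because Lemma~\ref{lemma:isolated bridge} cannot be applied to $G$, there is no $(K\cup C_G)$-bridge other than $L$ itself meeting $(L\cup C(L)) - (K\cup C_G) - \{l_L,r_L\}$; i.e.\ $L$ together with $C(L)$ is exactly everything enclosed by $L$'s attachments. So if $C_J\subseteq C_L$ with $J\ne L$, then $J$ must lie entirely inside the region bounded by $L\cup C(L)$, and in particular $J$'s unique attachment $v_J$ in $K$ must be separated from the rest of $K$ — more precisely, the two attachments of $L$ (its one attachment $v_L\in K$ and, say, one endvertex of $C(L)$, or the two endvertices $l_L,r_L$ of $C(L)$) would form a $2$-separator of $G$ enclosing $J$ and internal vertices of $J$. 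I would then derive a contradiction with one of the standing hypotheses: either this $2$-separator has both vertices on $C_G$ (when $v_J$ still needs to be reached, forcing $v_L=v_J$ and giving a $2$-separator $\subseteq C_G$), contradicting that $G$ is not decomposable and that no $2$-separator in $C_G$ separates anything relevant; or it contradicts the non-applicability of Lemma~\ref{lemma:isolated bridge} (the bridge enclosed, containing internal vertices in $C_G$, is exactly the situation that lemma rules out); or, if $v_J\ne v_L$, it contradicts the minimality of the chain $K$ or the maximality of the blocks $B_i$, since $v_J$ would then be reachable from $C_G$ only through a cut, placing it outside the minimal chain. Carefully casing on whether $C(L)$ (equivalently $l_L=r_L$) is a single vertex, and on the position of $v_J$ relative to $v_L$, should close all cases.

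The main obstacle I anticipate is the bookkeeping around the reference vertex $s\in C_G$ and the precise definition of $C(J)$: since $K$ and $C_G$ are vertex-disjoint here (unlike in Theorem~\ref{thm:Thomassen} where $r_\alpha$ lived on both), one must make sure the ``enclosed region'' argument is set up with respect to the embedding in a way that is consistent with how $C(J)$ and $C(L)$ were chosen relative to $s$ — otherwise $C_J\subseteq C_L$ might be a statement about arcs measured from different reference points. I would handle this by observing that $s\notin \operatorname{inner}(C(L))$ and hence $s\notin\operatorname{inner}(C(J))$ as well, so both arcs are ``on the same side'' of $s$ and the containment is genuinely a nesting of subpaths of $C_G$; then the planar enclosure argument goes through. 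The remaining work is routine verification that each derived $2$-separator contradicts exactly one of the three standing assumptions listed just before the lemma.
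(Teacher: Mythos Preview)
The paper states this lemma without proof, so there is no argument in the paper to compare against. Your overall strategy --- assume $C(J)\subseteq C(L)$, use planarity to trap $J$ inside a region cut off by $L$, and derive a contradiction --- is the natural one. But the plan, as written, misidentifies where the contradiction actually comes from and underuses the reference vertex $s$.

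Here is the issue. Take a path $P_L$ in $L$ from $l_L$ to $r_L$ whose internal vertices are internal to $L$ (this exists whenever $L$ is non-trivial); this $P_L$ separates the open disk bounded by $C_G$ into two pieces, $R_1$ meeting $C_G$ along $\mathrm{inner}(C(L))$ and $R_2$ meeting it along $C_G\setminus C(L)$. Since $K$ is connected and $K\cap P_L=\emptyset$ (the only vertex of $K$ in $L$ is $v_L$, and that is not an internal vertex of $L$), all of $K$ lies in one of $R_1,R_2$. The crucial point --- which you do not use --- is that $K$ lies in $R_2$: the vertex $s$ was chosen to share a face of $G$ with a vertex of $K$, and since $s\notin\mathrm{inner}(C(L))$ that face, hence $K$, is on the $R_2$ side. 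Now $v_J\in K\subseteq R_2$, while any attachment of $J$ in $\mathrm{inner}(C(L))$ forces an edge of $J$ into $R_1$; since $J$ is connected and its interior is disjoint from $P_L\subseteq L$, this is already the contradiction. No $2$-separator case split is needed, and your picture of ``$v_J$ separated from the rest of $K$'' is not what actually happens --- $v_J$ sits with $K$ in $R_2$, and it is $J$ itself that would have to cross $L$. You do discuss $s$, but only to argue that $C(J)$ and $C(L)$ are consistently oriented; the real work $s$ does is to pin $K$ to the correct side of $P_L$.

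Two smaller points. Your appeal to the non-applicability of Lemma~\ref{lemma:isolated bridge} for ``no other bridge meets $(L\cup C(L))-(K\cup C_G)-\{l_L,r_L\}$'' is misplaced: that statement is just the fact that distinct bridges share no internal vertices and needs no hypothesis. And the argument above only covers the case where $J$ has an attachment in $\mathrm{inner}(C(L))$; the degenerate situations where $J\cap C_G\subseteq\{l_L,r_L\}$ (so $C(J)$ is a single endpoint, or $C(J)=C(L)$ with $J$ touching $C_G$ only at $l_L,r_L$) must be dispatched separately, which your plan does not flag.
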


Let $J$ be either the $(K \cup C_G)$-bridge for which $C(J)$ contains $\alpha$, or, if such a bridge does not exist, the $(K \cup C_G)$-bridge for which $l_J$ lies closest counterclockwise to $\alpha$ on $C_G$ (see Figure~\ref{fig:theorem2}). Let $L$ be the $(K \cup C_G)$-bridge for which $r_L$ lies closest counterclockwise to $l_J$ on $C_G$ (possibly $r_L = l_J$) such that $l_L \neq l_J$. Let $\jset := \{J_1,J_2,\dots,J_m\}$ be the set of all $(K \cup C_G)$-bridges $J_i$ for which $l_{J_i}=l_J$. Let $\lset:=\{L_1,L_2,\dots,L_n\}$ be the set of all $(K \cup C_G)$-bridges $L_j$ for which $r_{L_j} = r_L$ and $l_{L_j} \neq l_J$.
Then $J=J_i$ for some $i$ and, since $l_L \neq l_J$, $L \in \lset$; hence, both $\lset$ and $\jset$ are non-empty.
Let $I$ be the minimal set of consecutive indices in $\{1,\ldots,l\}$ such that $K_{I} := \bigcup_{i \in I} B_i$ contains all attachments in $K$ of the $(K \cup C_G)$-bridges in $\lset \cup \jset$. Let $f$ and $g$ denote the minimal and maximal indices of $I$.

To construct the desired $x$-$\alpha$-$y$-path of $G$, we will merge two different Tutte paths $P$ and $Q$ in edge-disjoint subgraphs of $G$. In more detail, $P$ is between the vertices $x$ and $y$ and is contained in $K \cup \bigcup_i \lset_i \cup \bigcup_j \jset_j$, while $Q$ is between $l_L$ and $r_J$ and follows $l_LC_Gr_J$, while detouring into $(K \cup C_G)$-bridges of $G$ if necessary. 

We construct $P$ by using induction on a plane change of blocks $H$ that has one block representing $K_I$ and all the $(K \cup C_G)$-bridges in $\lset \cup \jset$.
Initially, let $H$ consist of $K$ and two new artificial adjacent vertices $a$ and $b$ of degree one each. For every $L_j \in \lset$, we add an edge $e_{L_j}:=v_{L_j}a$ to $H$ (recall that $v_{L_j}$ is the unique vertex $L_j \cap K$) and for every $J_i \in \jset$, we add an edge $e_{J_i}:=v_{J_i}b$ to $H$. 
We embed $H$ into the plane by taking the embedding of $K$ from $G$ and placing $a$ and $b$ into the outer face. If $r_L \neq l_J$, we are done with the construction of $H$ and set $\alpha_H:=ab$.
Otherwise, we contract the edge $ab$ of $H$ and set $\alpha_H:=v_{J_1}b$ (note that in this case $n=1$).
In both cases, $H$ is a plane chain of blocks such that $V(B_f \cup \dots \cup B_g) \cup \{a,b\}$ is the vertex set of one of these blocks. Since $C_G$ contains at least three vertices, we have $|V(H)|<|V(G)|$. For every $i \in \{1,\dots,f-1,g+1,\dots,l\}$, let $\alpha_i$ be an arbitrary edge of $C_{B_i}$.

The crucial step in constructing a $x-\alpha_H-y$ path $P_H$ is to ensure no outer bridge of $P$ is part of any induction call. For this, we use Lemma~\ref{lemma:construct_P} on the blocks of $\eta(H)$ under the induction hypothesis of Theorem~\ref{thm:Sanders} (instead under the one of Theorem~\ref{thm:Thomassen}) 
This constructs an $x$-$\alpha_H$-$y$-path $P_H$ of $H$.

\begin{figure}[!htb]
\centering
\includegraphics[width=0.9\textwidth]{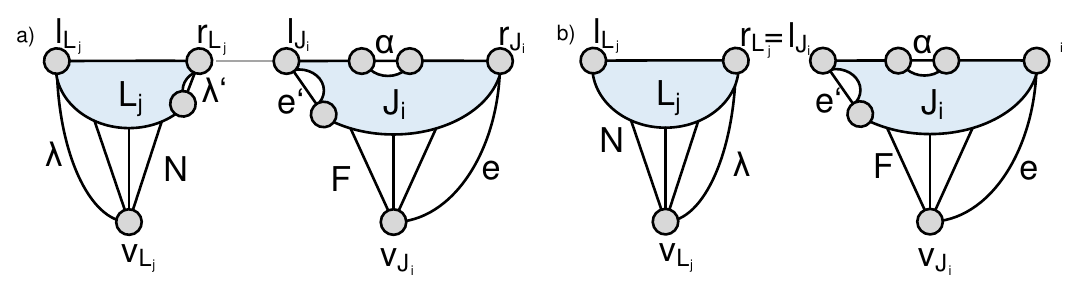}
\caption{Two examples for the subgraphs $N$ and $F$. In a) $r_{L_j}\neq l_{J_i}$, while in b) $r_{L_j} = l_{J_i}$.}
\label{fig:N_and_F}
\end{figure}

So far $P_H$ is not a subgraph of $G$, as it contains edges $v_{L_j}a$ and $v_{J_i}b$. Each of these edges represent a $(K \cup C_G)$-bridge of $G$. In the following, we show how to find Tutte paths $P_{J_i}$ and $P_{L_j}$ in the $(K \cup C_G)$-bridges $L_j$ and $J_i$. 
Note that by forcing $P_H$ through $\alpha_H$ we ensured that $P_H$ contains exactly two of these artificial edges. If $L_j$ or $J_i$ are just single edges, let $P_{J_i}:=J_i$ and $P_{L_j}:=L_j$, respectively.
If $J_i$ is not just a single edge, let $e:=v_{J_i}r_{J_i}$ and $F:=J_i \cup C(J_i) \cup \{e\}$, where $e$ is embedded such that $C(J_i)$ is part of the outer face of $F$.
Let $e'\neq e$ be an edge in $C_F$ incident to $l_J$. See Figure~\ref{fig:N_and_F} for an example.
Clearly, $F$ is 2-connected and $|V(F)| < |V(G)|$. 
If $\alpha \in E(F)$ (i.e.\ $J_i=J$), then by Lemma~\ref{lemma:3-edge-lemma} there is an Tutte cycle $P'$ that contains $e,e'$ and $\alpha$.
If $\alpha \notin J_i$, then by Theorem~\ref{thm:Thomassen} there is a $v_{J_i}$-$r_{J_i}$-path $P'$ in $F$ through $e'$.
In either case, let $P_{J_i}:=P'-e$.

It remains to show what to do if $L_j$ is not just an edge. If $r_{L_j} \neq l_{J}$, let $\lambda:=v_{L_j}l_{L_j}$ and $N:=L_j \cup C(L_j) \cup \{\lambda\}$, where $\lambda$ is embedded such that $C(L_j)$ is part of the outer face of $N$. 
Let $\lambda'$ be an incident edge to $r_{L_j}$ that is different from $\lambda$, of the outer face of $N$. Figure~\ref{fig:N_and_F} shows an example for the construction of $N$.
By Theorem~\ref{thm:Thomassen} there is a $v_{L_j}$-$\lambda'$-$l_{L_j}$-path $P_N$ of $N$. 
If otherwise $r_{L_j} = l_{J}$, then $r_{L_j}$ is already part of $P_{J_i}$ in $J_i$ and we have to ensure that we do not include it as an internal vertex of $P_N$ as well.
Let $\lambda:=v_{L_j}r_{L_j}$ and $N:=L_j \cup C(L_j) \cup \{\lambda\}$, where $\lambda$ is embedded such that $C_{L_j}$ is part of the outer face of $N$. By Theorem~\ref{thm:Thomassen}, there is a $l_{L_j}$-$\lambda$-$r_{L_j}$-path $P_N$ of $N$ and we set $P_{L_j}:=P_N-\lambda$. 
Note that if we consider the union of $P_{L_j}$ and $P_{J_i}$, then any $P_{L_j}$-bridge in $L_j$ that has $r_{L_j}$ as an attachment will also have it as an attachment in $L_j \cup J_i$.

At this point we can remove $a$ and $b$ from $P_H$, note that this disconnects $P_H$. By adding $P_{J_i}$ and $P_{L_j}$ we end up with a path $P_x$ from $x$ to $l_{L_j}$ and $P_y$ from $r_{J_i}$ to $y$. Let $Q:= r_{J_i}C_Gl_{L_j}$, to complete the proof of Theorem~\ref{thm:Sanders}, we need to modify $Q$ such that any $(P_x \cup P_y \cup Q)$-bridge of $G$ has at most three attachments and exactly two if it contains an edge of $C_G$. For this we use Lemma~\ref{lemma:PQ-bridges}, as in the proof of Theorem~\ref{thm:Thomassen} before. Note that if either $J \neq J_i$ or $L \neq L_j$, then they will become $P_x \cup P_y \cup Q$-bridges and taken care of by applying~\ref{lemma:PQ-bridges}.

\section{A Quadratic Time Algorithm}\label{sec:computing}
In this section, we give an algorithm based on the decomposition shown in Section~\ref{sec:decomposition} (see Algorithm~\ref{alg:algorithm}). It is well known that there are algorithms that compute the 2-connected components of a graph and the block-cut tree of $G$ in linear time, see~\cite{Schmidt2013a} for a very simple one. Using this on $G-Q$, we can compute the blocks $B_1,\dots,B_l$ of $K$ in time $O(n)$.

We now check if Lemma~\ref{lemma:2separators} or~\ref{lemma:isolated bridge} is applicable at least once to $G$; if so, we stop and apply the construction of either Lemma~\ref{lemma:2separators} or~\ref{lemma:isolated bridge}. Checking applicability involves the computation of special 2-separators $\{c,d\}$ of $G$ that are in $C_G$ (e.g., we did assume minimality of $|V(G_R)|$ in Lemma~\ref{lemma:2separators}). In order to find such a $\{c,d\}$ in time $O(n)$, we first compute the \emph{weak dual} $G^*$ of $G$, which is obtained from the dual of $G$ by deleting its outer face vertex, and note that such pairs $\{c,d\}$ are exactly contained in the faces that correspond to 1-separators of $G^*$. Once more, these faces can be found by the block-cut tree of $G^*$ in time $O(n)$ using the above algorithm. Since the block-cut tree is a tree, we can perform dynamic programming on all these 1-separators bottom-up the tree in linear total time, in order to find one desired $\{c,d\}$ that satisfies the respective constraints (e.g.\ minimizing $|V(G_R)|$, or separating $x$ and $\alpha$).

Now we compute $\eta(K)$. Since the boundary points of every $B_i$ are known from $K$, all \emph{maximal} 2-separators can be computed in time $O(n)$ by dynamic programming as described above. We compute in fact the nested tree structure of all 2-separators on boundary parts due to Lemma~\ref{lemma:2separators_vs_2bridges}, on which we then apply the induction described in Lemma~\ref{lemma:construct_P}. Hence, no non-trivial outer $P$-bridge of $K$ is touched in the induction, which allows to modify $Q$ along the induction of Lemma~\ref{lemma:PQ-bridges}.

\begin{algorithm}
\caption{TPATH($G,x,\alpha,y$)\Comment{method, running time without induction}}\label{alg:algorithm}
\begin{algorithmic}[1]
	\If {$G$ is a triangle or $\alpha = xy$}
		\textbf{return} the trivial $x$-$\alpha$-$y$ path of $G$\Comment{$O(1)$}
	\EndIf
	\If {Lemma~\ref{lemma:2separators} or~\ref{lemma:isolated bridge} is applicable at least once to $G$}\Comment{weak dual block-cut tree, $O(n)$}
		\State apply TPATH on $G_L$ and $G_R$ as described and \textbf{return} the resulting path\Comment{$O(1)$}
	\EndIf
	\If {there is a 2-separator $\{c,d\} \in C_G$ of $G$}
		\State do simple case 2
	\EndIf
	\State Compute the minimal plane chain $K$ of blocks of $G$ \Comment{block-cut tree of $G - Q$, $O(n)$}
	\State Compute $\eta(K)$ \Comment{dyn.\ progr. on weak dual block-cut tree, $O(n)$}
	\State Compute $P$ by the induction of Lemma~\ref{lemma:construct_P} \Comment{dyn. prog. precomputes all possible $B_{cd}^+$, $O(n)$}
	\State Modify $Q$ by the induction of Lemma~\ref{lemma:PQ-bridges} \Comment{traversing outer faces of bridges, $O(n)$}
	\State \textbf{return} $P \cup \{\alpha\} \cup Q$
\end{algorithmic}
\end{algorithm}

In our decomposition, every inductive call is invoked on a graph having less vertices than the current graph. The key insight is now to show a good bound on the total number of inductive calls to
Theorem~\ref{thm:Sanders}. In order to obtain good upper bounds, we will restrict the choice of $\alpha_i$ for every block $B_i$ of $K$ (which was almost arbitrary in the decomposition) such that $\alpha_i$ is an edge of $C_{B_i}-v_{i-1}v_i$. This prevents several situations in which the recursion stops because of the case $\alpha = xy$, which would unease the following arguments.

The next lemma shows that only $O(n)$ inductive calls are performed. Its argument is, similarly to one in~\cite{Chiba1989}, based on a subtle summation of the Tutte path differences that occur in the recursion tree.

\begin{lemma}\label{alg:analyze}
The number of inductive calls for TPATH($G,x,\alpha,y$) is at most $2n-3$.
\end{lemma}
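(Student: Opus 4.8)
The plan is to bound the total number of inductive calls to Theorem~\ref{thm:Sanders} (equivalently, the size of the recursion tree) by a linear function of $n$, using the classical accounting trick of Chiba and Nishizeki~\cite{Chiba1989}: assign to every recursive call a ``weight'' equal to the number of vertices of the current graph that will end up on the Tutte path but are not yet ``paid for'' by the parent call, and show that the sum of these weights telescopes. Concretely, I would set up a potential for each node of the recursion tree of the form (roughly) ``number of vertices of the current graph $G'$ minus the number of vertices that $G'$ shares with its parent call's output path'', and argue that every unit of recursion budget is charged to a distinct vertex of $G$, so that the number of calls is at most something like $2n-3$.

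First I would make the recursion tree precise: a call TPATH$(G,x,\alpha,y)$ either terminates immediately (triangle or $\alpha=xy$, Lemma~\ref{lemma:2separators}, Lemma~\ref{lemma:isolated bridge}, or the easy 2-separator case), or it spawns children corresponding exactly to the inductive invocations made inside the proofs of Theorems~\ref{thm:Thomassen} and~\ref{thm:Sanders} and Lemma~\ref{lemma:3-edge-lemma}: one call per block $\eta(B_i)$ of $\eta(K)$ (or of $\eta(H)$), one call per restored subgraph $B_{cd}^+\cup\{cd\}$ along a maximal 2-separator, one call per $(P\cup\{\alpha\}\cup Q)$-bridge $J$ with $\ge 2$ attachments in $Q$ (the graphs $J'$, $D$ in Lemma~\ref{lemma:PQ-bridges}), and, in the Sanders case, one call per bridge $J_i\in\jset$ (the graph $F$) and $L_j\in\lset$ (the graph $N$). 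The crucial structural facts I would invoke are: (i) each such child graph has strictly fewer vertices than the parent; (ii) the child graphs at a given node pairwise overlap in at most one edge (the $\{c,d\}$-edges, or the artificial edges $e,e',\lambda,\lambda'$, $ab$), by the non-overlapping decomposition — this is the whole point of Section~\ref{sec:decomposition}; and (iii) every vertex of the parent graph that is not one of the $O(1)$ ``interface'' vertices ($x,y,v_i,l_\alpha,r_\alpha,l_J,r_J$, the $w$'s and $z$'s, $a$, $b$) lies in the interior of at most one child graph. Point (iii) is what lets the vertex count decrease additively across the children, so that $\sum_{\text{children }c}(|V(c)|-1) \le |V(\text{parent})| + O(1)$ or a similar inequality with the $O(1)$ terms absorbed carefully.

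The heart of the argument is then an amortized/telescoping summation. I would define, for a node with graph $G'$, the quantity $\phi(G') := |V(G')|$ (or perhaps $|V(G')| - 2$ to account for the two fixed path-ends), prove a recurrence of the form $\phi(G') \ge 1 + \sum_{c}\phi(c)$ whenever $G'$ has children, where the ``$1$'' counts the parent call itself, and then unfold the recursion from the root to get (number of calls) $\le \phi(G)$ up to the additive constants, i.e.\ $\le 2n-3$. To get the exact constant $2n-3$ rather than merely $O(n)$, I would have to be careful about the leaves: a leaf call contributes $1$ to the count but its graph may have as few as $3$ vertices (a triangle) or even $2$ (the edge $\alpha=xy$ case), so the bookkeeping must charge leaf calls to edges or vertices not already charged; tracking the shared ``interface'' vertices/edges — which are double-counted when a vertex or edge is split across siblings — is exactly where the factor $2$ in $2n-3$ comes from, mirroring the analysis in~\cite{Chiba1989}.

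The main obstacle I anticipate is precisely controlling the $O(1)$ slack per node: a naive inequality $\sum_c |V(c)| \le |V(G')| + (\text{constant})$ accumulates a constant at every node and so only gives $O(n^2)$, not $O(n)$. To get linearity one must show that the ``extra'' vertices introduced at a node (the artificial vertices $a,b,z$, the duplicated separator endpoints) are in bijection with, or can be charged against, \emph{new} structural features that do not recur — e.g.\ that each internal vertex of $G$ is the ``split vertex'' $v_i$ of at most a bounded number of nodes, or that the restriction ``$\alpha_i \in C_{B_i} - v_{i-1}v_i$'' imposed just before the lemma kills the only situation where the slack could blow up. So the real work is a careful case analysis of the five kinds of children listed above, verifying for each that it either shares all but one vertex/edge with siblings or is charged to a private vertex of $G$, and then summing. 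Once that invariant is in place, the telescoping is routine and yields the claimed bound of $2n-3$.
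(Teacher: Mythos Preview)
Your high-level plan---analyze the recursion tree and telescope a potential over it, in the style of Chiba--Nishizeki---is exactly what the paper does. The gap is in the choice of potential. You propose $\phi(G')=|V(G')|$ (or $|V(G')|-2$), and you correctly spot the obstacle: children share two separator vertices and also receive artificial vertices ($a,b,z$), so $\sum_c |V(c)|$ overshoots $|V(G')|$ by $\Theta(d(i))$, and that slack sums to $\Theta(r)$, killing the argument. Your hope of charging the slack to ``private'' vertices is not substantiated and I do not see how to make it work with a vertex-count potential.

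The paper's fix is to telescope the \emph{length of the returned Tutte path} rather than the vertex count of the input. The point is that the non-overlapping decomposition guarantees children overlap in at most one \emph{edge} (the separator edge $cd$, or $r_Jv$, etc.), so the sum of the children's path lengths exceeds the parent's path length by at most $d(i)-1$ in the split/bridge cases, and by at most $d(i)$ in the $G_L/G_R$ case of Lemma~\ref{lemma:2separators}. Summing $\sum_i(d(i)-1)=t-1$ over internal nodes and using that every non-root leaf returns a path of length at least~$2$ (this is where the restriction $\alpha_i\in C_{B_i}-v_{i-1}v_i$ is used, to forbid $\alpha=xy$ leaves below the root) gives $2t\le (n-1)+(t-1)+I-r'$, where $I$ counts the $G_L/G_R$ calls and $r'$ the out-degree-one nodes. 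Combined with the tree inequality $r\le t-1+r'$ and the bound $I\le n$ (each 2-separator of $C_G$ is used at most once), this yields $r\le 2n-3$. So the missing idea in your proposal is precisely to switch the potential from input size to output-path length; once you do that, the case analysis you outline becomes a one-line edge-overlap count and the $O(1)$-slack problem disappears.
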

\begin{proof}
Let $r$ be the number of inductive calls for TPATH($G,x,\alpha,y$). Let $d(i)$, $1\leq i\leq r$, be the number of smaller graphs into which we decompose the simple 2-connected plane graph of the $i$th inductive call. Let $r'$ be the number of inductive calls that satisfy $d(i)=1$. Let $t$ be the number of graphs in which we can find the desired Tutte paths trivially without having to apply induction again (i.e., triangles or graphs in which $\alpha=xy$).  

Thus, in the directed recursion tree, $t$ is the number of leaves and $r$ is the number of internal nodes, $r'$ out of which have out-degree one.
Since in a binary tree the number of internal nodes is one less than the number of leaves, the tree has at most $t-1$ internal nodes of out-degree two or more. Thus we have

\[ r\leq t-1+r'. \]

To complete the proof, we will give an upper bound for $t$ that depends on $n$. 
The $t$ instances in the leafs come in three different shapes: a triangle, a graph in which $K$ consists of only one trivial block and $Q$ can be found without applying induction (i.e., a cycle of length four) or a graph in which $\alpha=xy$. Any other instance is either decomposable into $G_L$ and $G_R$ or $K$ contains at least one non trivial block on which we have to apply induction.
If the graph in a leaf instance is just a triangle the trivially found Tutte path will be of length two and we denote the number of such leafs by $t_1$.
If a leaf represents a cycle of length four, then the trivially found Tutte path will be of length three. Let $t_2$ denote the number of such leafs.
If the graph in the leaf instance is such that $\alpha=xy$, then the Tutte path returned for this instance will be of length one. Note that this case can only appear in the root instance. This follows from the fact that we always choose $alpha$ such that $alpha \neq xy$ before we apply induction on a graph constructed in our decomposition. Thus if there is a leaf in which $alpha=xy$ then the tree consists of exactly one node and the claim is trivially true. We therefore assume that there is no such leaf from hereon.
Then there are $t=t_1+t_2$ leafs and the sum over all paths lengths in the leaves is exactly $2t_1+3t_2$. In addition a Tutte path in $G$ has length at most $n-1$.
Combining these two facts, an upper bound on $2t_1+3t_2$ can be derived by going through every internal node of the recursion tree and adding the differences between the length of the Tutte path in the current node and the sum of lengths of the Tutte paths in its children nodes to $n-1$.

If $G$ is decomposable into $G_L$ and $G_R$, then $d(i)=2$ and the Tutte path $P$ of $G$ is either $(P_L \cup P_R)-cd$ or $(P_L-d)\cup (P_R-z)$. In the first case, $P_L$ and $P_R$ intersect in $cd$ and therefore $|E(P_L)+|E(P_R)|-|E(P)|=1=d(i)-1$. In the latter case, $P_L$ contains $cd$ and $P_R$ contains one edge incident to $z$, which both will not be part of $P$; therefore, $|E(P_L)+|E(P_R)|-|E(P)|=2=(d(i)-1)+1$.

Otherwise, the graph $G$ of inductive call $i$ is decomposed along certain 2-separators and $d(i)$ depends on the number of blocks in $K$, the number of such 2-separators and the resulting $(P \cup Q)$-bridges in $G$. The following argument will also hold for inductive calls, when we apply Lemma~\ref{lemma:isolated bridge}, as the construction is similar to the case when $K$ consists of only one block and there is exactly one 2-separator in $K$.
Note that only the inductive calls on the graphs split off from $K$ increase the difference between the length of the Tutte path of $G$ and the sum off Tutte path lengths found in the children of $i$, as only in this case the graphs in the parent node and its child overlap by one edge (the decomposition shows that this is the only possible overlap).

When constructing $P$ using the induction of Lemma~\ref{lemma:construct_P}, we start with one inductive call for every block of $\eta(K)$, and every such block and every graph split off from $K$ that needs an inductive call represents another child in the recursion tree. 
Initially, $P$ is a Tutte path in $\eta(K)$ formed by the union of the Tutte paths $P_1^{\eta},\dots,P_l^{\eta},$ found in $\eta(B_1),\dots,\eta((B_l)$, where $l$ is the number of blocks in $K$. As $P_{j}$ and $P_{j+1}$, $1 \leq j \leq l-1$, do only intersect in one of their endpoints, the difference in $\sum_{j=1}^l|E(P_j)|$ and $|E(P=P_1 \cup \dots \cup P_l)|$ is zero.
For every graph that creates a child $j$ that is split off from $K$, we remove one edge from $P$ and replace it with a Tutte path $P_j$ of $j$. As $P$ and $P_j$ do not intersect in any edge, $|E(P)|+|E(P_j)|-|E(P\cup P_j)|=1$.
Thus, the difference between the length of the Tutte path computed in $i$ and the sum of lengths of Tutte paths computed in its children nodes is equal to the number $k$ of graphs we split of from $K$ and apply induction on. As $k \leq d(i)-1$ the difference therefore is at most $d(i)-1$ in this case.

If $d(i)=1$, then the Tutte path found in the child note must be at least one edge shorter than the Tutte path in the parent node. Combining all of these differences shows that the total length of paths found in the $t$ leaves is at most

\begin{align*}
2t_1+3t_2 & \leq n-1 +\sum_{1\leq i \leq r} (d(i)-1) + I -r'=n-1 + r+t-1-r+I -r' \\
2t + t_2 & \leq n+t+I-r'-2,
\end{align*}

where $I$ is the number of inductive calls on graphs that are decomposable into $G_L$ and $G_R$. This implies that

\begin{align*}
t+t_2 & \leq n+I-r'-2 \\
t & \leq n+I-r'-t_2-2 \leq n-r'+I-2
\end{align*}

Plugging this into the previous upper bound for $r$, we get $r\leq n+I-3$. Note that no 2-separator can be used in more than one inductive call that decomposes the graph into $G_L$ and $G_R$. Therefore, we obtain $I\leq n$ which concludes $r \leq 2n-3$.
\end{proof}

Hence, Algorithm~\ref{alg:algorithm} has overall running time $O(n^2)$, which proves our main Theorem~\ref{thm:main}.
We obtain as well the following direct corollary of the Three Edge Lemma~\ref{lemma:3-edge-lemma}.

\begin{corollary}\label{lemma:alg3-edge-lemma}
Let $G$ be a 2-connected plane graph and let $\alpha,\beta,\gamma$ be edges of $C_G$. Then a Tutte cycle of $G$ that contains $\alpha$, $\beta$ and $\gamma$ can be computed in time $O(n^2)$.
\end{corollary}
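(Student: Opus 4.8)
The plan is to read an algorithm off the constructive proof of the Three Edge Lemma~\ref{lemma:3-edge-lemma} and then bound its running time exactly as was done for Theorem~\ref{thm:main}, with Tutte cycles playing the rôle of Tutte paths.

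First I would make the proof of Lemma~\ref{lemma:3-edge-lemma} explicit as a recursive procedure \textsc{TCycle}$(G,\alpha,\beta,\gamma)$. Given $G$ we: compute the cyclic order of $\alpha,\beta,\gamma$ on $C_G$ and the path $Q := aCu$ in $O(n)$; compute the minimal plane chain of blocks $K = B_1,\dots,B_l$ of $G-Q$ together with its block-cut tree in $O(n)$; set $\alpha_k := \gamma$, choose $\alpha_i \in E(C_{B_i})-v_{i-1}v_i$ arbitrarily for $i\neq k$, and determine the boundary points and parts of every $B_i$ in $O(n)$; compute $\eta(K)$ by the dynamic programming on the weak-dual block-cut tree described in Section~\ref{sec:computing} in $O(n)$; construct the $u'$-$\gamma$-$a'$-path $P$ of $K$ by the iteration of Lemma~\ref{lemma:construct_P}, whose non-recursive work (after precomputing all subgraphs $B_{cd}^+$) is $O(n)$; and finally modify $Q$ by the iteration of Lemma~\ref{lemma:PQ-bridges}, which amounts to traversing the outer faces of the relevant $(P \cup \{\alpha,\beta\} \cup Q)$-bridges and is again $O(n)$. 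By Lemma~\ref{lemma:3-edge-lemma} the output $P \cup \{\alpha,\beta\} \cup Q$ is the desired Tutte cycle, and each invocation of \textsc{TCycle} performs $O(n)$ work outside its recursive calls, exactly as \textsc{TPATH} in Algorithm~\ref{alg:algorithm}.

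Next I would bound the total number of recursive calls. All recursion started by \textsc{TCycle} goes, through Lemmas~\ref{lemma:construct_P} and~\ref{lemma:PQ-bridges}, to the constructive version of Theorem~\ref{thm:Thomassen} (and possibly to \textsc{TCycle} itself), always on a graph with strictly fewer vertices. Hence the recursion tree has the same shape as the one analysed in Lemma~\ref{alg:analyze}: consecutive blocks of $K$ and the subgraphs split off along maximal 2-separators pairwise overlap in at most one edge, every internal node has at most $d(i)-1$ of these overlapping children, the object computed at a leaf is a trivial short path or a triangle/$4$-cycle, and no 2-separator is reused in two distinct decomposing calls. Running the same summation of length differences down the recursion tree, now with the bound ``a Tutte cycle of $G$ has length at most $n$'' in place of ``a Tutte path of $G$ has length at most $n-1$'', yields an $O(n)$ bound on the number of calls. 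Multiplying by the $O(n)$ per-call cost gives the claimed $O(n^2)$ running time.

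The step I expect to be the main obstacle is precisely this last counting argument: verifying that the accounting of Lemma~\ref{alg:analyze} — the terms $d(i)-1$, $r'$ and $I$, and the fact that the only edge-overlap between a parent graph and a child graph is the single edge $cd$ created when splitting off $B_{cd}^+$ — transfers verbatim when the root object is a cycle rather than a path, and that the leaf cases of the cycle recursion are tallied with the correct constants. Everything else is a routine re-use of the implementation already developed for Theorem~\ref{thm:main}.
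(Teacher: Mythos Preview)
Your proposal is correct and follows the same approach the paper has in mind. In fact the paper gives no separate proof of this corollary at all: it simply states it as a ``direct corollary of the Three Edge Lemma'' immediately after establishing Theorem~\ref{thm:main}, relying on the observation that the constructive proof of Lemma~\ref{lemma:3-edge-lemma} invokes exactly the same subroutines (Lemma~\ref{lemma:construct_P} and Lemma~\ref{lemma:PQ-bridges}) as \textsc{TPATH}, so the $O(n)$ per-call cost and the $O(n)$ bound on the recursion depth from Lemma~\ref{alg:analyze} carry over verbatim. One small simplification you can make: the recursion spawned by \textsc{TCycle} consists entirely of calls to Theorem~\ref{thm:Thomassen} (hence to \textsc{TPATH}), never to \textsc{TCycle} itself, so after the single root call the entire recursion tree is already covered by Lemma~\ref{alg:analyze} without any need to redo the length-difference accounting for cycles.
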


\bibliographystyle{abbrv}
\bibliography{Tuttepaths}

\begin{thebibliography}{10}

\bibitem{Asano1984}
T.~Asano, S.~Kikuchi, and N.~Saito.
\newblock A linear algorithm for finding {H}amiltonian cycles in 4-connected
  maximal planar graphs.
\newblock {\em Discrete Applied Mathematics}, 7(1):1--15, 1984.

\bibitem{Brinkmann2016}
G.~{Brinkmann} and C.~T. {Zamfirescu}.
\newblock {A Strengthening of a Theorem of Tutte on Hamiltonicity of
  Polyhedra}.
\newblock {\em ArXiv e-prints}, 2016.

\bibitem{Brunet1995}
R.~Brunet, M.~N. Ellingham, Z.~C. Gao, A.~Metzlar, and R.~B. Richter.
\newblock Spanning planar subgraphs of graphs in the torus and {K}lein bottle.
\newblock {\em Journal of Combinatorial Theory, Series B}, 65(1):7--22, 1995.

\bibitem{Chiba1986}
N.~Chiba and T.~Nishizeki.
\newblock A theorem on paths in planar graphs.
\newblock {\em Journal of Graph Theory}, 10(4):449--450, 1986.

\bibitem{Chiba1989}
N.~Chiba and T.~Nishizeki.
\newblock The {H}amiltonian cycle problem is linear-time solvable for
  4-connected planar graphs.
\newblock {\em Journal of Algorithms}, 10(2):187--211, 1989.

\bibitem{Diestel2010}
R.~Diestel.
\newblock {\em Graph Theory}.
\newblock Springer, fourth edition, 2010.

\bibitem{Gao1994}
Z.~Gao and R.~B. Richter.
\newblock 2-{W}alks in circuit graphs.
\newblock {\em Journal of Combinatorial Theory, Series B}, 62(2):259--267,
  1994.

\bibitem{Gao1995}
Z.~Gao, R.~B. Richter, and X.~Yu.
\newblock 2-{W}alks in 3-connected planar graphs.
\newblock {\em Australasian Journal of Combinatorics}, 11:117--122, 1995.

\bibitem{Gao2006}
Z.~Gao, R.~B. Richter, and X.~Yu.
\newblock Erratum to: 2-{W}alks in 3-connected planar graphs.
\newblock {\em Australasian Journal of Combinatorics}, 36:315--316, 2006.

\bibitem{Gouyou-Beauchamps1982}
D.~Gouyou-Beauchamps.
\newblock The {H}amiltonian circuit problem is polynomial for 4-connected
  planar graphs.
\newblock {\em SIAM Journal on Computing}, 11(3):529--539, 1982.

\bibitem{Harant2016}
J.~Harant, I.~Fabrici, and S.~Jendrol.
\newblock On longest cycles in essentially 4-connected planar graphs.
\newblock {\em Electronic Notes in Discrete Mathematics}, 55:143 -- 146, 2016.

\bibitem{Harant2009}
J.~Harant and S.~Senitsch.
\newblock A generalization of {T}utte's theorem on {H}amiltonian cycles in
  planar graphs.
\newblock {\em Discrete Mathematics}, 309(15):4949--4951, 2009.

\bibitem{Harary1966}
F.~Harary and G.~Prins.
\newblock The block-cutpoint-tree of a graph.
\newblock {\em Publ. Math. Debrecen}, 13:103--107, 1966.

\bibitem{Jackson1990}
B.~Jackson and N.~C. Wormald.
\newblock k-{W}alks of graphs.
\newblock {\em Australasian Journal of Combinatorics}, 2:135--146, 1990.

\bibitem{Jackson2002}
B.~Jackson and X.~Yu.
\newblock Hamilton cycles in plane triangulations.
\newblock {\em Journal of Graph Theory}, 41(2):138--150, 2002.

\bibitem{Kawarabayashi2015}
K.~Kawarabayashi and K.~Ozeki.
\newblock 4-connected projective-planar graphs are {H}amiltonian-connected.
\newblock {\em Journal of Combinatorial Theory, Series B}, 112:36--69, 2015.

\bibitem{Leighton2010}
T.~Leighton and A.~Moitra.
\newblock Some results on greedy embeddings in metric spaces.
\newblock {\em Discrete {\&} Computational Geometry}, 44(3):686--705, Oct 2010.

\bibitem{Nakamoto2009}
A.~Nakamoto, Y.~Oda, and K.~Ota.
\newblock 3-trees with few vertices of degree 3 in circuit graphs.
\newblock {\em Discrete Mathematics}, 309(4):666 -- 672, 2009.

\bibitem{Ozeki2014a}
K.~Ozeki.
\newblock A shorter proof of {T}homassen's theorem on {T}utte paths in plane
  graphs.
\newblock {\em SUT Journal of Mathematics}, 50:417--425, 2014.

\bibitem{Ozeki2014}
K.~Ozeki and P.~Vr\'ana.
\newblock 2-edge-{H}amiltonian-connectedness of 4-connected plane graphs.
\newblock {\em European Journal of Combinatorics}, 35:432--448, 2014.

\bibitem{Sanders1996}
D.~P. Sanders.
\newblock On {H}amilton cycles in certain planar graphs.
\newblock {\em Journal of Graph Theory}, 21(1):43--50, 1996.

\bibitem{Sanders1997}
D.~P. Sanders.
\newblock On paths in planar graphs.
\newblock {\em Journal of Graph Theory}, 24(4):341--345, 1997.

\bibitem{Schmid2015}
A.~Schmid and J.~M. Schmidt.
\newblock Computing 2-walks in polynomial time.
\newblock In {\em Proceedings of the 32nd Symposium on Theoretical Aspects of
  Computer Science (STACS'15)}, pages 676--688, 2015.

\bibitem{Schmidt2013a}
J.~M. Schmidt.
\newblock A simple test on 2-vertex- and 2-edge-connectivity.
\newblock {\em Information Processing Letters}, 113(7):241--244, 2013.

\bibitem{Thomas1994}
R.~Thomas and X.~Yu.
\newblock 4-connected projective-planar graphs are {H}amiltonian.
\newblock {\em Journal of Combinatorial Theory, Series B}, 62(1):114--132,
  1994.

\bibitem{Thomas1997}
R.~Thomas and X.~Yu.
\newblock Five-connected toroidal graphs are {H}amiltonian.
\newblock {\em Journal of Combinatorial Theory, Series B}, 69(1):79--96, 1997.

\bibitem{Thomas2005}
R.~Thomas, X.~Yu, and W.~Zang.
\newblock {H}amilton paths in toroidal graphs.
\newblock {\em Journal of Combinatorial Theory, Series B}, 94(2):214--236,
  2005.

\bibitem{Thomassen1983}
C.~Thomassen.
\newblock A theorem on paths in planar graphs.
\newblock {\em Journal of Graph Theory}, 7(2):169--176, 1983.

\bibitem{Tutte1956}
W.~T. Tutte.
\newblock A theorem on planar graphs.
\newblock {\em Transactions of the American Mathematical Society}, 82:99--116,
  1956.

\bibitem{Tutte1977}
W.~T. Tutte.
\newblock Bridges and {H}amiltonian circuits in planar graphs.
\newblock {\em Aequationes Mathematicae}, 15(1):1--33, 1977.

\bibitem{Whitney1931}
H.~Whitney.
\newblock A theorem on graphs.
\newblock {\em Annals of Mathematics}, 32(2):378--390, 1931.

\bibitem{Yu1997}
X.~Yu.
\newblock Disjoint paths, planarizing cycles, and spanning walks.
\newblock {\em Transactions of the American Mathematical Society},
  349(4):1333--1358, 1997.

\end{thebibliography}

\end{document}